\documentclass[11pt]{article}

\usepackage[a4paper, margin=1in]{geometry}

\usepackage{setspace}

\usepackage{authblk}

\usepackage{tcolorbox}
\usepackage{amsmath, amssymb}

\newtcolorbox{problemBox}{
  colback=white,
  colframe=black,
  boxrule=0.5pt,
  arc=2pt,
  left=4pt,
  right=4pt,
  top=4pt,
  bottom=4pt
}

\usepackage{amsthm}
\usepackage{amsmath}
\usepackage{thmtools,thm-restate}
\usepackage{amssymb}
\usepackage{comment}
\usepackage{tikz}
\usepackage{algorithm}
\usepackage{algpseudocode}
\usepackage{hyperref}
\usepackage{outlines}
\usepackage{url}

\usepackage[numbers]{natbib}

\usepackage{xcolor, soul}

\usepackage{caption}

\newtheorem{definition}{Definition}[section]
\newtheorem{theorem}[definition]{Theorem}
\newtheorem{proposition}[definition]{Proposition}
\newtheorem{corollary}[definition]{Corollary}
\newtheorem{lemma}[definition]{Lemma}

\newtheorem{observation}[definition]{Observation}

\usepackage{orcidlink}

\usepackage{enumitem}
\setlist[itemize]{noitemsep}
\setlist[enumerate]{noitemsep}

\title{Designing Pairwise-Stable Agent Seating Arrangements}
\author[ ]{Frederik Glitzner \orcidlink{0009-0002-2815-6368}}
\affil[ ]{School of Computing Science, University of Glasgow, Glasgow G12 8QQ, UK}
\affil[ ]{\normalfont \texttt{f.glitzner.1@research.gla.ac.uk}}
\date{}

\begin{document}
\thispagestyle{empty}

\maketitle

\begin{abstract}
    Many fundamental problems in multi-agent systems involve the arrangement of agents, who have preferences over each other, on a target graph. These problems include, for example, {\sc Stable Matching}, {\sc Seat Arrangement} and {\sc Coalition Formation}. However, guaranteeing game-theoretically desirable properties such as exchange-stability or envy-freeness is difficult, as such solutions may not exist, and even if they do, they are often intractable to find, even in highly constrained settings such as path or cycle target graphs.

    In this paper, we challenge the classical setup and investigate what can be achieved when the structure of the target graph is a designable object for the central planner, rather than a fixed part of the input. We study this in the context of a natural pairwise stability criterion, which is similar to having spare seats. In particular, we introduce a highly flexible framework to efficiently design approximately optimal target graphs and associated pairwise-stable agent arrangements. Our model assumes that agents have (weak or strict) ordinal preferences over other agents. We show that classical results from stable matching theory can be extended and adapted to this much more general setting and can serve as a useful tool for navigating the trade-off between stability and computational efficiency.
    
    Our results highlight strict boundaries between tractability and intractability, and between local and global optimality. We also uncover intriguing connections to classical computational problems such as subgraph isomorphism, disjoint path partitioning, and bin-packing.
\end{abstract}

\paragraph{Keywords} Stable graph arrangements; seat arrangement; coalition formation; \\stable matching; efficient algorithm.


\section{Introduction}

\subsection{Motivation}

A wide range of problems at the intersection of discrete optimisation, economics and computer science involve arranging (self-interested) agents -- who have preferences over one another -- on a target graph structure. These agent arrangement problems include settings such as coalition formation, seat assignments, and roommate allocations, and can model various domains such as school choice \cite{schoolchoice}, ride-sharing \cite{YAO2023102775}, or collaborative environments \cite{cfg}. In these settings, agents care strongly about who they are grouped with or adjacent to. A recurring theme in this line of work is the tension between desirable game-theoretic and social choice properties -- such as stability and fairness -- and computational efficiency. Even in highly restricted settings, solutions satisfying desirable criteria may not exist, or may be computationally hard to find. For instance, in the {\sc Seat Arrangement} problem, deciding the existence of a mapping from the set of agents to the seats in a given seating graph that ensures envy-freeness (the absence of agents that would prefer the seat of a different agent over their own) or exchange stability (the absence of a pair of agents that would like to swap seats) can be NP-hard even on simple target graph topologies such as paths or cycles \cite{pepe}.

Our work is motivated by practical situations where a solution must be found despite these obstacles, and where, crucially, the structure of the target graph is not fixed, but can be designed (e.g., by a central planner). We study a general setting in which agents express weak ordinal preferences over one another, and the planner must assign them to positions in a graph in a way that avoids instability. Specifically, our main objective, which differs from other forms of stability often sought in other works on coalition formation and seat arrangement, is to find an arrangement where no pair of agents would prefer to break away from their assigned positions and form a pairing with each other outside of the target graph. We call such solutions \emph{stable} in this paper, but it is worth noting that many different notions of stability are studied in this general line of work. We introduce a new notion that could also be referred to as  \emph{pairwise-stable} or \emph{runaway-stable} to capture the fundamental idea that no two agents are incentivised to ``run away'' and self-organise outside of the given target structure instead.\footnote{The much more common objective in coalition formation problems is \emph{single-player deviation stability}, where no agent prefers to move to a different, already-existing, coalition to improve their own welfare \cite{Aziz_Savani_2016}. In seat arrangement problems, the usual stability objective is pairwise \emph{exchange-stability}, where no two agents want to swap seats \cite{pepe}.}

This new stability definition, which only requires agents to be content with at least one of their neighbours (i.e., one local connection in the target graph), turns out to offer a tractable middle ground between stronger but intractable or infeasible notions of stability and undesirable trivial solutions. Furthermore, it can be useful in settings where agents have limited visibility or communication (e.g., in the case where agents only know or interact with local neighbourhoods in the target graph), or where coordination costs prevent groups of agents from identifying mutually beneficial deviations that require more than two agents. Importantly, it also enables efficient computation in cases where stronger notions do not (under standard computational complexity assumptions).

We illustrate the relevance and applicability of this model using the following three examples:
\begin{enumerate}
    \item At a wedding or conference dinner, guests might want to be seated such that everyone has at least one person nearby that they enjoy talking to. If two guests at different tables feel ill-positioned and prefer each other to their respective best neighbours, they may leave their seats to converse elsewhere.
    \item In team formation at a hackathon, participants may accept a team as long as they have at least one strong collaborator. But if two participants prefer each other over even their best current teammates, they are incentivised to leave and form a new, smaller team.
    \item In university housing, students may accept their assigned roommates as long as there is someone in their flat they strongly enjoy living with. However, if two students who like each other feel isolated in their current flats, they might abandon their flats and move off-campus together.
\end{enumerate}

These examples are real-world scenarios where there might be several potential notions of an ``optimal'' solution, and where our stability constraint leads to more desirable outcomes than simple naive or random solutions. Our work builds on and utilises classical combinatorial results from stable matching theory -- specifically well-established theory borrowed from the well-known {\sc Stable Roommates} problem -- and shows how relevant known results can be generalised to and used in the strictly more general above-mentioned agent arrangement problems with a graph design aspect. Along the way, we discover interesting connections to classical algorithmic problems such as subgraph isomorphism, disjoint path partitioning, and bin packing.

\subsection{Our Contributions}
\label{sec:contributions}

We offer an alternative perspective to the vast majority of works in the literature that establish intractability results for various objectives even in very restricted settings. Here, we take the perspective of a ``designer'' who has some degree of freedom to design seating graphs and team sizes. For example, a wedding planner may have some freedom to choose seating layouts. We also show how our framework can be applied to other kinds of problems, such as $b$-matching with preferences.

In general, we consider a set of agents $A$ who have an ordinal preference ranking (possibly involving ties) $\succsim_i$ over all other agents (where the tuple of preference rankings is collectively represented by $\succsim$). The computational problems we consider usually involve a given set of agents and their corresponding preference system, as well as some structural objective on the solution, and the aim to compute a \emph{stable arrangement}, that is, a solution consisting of a target graph $G$ (e.g., a seating graph) and a mapping from $A$ to the vertices of $G$. We formally define the new notion of stability (in which no two agents prefer each other to their best neighbour in $G$) and our main tool, namely the concept of an $(r_1,r_2,r_3)$-bundle. Intuitively, a bundle is a compression of the preference system and contains all the necessary information that allows us to build more complex graph structures. A bundle always consists of a collection of small path components (more specifically, $r_1$ many paths of length 1, $r_2$ paths of length 2, and $r_3$ paths of length 3) and a mapping from $A$ to the vertices of the graph, with the property that the bundle itself is a stable arrangement. We show how to compute a $(r_1,r_2,r_3)$-bundle in $O(n^2)$ time (for $n$ agents) using tools from classical stable matching theory, and we establish some critical structural and computational complexity properties with regard to these bundles and their computation. This allows us to use bundles as a powerful tool: we show how to transform bundles into more desirable stable arrangements by adding edges to connect the path components in certain ways, while also keeping all neighbours in the bundle as neighbours in the final solution, and we highlight the flexibility and effectiveness of bundles as a preference system compression by applying them to a variety of stable arrangement-style problems. For example, we provide an efficient algorithm that transforms a bundle into a stable seating arrangement with a minimal number of tables. We also encounter intractability for very general problems, such as the following, which aims to decide whether the components of a bundle can be connected in such a way that neighbours are preserved and the final structure forms a (not necessarily induced) subgraph of a given graph $G$, which we refer to as the problem of arranging a bundle on a graph.

\begin{problemBox}
\textsc{Arranging Bundles} \\[4pt]
\textbf{Input:} A non-empty $(r_1,r_2,r_3)$-bundle $B$ and a graph $G=(V,E)$. \\[2pt]
\textbf{Question:} Does there exist a complete stable arrangement of $B$ on $G$?
\end{problemBox}

\begin{theorem}
    {\sc Arranging Bundles} is NP-complete even if $G$ is bipartite and has maximum degree 3.
\label{thm:arrbundlesNPcompleteintro}
\end{theorem}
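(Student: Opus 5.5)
The plan is to reduce from the partition of a graph into vertex-disjoint paths on three vertices ($P_3$-partition). This problem is known to remain NP-complete on bipartite graphs of maximum degree 3; Monnot and Toulouse show hardness even for planar bipartite graphs of maximum degree 3. The key observation is how a complete stable arrangement of a bundle $B$ on $G$ looks. It places every agent of $B$ injectively on the vertices of $G$, and any two agents adjacent in $B$ must be adjacent in $G$. Keeping neighbours only adds neighbours, so no agent's best neighbour gets worse. Hence stability of $B$ carries over to the arrangement on $G$ (this monotonicity should follow directly from the definition of stability and of arranging a bundle). So the question is purely combinatorial: can the path components of $B$ be embedded as vertex-disjoint, not necessarily induced, subgraphs of $G$? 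If ``complete'' also requires every vertex of $G$ to be used, the embedded paths must cover $V$.

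\emph{Membership in NP.} A certificate is the injective map from agents to $V$. In polynomial time we check that bundle-adjacent agents are $G$-adjacent, check coverage of $V$ if required, and check stability. Stability is a check over all $O(n^2)$ pairs of agents against their best neighbours in $G$.

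\emph{Hardness.} Let $H$ be a bipartite instance of $P_3$-partition with maximum degree 3 and $|V(H)|=3k$ (otherwise output a trivial no-instance). Set $G=H$ and build a $(0,k,0)$-bundle. Create $k$ triples of agents $a_j,b_j,c_j$ with these preferences:
\begin{itemize}
\item $a_j$ and $c_j$ rank $b_j$ first;
\item $b_j$ ranks $a_j$ first;
\item all remaining ranks are arbitrary strict orders.
\end{itemize}
Take the arrangement with components the paths $a_j - b_j - c_j$. Every agent except $c_j$'s partner $b_j$ already has its top choice as a neighbour. Therefore any blocking pair would need two agents who both prefer each other to their best neighbours. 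Every agent here is neighbouring its first choice, so no blocking pair exists and $B$ is a valid non-empty bundle, constructible in polynomial time. Since $3k=|V(G)|$, an embedding of the $k$ disjoint paths of length 2 into $G$ is exactly a $P_3$-partition of $H$. Coverage is then automatic whether or not ``complete'' demands it, which removes the ambiguity in that definition. Conversely, any $P_3$-partition of $H$ yields such an embedding, mapping $b_j$ to the middle vertex of each path. Stability is preserved by the monotonicity observation above.

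\emph{Main obstacle.} The step I expect to need the most care is matching the paper's precise definitions of ``bundle'' and ``complete stable arrangement''. A bundle may have to arise from the paper's construction, for instance a specific stable-roommates-based procedure, rather than being any stable arrangement of the right shape. In that case I would tune the triple preferences so that this procedure provably outputs exactly the paths $a_j - b_j - c_j$, for example by making $a_j,b_j$ mutual first choices so that they are matched, and making $c_j$'s top choice $b_j$ so that $c_j$ is attached to $b_j$. If the definition instead fixes $r_1,r_2,r_3$ differently, the fallback is to reduce from partition into paths of length 1 and 3 in the same way.
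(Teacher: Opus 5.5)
Your proposal is correct and follows the paper's proof. Both reduce from \textsc{$P_3$-Partition} on bipartite graphs of maximum degree 3 (Monnot and Toulouse) using a bundle of $|V|/3$ disjoint $P_3$ components, so that a neighbour-preserving complete arrangement on $G$ is exactly a $P_3$-partition, with NP membership by verifying the arrangement.

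Your explicit triples, in which every agent neighbours its first choice, are a simpler, self-contained substitute for the paper's appeal to an external preference system whose stable partition consists of 3-cycles. Since the paper's bundle definition accepts any bijective stable arrangement of the right shape, your fallback via Algorithm~\ref{alg:klmbundle} is unnecessary; it would also fail as stated, because mutual first choices $a_j,b_j$ always form a transposition and hence a $P_2$. Finally, in the paper's notation, where $P_k$ has $k$ vertices, your bundle is a $(0,0,k)$-bundle rather than a $(0,k,0)$-bundle.
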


However, taking the perspective of a solution designer who has the flexibility to design the target graph, rather than being given a target graph as part of the input, we can establish positive tractability results. We start with seat arrangement problems such as the following, where the goal is to design a seating graph consisting of tables (i.e., connected components) and seats (i.e., vertices) such that every table has the same number of seats, and a stable mapping from the set of agents to the vertices of the seating graph.

\begin{problemBox}
\textsc{MinTablesMinSeats} \\[4pt]
\textbf{Input:} A non-empty $(r_1,r_2,r_3)$-bundle $B$. \\[2pt]
\textbf{Output:} A seating graph $G=(V,E)$ and a complete stable arrangement $M$ such that $G$ contains a minimal number of tables, and subject to this, each table contains a minimal number of seats.
\end{problemBox}

\begin{problemBox}
\textsc{MinSeatsMinTables} \\[4pt]
\textbf{Input:} A non-empty $(r_1,r_2,r_3)$-bundle $B$. \\[2pt]
\textbf{Output:} A seating graph $G=(V,E)$ consisting of connected components and a complete stable arrangement $M$ such that $G$ contains a minimal number of seats at each table, and subject to this, $G$ contains a minimal number of tables.
\end{problemBox}

Using our framework, we will establish the following tractability results.

\begin{theorem}
    {\sc MinTablesMinSeats} and {\sc MinSeatsMinTables} can be solved in $O((n^2+s_*^3)n^2)$ and $O(s_*^3n^2)$ time, respectively, where $n$ is the number of agents and $s_*$ is the minimal number of seats per table. Furthermore, in the solutions, all tables are paths, and any set of edges can be added to transform the tables into other graph structures (e.g., cycles, grids, or cliques). Finally, for any table size of at least 3, there always exists a stable seating arrangement.
\label{thm:mintablesseatsintro}
\end{theorem}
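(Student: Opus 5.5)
The plan is to reduce both problems to a purely combinatorial packing problem on the path components of the bundle $B$. The key structural fact I will establish first is a \emph{neighbour-preservation lemma}. Suppose $G$ is built from the $r_1+r_2+r_3$ paths of $B$ by adding edges, so every bundle edge is kept, and we also allow isolated padding seats. Then the arrangement induced by $B$ stays stable. The reason is that each agent's best neighbour in $G$ is weakly better than its best neighbour in $B$. So if some pair blocked in $G$, the same pair would already block in $B$, contradicting the stability of $B$. This immediately gives the clause that arbitrary edges may be added to turn tables into cycles, grids or cliques.

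Given this lemma, a table is just a concatenation of bundle paths into a single path. The paths have $2$, $3$ or $4$ agents, and the table may be padded with empty seats. Solving both problems then amounts to partitioning the multiset consisting of $r_1$ copies of $2$, $r_2$ copies of $3$ and $r_3$ copies of $4$ into bins of common capacity $s$, for which I need two facts.

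\emph{Step 1 (computing $B$).} A bundle is computed in $O(n^2)$ time, by the earlier bundle algorithm. I am taking as given that the objectives are optimised over arrangements compatible with the fixed input bundle $B$.

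\emph{Step 2 (feasibility for fixed $s$ and $k$).} For fixed $s$ and number of tables $k$, decide whether the items fit into $k$ bins of size $s$. Since there are only three item sizes, a bin's content is described by a triple $(a,b,c)$ with $2a+3b+4c\le s$. There are $O(s^3)$ such triples, and only maximal configurations matter. I then run a dynamic programme or greedy over bins: for three sizes, a greedy that fills each bin by trying $O(s^3)$ configurations, combined with counting, should suffice. I charge $O(s^3 n^2)$ per candidate $s$, i.e.\ $O(s^3)$ configurations times $O(n^2)$ work over the $O(n)$ tables and residual counts.

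\emph{Step 3 (MinSeatsMinTables).} The minimal seats per table is forced by the largest component present: $s_*\in\{2,3,4\}$ whenever $n$ is large enough to need several tables. Some care is needed, since $s$ could be larger if fewer tables are wanted, but the primary objective here is seats. With $s=s_*$ fixed, I minimise $k$ by the packing routine, giving $O(s_*^3 n^2)$ after the $O(n^2)$ bundle computation.

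\emph{Step 4 (MinTablesMinSeats).} A single table holding everyone as one path always works. Concatenating all paths gives a stable arrangement with $s=n$, so one table suffices. For this problem I therefore set $k=1$, and the minimal $s$ is $n$ unless padding issues arise. This is presumably why the bound contains $n^2$: I would scan candidate $s$ from the lower bound $\lceil n/k\rceil$ upward, with $O(n)$ candidates each costing the Step 2 check. That gives $O((n^2+s_*^3)n^2)$.

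\emph{Step 5 (existence for every table size at least $3$).} For any $s\ge 3$, I must show the items can be packed into bins of size $s$. Every item has size at most $4$, so I need a bundle whose components fit. The main obstacle is $s=3$ when $r_3>0$. For this I would invoke the earlier structural result that a bundle can be chosen with $r_3=0$ (splitting a length-3 path into two edges, or choosing a different stable partition). I expect this, together with tightening the run-time accounting in Step 2, to be the hardest part.
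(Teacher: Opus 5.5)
There is a genuine gap, and it comes from misreading the bundle. In this paper $P_k$ is a path on $k$ \emph{vertices}. So a $(r_1,r_2,r_3)$-bundle consists of $r_1$ isolated agents, $r_2$ edges and $r_3$ three-vertex paths. The items therefore have sizes $1,2,3$, not $2,3,4$, and $r_1\le 1$ always, since two isolated agents would block under complete preferences. Because you believe there is a size-$4$ item, your Step 5 relies on an ``earlier structural result'' that a bundle with $r_3=0$ can always be chosen. No such result exists, and it is false. For even $n$, a bundle with $r_3=0$ is exactly a weakly stable matching (the equivalence behind the paper's NP-hardness proof), so any unsolvable {\sc sr} instance with an even number of agents has none. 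Re-choosing the stable partition does not help under strict preferences, because all stable partitions share their odd cycles (Theorem \ref{thm:tan}). Nor can a three-vertex path be split into two disjoint edges. With the correct sizes the final clause is immediate, which is how the paper obtains it via Proposition \ref{prop:p3}: for any $s\ge 3$, put each bundle component on its own table and pad with empty seats; stability is preserved by Theorem \ref{thm:addedges}. The same misreading distorts Step 3. The optimal seat count for {\sc MinSeatsMinTables} is the largest component size $\max\{i:r_i>0\}\in\{1,2,3\}$. This is exactly where Algorithm \ref{alg:stablesminseats} starts its scan, and with $t=n$ allowed it stops there at once.

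Step 2 is also not yet an argument. ``A dynamic programme or greedy \dots\ should suffice'' does not show that the number of tables is minimised, and greedy rules really do fail here: first-fit-decreasing is suboptimal for $(r_1,r_2,r_3)=(0,4,2)$, $s=7$. The paper runs a breadth-first search over residual states $(r_1',r_2',r_3')$, with one transition per bin configuration. Its $O(s^2n^2)$ bound per value of $s$ uses $r_1\le 1$ to get $O(n^2)$ states and $c_1\le 1$ to get $O(s^2)$ configurations. Without $r_1\le 1$ your state space is $O(n^3)$, so your claimed $O(s^3n^2)$ does not follow. For the two min--min problems you could bypass this, since at $s=s_*\le 3$ the optimal table count has a closed form. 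Your Step 4 is correct and simpler than the paper's route. Concatenating all bundle paths into one table is stable, so {\sc MinTablesMinSeats} always has $t_*=1$ and $s_*=n$. The paper reaches the same answer indirectly, by running $s$-SeatsMinTables with $s=n$ and then $t$-TablesMinSeats with $t=t_*$.
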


Next, we consider team (or coalition) formation problems. Notice that in the language of graphs, a team can simply be considered as a clique: a collection of vertices which can be occupied by team members such that every team member is connected to every other team member. This allows us to unify team formation and seat arrangement problems in a common framework. We consider the analogous problems \textsc{MinTeamsMinSize} and \textsc{MinSizeMinTeams} to \textsc{MinTablesMinSeats} and \textsc{MinSeatsMinTables} with the only modification being that a solution must consist only of clique components, each consisting of the same number of vertices. The minimisation objectives now are in terms of team sizes and the number of teams, of course. We will establish the following.

\begin{theorem}
{\sc MinTeamsMinSize} and {\sc MinSizeMinTeams} can be solved in $O((n^2+{s_*}^3)n^2)$ and $O({s_*}^3n^2)$ time, respectively, where $n$ is the number of agents and $s_*$ is the minimal team size. Furthermore, whenever the team size is chosen to be at least 3, there always exists a stable team formation.
\label{thm:minteamssizeintro}    
\end{theorem}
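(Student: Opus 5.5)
The plan is to reduce Theorem~\ref{thm:minteamssizeintro} to Theorem~\ref{thm:mintablesseatsintro}. The reduction uses the closure property stated there: in the seating solutions all tables are paths, and adding any set of edges preserves stability. Stability only requires that no pair $a,b$ prefer each other to their respective best neighbours. Adding edges can only weakly improve each agent's best neighbour, since neighbourhoods grow. Hence a blocking pair in the augmented graph would already have blocked in the original graph.

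\textbf{Step 1 (feasibility transfer).} Take an optimal solution $(G,M)$ of \textsc{MinTablesMinSeats} (resp.\ \textsc{MinSeatsMinTables}) computed from the bundle $B$. Complete every path table to a clique on the same vertex set. The resulting graph consists of equal-size clique components. By the monotonicity observation, $M$ is still stable, so it is a feasible team formation. The number of teams and the team size are unchanged. This also yields the existence claim: for every size at least $3$ a stable seating arrangement exists, and completing it to cliques gives a stable team formation with teams of that size.

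\textbf{Step 2 (optimality transfer), the main point.} Conversely, let $(G',M')$ be a stable team formation with $t$ teams of size $s$. We need a stable seating arrangement with $t$ path tables of $s$ seats each. Deleting edges can destroy stability, so we cannot simply take spanning paths of the cliques. Instead I would argue via the bundle, whose path components have at most $4$ vertices. The key fact, presumably established earlier for seating graphs, is that the feasibility characterisation for a pair (number of tables, seats per table) depends only on $(r_1,r_2,r_3)$. Concretely, one must be able to pack the bundle's path components into the tables and join them inside each table by extra edges, a bin-packing-style condition.

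For cliques I would show that the same characterisation applies. In any stable arrangement with equal-size components, every agent's best neighbour must be at least as good as its bundle partner in the relevant stable-matching sense. This lets one rearrange the components into a form with the same parameters. Alternatively, if the earlier characterisation is stated as ``any target graph with $t$ components of size $s$ for which a packing of $B$ exists'', then cliques admit a packing exactly when paths do. Both conditions reduce to fitting components of sizes $2,3,4$ into $t$ bins of capacity exactly $s$, and any vertex subset of a clique spans a path. Pinning down that optimality is measured against the bundle-derived characterisation, so that clique instances have the same optimal $(t,s)$ as path instances, is the step I expect to need the most care. I would handle it by invoking the paper's lower-bound argument for tables and observing that it uses only the component sizes, not their internal edges.

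\textbf{Step 3 (runtime).} The clique completion adds $O(n s_*)$ edges per solution, which is dominated by the stated bounds. The running times $O((n^2+s_*^3)n^2)$ and $O(s_*^3n^2)$ therefore carry over directly from Theorem~\ref{thm:mintablesseatsintro}.
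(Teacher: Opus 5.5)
Your Steps 1 and 3 are exactly the paper's proof. It runs the seating algorithms (Lemma~\ref{lemma:teams}, Theorem~\ref{thm:seatingalgos}) and turns every path table into a clique with Algorithm~\ref{alg:converttoteam}. Stability comes from Theorem~\ref{thm:addedges} via Proposition~\ref{prop:seattoteams}, the running times are inherited, and existence for sizes at least $3$ comes from Proposition~\ref{prop:p3}.

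The paper merely asserts that optimality carries over, so your Step 2 addresses something it leaves implicit. You should drop the first route you sketch there, because its key claim is false. It is not true that in an arbitrary stable arrangement every agent's best neighbour is at least as good as its bundle partner. Whenever a strict instance has two distinct stable matchings and the bundle is one of them, the other matching, read as teams of size $2$, leaves some agent strictly worse off.

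More fundamentally, no argument against \emph{all} stable team formations can work, because the theorem is false under that reading. Take six agents, each indifferent between all others. The input bundle may consist of two $P_3$s, so for \textsc{MinSizeMinTeams} the algorithm returns teams of size $3$. Yet three pairs already form a stable team formation. Optimality in the paper is relative to the given bundle: every bundle adjacency must survive in the target graph. This is the assumption invoked in Lemma~\ref{lemma:subgraph}, and it is why the conclusion calls the algorithms optimal only with regard to the provided bundle.

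Under that reading your ``alternative'' argument is the right one, and it suffices. Since bundle neighbours stay adjacent, each component of the target graph consists of whole bundle components plus empty vertices. So a pair $(t,s)$ is feasible exactly when the bundle components pack into $t$ bins of capacity $s$, whatever the edges inside the components are. Conversely, any such packing yields both path tables (join the pieces end to end and pad with empty seats) and clique teams.

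Do correct the sizes, though. A $P_k$ has $k$ vertices, so bundle components have $1$, $2$ or $3$ vertices, not $2,3,4$. Also, a bin's load is at most $s$, not exactly $s$. The dynamic program you are inheriting is built for exactly these item sizes.
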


Lastly, we illustrate how our techniques can be applied to other types of problems, taking a variant of stable $b$-matching as an example. Here, we consider stable $b$-matching as a matching problem where every agent $a_i$ has a capacity $b_i$ representing an upper bound on the number of matches that $a_i$ can be part of. We will call the following problem \textsc{Best-Stable $b$-Matching} to emphasize our stability notion where no two agents strictly prefer each other to their best neighbour and to differentiate this problem from more ``classical'' versions of the stable $b$-matching problem (e.g., as studied in \cite{Fleiner08}) in which no two unmatched agents may prefer each other to the \emph{worst} of their respective partners, which also means that a stable matching does not always exist. Note that in our problem definition, we assume that every agent has a capacity of at least 2 in order to ensure that a best-stable $b$-matching always exists (however, this restriction does not ensure the existence of a stable matching in the classical setting, as \citet{glitzner2025unsolvabilitymanytomanynonbipartitestable} showed). A formal problem definition of the problem follows below.

\begin{problemBox}
\textsc{Best-Stable $b$-Matching} \\[4pt]
\textbf{Input:} A non-empty $(r_1,r_2,r_3)$-bundle $B$ and a capacity function $b : A\rightarrow\mathbb{Z}^{\geq 2}$. \\[2pt]
\textbf{Output:} A maximal $b$-matching $M$ of the agents in the bundle such that no two agents prefer each other to their best partner.
\end{problemBox}

Notice the \emph{maximal} condition: this is important to mitigate \emph{wastefulness}, i.e., to avoid that two mutually acceptable agents that still have free capacity are not matched to each other. We will present the following positive tractability result for this problem.

\begin{theorem}
    {\sc Best-Stable $b$-Matching} can be solved in $O(n^2)$ time (where $n$ is the number of agents) using Algorithm \ref{alg:bmatching}.
    \label{thm:bestbmatchingintro}
\end{theorem}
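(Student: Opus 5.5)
The plan is to read Algorithm~\ref{alg:bmatching} as a two-phase procedure: first build a stable ``core'' from the bundle, then greedily add further edges until the matching is maximal. Stability under our notion only depends on each agent's \emph{best} partner, so once the core certifies stability, extra edges can never break it. Throughout, I take as given the earlier result that an $(r_1,r_2,r_3)$-bundle $B$ can be computed in $O(n^2)$ time and is itself a stable arrangement. That is, in $B$ no two agents strictly prefer each other to their respective best neighbours in $B$.

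\textbf{Step 1: the bundle is a feasible $b$-matching.} I would take as the initial matching $M_0$ the edge set of the bundle's path components. Each component is a path with at most 3 edges, so every vertex has degree at most 2 in $M_0$. Since $b(a)\geq 2$ for every agent $a$, $M_0$ respects all capacities. Every agent in a non-empty bundle lies on some path component, so every agent has at least one neighbour in $M_0$. Moreover, $M_0$ inherits pairwise stability directly from the bundle's stability property.

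\textbf{Step 2: monotonicity of best partners.} Next I would prove the key observation. Suppose $M\supseteq M_0$ is any $b$-matching. Each agent's best partner in $M$ is weakly preferred to its best partner in $M_0$. Now take a pair $a,a'$ who strictly prefer each other to their best partners in $M$. By transitivity of $\succsim$, they then strictly prefer each other to their best partners in $M_0$ as well. This contradicts the stability of $M_0$. Hence every superset of $M_0$ that respects capacities is best-stable. One subtlety is that the preferences may contain ties. The argument still works, because best-stability only forbids \emph{strict} preference over the best partner, and weak improvement of the best partner combined with a strict preference yields a strict preference by transitivity.

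\textbf{Step 3: greedy completion and running time.} I would scan all unordered pairs $\{a,a'\}$ not already in $M$. A pair is added whenever the two agents are mutually acceptable and both still have residual capacity. If the input assumes complete preference lists, every pair is acceptable. One pass suffices: residual capacities only decrease, so a pair rejected once can never become addable later. The result is therefore a maximal $b$-matching, and it is stable by Step 2. The pass examines $O(n^2)$ pairs, each with an $O(1)$ capacity check, so the total time is $O(n^2)$ together with the $O(n^2)$ bundle computation.

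The main obstacle I expect is making sure the input convention matches. The problem takes a bundle $B$ as input, so I must confirm that the bundle's path edges are indeed all acceptable pairs and that its stability guarantee is exactly the ``best neighbour'' notion used here. Once that is settled, Steps 2 and 3 are routine.
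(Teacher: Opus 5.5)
Your proposal is correct and follows essentially the same route as the paper's proof: the bundle edges form a capacity-feasible core (every vertex has degree at most $2\le b_i$), every agent keeps its best neighbour from $B$ so any blocking pair in the final matching would already block $B$, and a single greedy pass over all $O(n^2)$ pairs gives maximality because residual capacities only decrease. One harmless slip is that when $r_1=1$ the agent on the $P_1$ has no neighbour in $M_0$, so your claim that every agent has one is false; it is never used, and a null best partner in $M_0$ only makes the monotonicity step of your Step~2 easier.
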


\subsection{Related Work}

This paper relates to three areas: stable matching, seat arrangement, and coalition formation.

\textbf{Stable matching} 
has been studied extensively, starting with the seminal paper by \citet{gale_shapley}, who introduced the {\sc Stable Marriage}, {\sc Hospitals/Residents}, and {\sc Stable Roommates} ({\sc sr}) problems. It is well-known that the first two problems always admit a stable matching, while the third problem may not. \citet{irving_sr_structure, irving_sr, gusfield_sr_structure,gusfield89} published an extensive collection of structural results on stable matching and presented an algorithm to find a stable matching efficiently in the {\sc sr} setting with strict preferences, if one exists. For weak preferences, there are three different stability definitions for {\sc sr} \cite{irving_srt}. Most importantly for this paper, deciding the existence of a \emph{weakly stable matching} (a matching in which no two agents strictly prefer each other to their partners, either of which may be none) is known to be NP-complete in {\sc sr} \cite{ronn_srt}. For other versions of stability, positive tractability results are known \cite{irving_srt, ScottPhD,kunysz_srt_strongstable}. In the presence of capacitated agents (relevant to Section \ref{sec:matching}), so-called \emph{stable b-matching} problems are also known to be solvable in polynomial time for strict preferences \cite{fleiner03,Fleiner08,glitzner2025unsolvabilitymanytomanynonbipartitestable}, but negative results from {\sc sr} carry over by generalisation. We refer to \citet{matchup} for an extensive overview of the algorithmics of stable matching and related problems.

\textbf{Seat arrangement} problems have only been studied fairly recently, starting with a paper by \citet{Bodlaender2020Hedonic}. In their model, they assumed that agents have utilities over other agents, and adopted an additive utility function over neighbours in the seating graph to aggregate agents' utilities. They showed, for example, that deciding the existence of an envy-free allocation of agents on the seating graph is computationally intractable even when preferences are symmetric, and every graph component consists of at most three vertices. Finding a welfare-maximising allocation is also NP-hard, even when preferences are symmetric and binary. Later, \citet{ceylan,Ceylan2023Optimal} considered related problems from a parametrised complexity point of view, still encountering strong intractability barriers. \citet{Massand} studied a more general model where agents also provide utilities over seats and found that, under symmetric preferences, an exchange-stable allocation can be found efficiently. \citet{Berriaud,pepe,Wilczynski2023Ordinal} took a different approach and studied strong restrictions on the seating graphs, among other settings. Still, most problems remain NP-hard, even on very simple graph structures such as collections of paths or cycles. 

\textbf{Coalition formation}, specifically in the context of \emph{hedonic preferences} (where agents only care about their own outcome), is very well-studied \cite{Aziz_Savani_2016}. However, these stability concepts are usually restricted to individual stability, where individual agents do not benefit from moving between partitions. More similar to our stability criterion, where agents only consider the best partner/neighbour, \citet{Cechlarova2001Stability,Cechlarova2003Computational,CECHLAROVA2004333} studied stability concepts that rely solely on the best or worst agent in the partition. 

\citet{roommateroom} recently considered a model that sits between all three of these problems, aiming to assign students with preferences to rooms of fixed capacity. Similarly, \citet{Cseh2019Pareto} studied Pareto-optimality in a related setting of roommate assignments for rooms of different sizes. However, our model differs from theirs in multiple ways; most importantly by allowing a degree of flexibility when designing the arrangement/seating graph rather than starting with a fixed set of rooms, and in the stability condition we consider. 

\subsection{Structure of the Paper}

Section \ref{sec:background} provides formal definitions, a primer on crucial structural tools from the area of non-bipartite stable matching, and a brief comparison of our stability notion to other stability and optimality criteria. Section \ref{sec:construction} develops our main framework, formally defines the notion of a bundle, and shows how to find one efficiently and what properties it exhibits. Section \ref{sec:applications} highlights a number of applications of our framework to seat arrangement, team formation, and $b$-matching problems. Section \ref{sec:conclusion} provides a brief summary of the results and some future research directions.

\section{Formal Definitions and Preliminary Results}
\label{sec:background}

\subsection{Formal Definitions}
\label{sec:formal}

We will consider the following multi-agent model.

\begin{definition}[Preference System]
    Let $A=\{a_1,\dots,a_n\}$ be a set of $n$ \emph{agents} and let each agent $a_i\in A$ have a weak \emph{preference order} $\succsim_i$ over all remaining agents $A\setminus\{a_i\}$ (i.e., we assume \emph{complete} preferences), then $I=(A,\succsim)$ is a \emph{preference system}. If $a_j\succ_i a_k$ then we say that $a_i\in A$ (strictly) \emph{prefers} agents $a_j$ to $a_k$, and if $a_j \sim_i a_k$ then we say that $a_i$ is \emph{indifferent} between them, which is also the case when $a_j=a_k$.
\end{definition}

Notice that the definition above is coherent with instances of the {\sc Stable Roommates with Ties} problem ({\sc srt}), where the goal is to match a non-bipartite set of agents into unordered pairs such that no agent is in more than one pair and the matching is stable (according to one of three different notions of stability \cite{irving_srt}). Specifically, a matching is \emph{weakly stable} if no two agents strictly prefer each other to their partners (or remain unmatched). 

We adopt the following abstract problem setting for agents with weak ordinal preferences.

\begin{definition}[Arrangement Problem]
    Let $G=(V,E)$ be an undirected simple graph and let $I=(A,\succsim)$ be a preference system, then we call an injective mapping $M$ of a subset of the agents $A'\subseteq A$ to the vertices of $G$ (i.e., $M:A'\rightarrow V$) an \emph{arrangement}. We refer to $M$ as \emph{complete} if $A'=A$ (i.e., all agents are in the arrangement) and we refer to it as \emph{bijective} if it is also surjective (i.e., all vertices are covered). Finally, we refer to $N(a_i)\subset A'$ as \emph{neighbours} of $a_i$ in $M$, i.e., $N(a_i)=\{a_j\in A' \;\vert\; \{M(a_i),M(a_j)\}\in E\}$.
\end{definition}

We will focus on two kinds of arrangement problems: either we will be given the graph $G$ and ask whether an arrangement with certain properties exists (and potentially find one), or we want to construct a suitable graph $G$ such that an arrangement with certain properties is guaranteed to exist (and we can find it). Usually, we are interested only in arrangements that are either complete or bijective (leaving out agents and leaving seats free at the same time is rarely good). The stability notion we newly introduce and adopt for this work is the following.

\begin{definition}[Stability]
    Let $I=(A,\succsim)$ be an preference system, let $G=(V,E)$ be a graph and let $M$ be an arrangement of $A'\subseteq A$ to $V$. Then we denote the \emph{best neighbour} of an agent $a_i\in A'$ (i.e., the best agent $a_j$ such that $a_j\in N(a_i)$, assuming an arbitrary tie break) by $b(a_i)$. If $N(a_i)=\varnothing$, then $b(a_i)$ is null. Now we refer to two distinct agents $a_i,a_j\in A'$ as \emph{blocking} (or as a \emph{blocking pair}) if $a_i$ (strictly) prefers $a_j$ to $b(a_i)$ and vice versa. If there do not exist any blocking pairs, we call $M$ \emph{stable}. 
\end{definition}

Notice the subtle difference in notation between the capacity $b_i$ of an agent $a_i$ in the $b$-matching problem and the best neighbour of an agent $b(a_i)$. Also, we want to reiterate that this stability notion is motivated as follows: we want to ensure that no two agents have an incentive to leave the market, or, in a more practical setting of seat arrangements, that no two agents leave the room together in order to be better off.

\subsection{Stable Partition Background}
\label{sec:spbackground}

Most of our main results are based on observations about the stable partition structure, which was introduced by \citet{tan91_1} as a succinct certificate for the non-existence of a stable matching in the {\sc Stable Roommates} problem (later referred to as {\sc sr}, which shares the same properties as {\sc srt} but assumes strict preference lists). \citet{tan91_1,tan91_2} showed that stable partitions always exist, satisfy various desirable properties, and gave an algorithm (referred to as \emph{Tan's algorithm}) that finds one for a problem instance with $n$ agents in $O(n^2)$ time. 

\begin{definition}[Stable Partition]
\label{def:sp}
    Let $I=(A,\succ)$ be an {\sc sr} instance. Then $\Pi$ is a \emph{stable partition} of $I$ if it is a permutation of $A$ and  
    \begin{enumerate}
        \item[(T1)] $\forall a_i \in A$ we have $\Pi(a_i) \succsim_i \Pi^{-1}(a_i)$, and
        \item[(T2)] $\nexists \;a_i, a_j \in A$ with $a_i\neq a_j$ such that $a_j \succ_i \Pi^{-1}(a_i)$ and $a_i \succ_j \Pi^{-1}(a_j)$,
    \end{enumerate}
    where $\Pi(a_i)$ is the successor and $\Pi^{-1}(a_i)$ is the predecessor of $a_i$ in $\Pi$.
\end{definition}

The author also provided the following characterization of stable partitions.

\begin{theorem}[\cite{tan91_1, tan91_2}]
\label{thm:tan}
    The following properties hold for any {\sc sr} instance $I$ with $n$ agents.
    \begin{itemize}
        \item Any two stable partitions $\Pi_a, \Pi_b$ of $I$ contain exactly the same cycles of odd length $\mathcal{O}_I$.
        \item $I$ admits a stable matching if and only if no stable partition of $I$ contains an odd cycle.
        \item Any stable partition can be broken down into a reduced stable partition in $O(n)$ time by replacing all cycles of even length by collections of transpositions.
    \end{itemize}
\end{theorem}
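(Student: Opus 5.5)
The statement bundles three claims, and I would prove them in the order (iii), (i), (ii). Claim (ii) will follow from the other two together with the observation that matchings are exactly the stable partitions built from transpositions.

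\emph{Reduction of even cycles (third item).} Let $C=(c_1\,c_2\,\dots\,c_{2k})$ be an even cycle of a stable partition $\Pi$. Replace it by the transpositions $(c_1\,c_2)(c_3\,c_4)\cdots(c_{2k-1}\,c_{2k})$ to obtain $\Pi'$.

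(T1) holds trivially in $\Pi'$ for the agents of $C$, since for them $\Pi'(a)=\Pi'^{-1}(a)$.

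For (T2), the new predecessor $\Pi'^{-1}(a)$ of each $a\in C$ is either $\Pi(a)$ or $\Pi^{-1}(a)$. By (T1) for $\Pi$, in both cases $\Pi'^{-1}(a)\succsim_a\Pi^{-1}(a)$. Hence any pair blocking $\Pi'$ in the sense of (T2) also blocks $\Pi$, which is impossible.

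Doing this for every even cycle requires one pass over the permutation, so it takes $O(n)$ time.

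\emph{Invariance of odd cycles (first item).} This is the main obstacle. Take two stable partitions $\Pi,\Sigma$; by the reduction above I may assume both are reduced. The key local lemma I would prove first is the following.

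\emph{Lemma.} If $x=\Sigma^{-1}(a)$ satisfies $x\succ_a\Pi^{-1}(a)$, then $\Pi^{-1}(x)\succsim_x a$.

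\emph{Proof sketch.} By (T1) for $\Sigma$ we have $a=\Sigma(x)\succsim_x\Sigma^{-1}(x)$. If moreover $a\succ_x\Pi^{-1}(x)$, then $\{a,x\}$ would violate (T2) for $\Pi$.

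Using the lemma and its symmetric version with $\Pi,\Sigma$ swapped, I would show by a chain argument that each agent is at least as happy with its $\Pi$-predecessor as with its $\Sigma$-predecessor, and vice versa. The chain follows $a\mapsto\Sigma^{-1}(a)\mapsto\Pi^{-1}(\Sigma^{-1}(a))\mapsto\cdots$, and a strict improvement would propagate around a finite cycle and return to $a$ with a contradiction to strictness of $\succ_a$. With strict preferences this forces $\Pi^{-1}(a)=\Sigma^{-1}(a)$ or $\Pi(a)=\Sigma^{-1}(a)$ for every $a$.

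It then follows that the union of the edges of $\Pi$ and $\Sigma$ decomposes into components in which each agent's two cycles share its neighbours. An odd cycle of $\Pi$ cannot be covered by transpositions or by a different cycle structure without breaking the parity count, so it must appear verbatim (up to orientation, and orientation is then fixed by (T1) and strictness) in $\Sigma$.

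I expect the delicate point to be making this propagation and parity argument rigorous. I would first try the cleanest form: compare, for each agent, the rank of its predecessor under the two partitions, sum these ranks, and use the lemma to show the two sums are equal with termwise domination.

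\emph{Stable matchings (second item).} Suppose a stable matching $M$ exists. Its transpositions form a permutation $\Pi_M$. (T1) is trivial for $\Pi_M$, and (T2) for $\Pi_M$ is exactly the stability of $M$, so $\Pi_M$ is a stable partition with no odd cycle. By the first item, no stable partition has an odd cycle.

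Conversely, if some stable partition has no odd cycle, its reduction consists solely of transpositions. It is therefore a perfect matching, and (T2) says precisely that no pair blocks it.

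The degenerate case of a fixed point, which is an odd cycle of length $1$ corresponding to an unmatched agent, should be checked against the convention used for odd $n$. With complete lists, a fixed point is an odd cycle and indeed no stable (perfect) matching exists when $n$ is odd, so the statement is consistent.
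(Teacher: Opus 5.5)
The paper gives no proof of this theorem (it is quoted from Tan), so your argument has to stand on its own. Your third item is correct, and the derivation of the second item from the first is fine apart from one point noted below.

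\textbf{The gap: the first item.} The statement your chain argument aims at is false: that every agent is at least as happy with its $\Pi$-predecessor as with its $\Sigma$-predecessor and vice versa, forcing $\Pi^{-1}(a)=\Sigma^{-1}(a)$ or $\Pi(a)=\Sigma^{-1}(a)$. Stable partitions are far from unique outside their odd cycles.

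Take agents $m_1,m_2,w_1,w_2$ with $m_1: w_1\,w_2\,m_2$, $m_2: w_2\,w_1\,m_1$, $w_1: m_2\,m_1\,w_2$, $w_2: m_1\,m_2\,w_1$. Both $\Pi=(m_1\;w_1)(m_2\;w_2)$ and $\Sigma=(m_1\;w_2)(m_2\;w_1)$ are stable matchings, hence reduced stable partitions. Yet $m_1$ strictly prefers $\Pi^{-1}(m_1)=w_1$ to $\Sigma^{-1}(m_1)=w_2$, and $w_2\notin\{\Pi(m_1),\Pi^{-1}(m_1)\}$. Your own third item gives further counterexamples: an even cycle of length at least $4$ can be split into transpositions in two ways, and some agent's predecessor moves from $\Pi^{-1}(a)$ to $\Pi(a)\succ_a\Pi^{-1}(a)$.

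Your lemma is correct, but the chain it generates yields no contradiction. It alternates between agents strictly preferring $\Sigma$ and agents strictly preferring $\Pi$, and closes up consistently. In the example it is $w_1\mapsto m_2\mapsto w_2\mapsto m_1\mapsto w_1$, exactly like the alternating cycles in the symmetric difference of two stable matchings. For the same reason there is no termwise domination of predecessor ranks. The parity step rests on the false per-agent conclusion, so it has nothing to stand on.

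\textbf{What is missing.} You need an argument that uses the oddness of the cycle essentially and concludes something \emph{only} about agents on odd cycles. For an odd cycle $C$ of $\Pi$ you must show $\Sigma^{-1}(c)=\Pi^{-1}(c)$ for every $c\in C$, while still allowing $\Pi$ and $\Sigma$ to differ arbitrarily on the even part. Any argument whose conclusion applies to all agents alike cannot work, since it would also apply to the example above. This is the real content of Tan's theorem and has to be supplied (or cited).

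\textbf{A second, smaller gap.} The ``if'' direction of the second item needs at least one stable partition to exist. Otherwise ``no stable partition contains an odd cycle'' holds vacuously and gives you nothing. You assume existence silently, but it is itself a substantial part of Tan's work (his algorithm), not a consequence of your three steps.
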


Notice that if an {\sc sr} instance $I$ admits a stable matching $\{\{a_{i_1}, a_{i_2}\}, \dots, \{a_{i_{2k-1}}, a_{i_{2k}}\}\}$ consisting of $k$ pairs, then it can be denoted interchangeably with its induced collection of transpositions $\Pi=(a_{i_1}$ $a_{i_2})\dots (a_{i_{2k-1}}$ $a_{i_{2k}})$, which is a stable partition of $I$. Furthermore, if the stable partition corresponds to a matching (i.e., the collection of odd cycles of odd length at least 3, also denoted by $\mathcal{O}_I^{\geq 3}$, is empty), then it is stable. 

\begin{observation}
    If the ties of an {\sc srt} instance $I$ are broken arbitrarily, we get an {\sc sr} instance $I'$. Any stable partition $\Pi$ of $I'$ is also a stable partition of $I$, i.e., $\Pi$ remains a permutation of the agents and satisfies T1 and T2 with respect to the preference system $I$.
\end{observation}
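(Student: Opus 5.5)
We must show that a stable partition $\Pi$ of the tie-broken instance $I'$ is a permutation of $A$ satisfying (T1) and (T2) with respect to the weak preferences of $I$. Let $\succ'_i$ be the strict order obtained from $\succsim_i$ by breaking ties. The approach rests on one fact about tie-breaking, which we check first: $\succ'_i$ refines $\succsim_i$. That is, $a_j\succ_i a_k$ implies $a_j\succ'_i a_k$, and hence $a_j\succsim'_i a_k$ implies $a_j\succsim_i a_k$. This holds because breaking ties only orders agents inside the same indifference class and never reverses a strict preference.

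\emph{Permutation.} Being a permutation of $A$ is a property of $\Pi$ alone, so nothing needs to be shown here.

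\emph{Condition (T1).} In $I'$ we have $\Pi(a_i)\succsim'_i \Pi^{-1}(a_i)$. Since $\succ'_i$ is strict, this means either $\Pi(a_i)=\Pi^{-1}(a_i)$ or $\Pi(a_i)\succ'_i\Pi^{-1}(a_i)$.
\begin{itemize}
\item If $\Pi(a_i)=\Pi^{-1}(a_i)$, then $\Pi(a_i)\sim_i \Pi^{-1}(a_i)$ by the convention in the definition of a preference system.
\item If $\Pi(a_i)\succ'_i\Pi^{-1}(a_i)$, then $\Pi^{-1}(a_i)\succ_i\Pi(a_i)$ is impossible, since refinement would give $\Pi^{-1}(a_i)\succ'_i\Pi(a_i)$. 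By completeness of $\succsim_i$, it follows that $\Pi(a_i)\succsim_i\Pi^{-1}(a_i)$.
\end{itemize}
One degenerate case needs a remark: for a fixed point of $\Pi$, both $\Pi(a_i)$ and $\Pi^{-1}(a_i)$ equal $a_i$ itself, which is not ranked. This is covered by the same equality convention.

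\emph{Condition (T2).} Argue by contraposition. Suppose $a_i\neq a_j$ form a blocking pair in $I$, i.e. $a_j\succ_i\Pi^{-1}(a_i)$ and $a_i\succ_j\Pi^{-1}(a_j)$. Both are strict preferences in $I$, so by refinement they remain strict in $I'$. Then $a_i,a_j$ also block in $I'$, contradicting (T2) for $\Pi$ in $I'$.

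If $\Pi^{-1}(a_i)=a_i$, we need a convention for how "$a_j\succ_i a_i$" is read. The natural choice is to treat being one's own predecessor as being unmatched, ranked below every other agent. Under this reading the implication is trivially preserved, because the convention is identical in $I$ and $I'$.

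There is no real obstacle in this argument. The only points needing care are the fixed-point and self-comparison conventions, and stating explicitly that the tie-breaking is a refinement of $\succsim_i$.
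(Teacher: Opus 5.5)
Your proof is correct and is essentially the argument the paper relies on. The paper states this as an observation without proof, and its implicit justification is the same fact you use: an arbitrary tie-break yields a strict order refining each $\succsim_i$, so the weak comparison in (T1) and the strict comparisons in (T2) carry over from $I'$ back to $I$. Your convention for fixed points, under which $a_j\succ_i a_i$ holds for every $a_j\neq a_i$, also matches how the paper uses it later, for example when arguing $r_1\in\{0,1\}$.
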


Now, in the case of an {\sc srt} instance $I$, if $I$ admits a weakly stable matching consisting of $k$ pairs, then it can be denoted interchangeably with its induced collection of transpositions (and one agent in a cycle of length 1 if the number of agents is odd), which is a stable partition of $I$. Furthermore, if a stable partition $\Pi$ corresponds to a matching (potentially with one agent in a length 1 cycle), then the set of pairs derived from the transpositions of $\Pi$ is a weakly stable matching of $I$. Although we can compute a stable partition in polynomial time, this does not contradict the NP-hardness of deciding whether $I$ admits a weakly stable matching \cite{ronn_srt}. This is because Theorem \ref{thm:tan} does not carry over to weak preferences and so, depending on the tie break (and there can be exponentially many ways to break the ties), different stable partitions might have different odd cycles, regardless of whether $I$ admits a weakly stable matching.

\subsection{Compatibility with Other Concepts}

Our stability notion is fundamentally different from other optimality criteria that we highlighted in the related work section. For example, an exchange-stable arrangement is not necessarily stable, because there might exist two agents that prefer each other to their best partners, but none of their neighbours is willing to exchange positions. Similarly, stability does not imply exchange-stability either because two agents could both be better off by exchanging their positions, as long as neither of their new neighbours experiences an improvement with regard to their best neighbour. Also, stability does not imply envy-freeness because an agent $a_i$ might envy another agent $a_j$ because they are neighbouring with someone that $a_i$ prefers to their best neighbour in the arrangement. Envy-freeness, however, \emph{does} imply stability because if no agent envies the position of any other agent in the arrangement, then they must already be neighbouring their most preferred agent (or one of their most preferred agents if there are ties in the preference lists) in their preference ranking.

Now we will highlight that our stability notion, which is fundamentally focused on the non-existence of deviation incentives between \emph{pairs} of agents, is a reasonable notion to consider even when \emph{coalitions} (or, more formally, arbitrarily sized subsets of agents) may still have an incentive to exit the market together, i.e., when they do not participate in the computed stable arrangement but collectively leave the scheme they all have someone in the coalition that is strictly better than their best neighbour in the stable arrangement. Notice that if such a coalition $C$ exists, it must be the case that $\vert C\vert>2$ and that no pair of agents in $C$ is better off because of each other (otherwise they would violate stability). In practical terms, this means that unless agents are fully aware of each other's complete preferences, in order for all agents in $C$ to identify the opportunity to exit the market together, every agent in $C$ must necessarily communicate and agree to cooperate with agents that they prefer \emph{less} than their current best neighbour. Especially in settings such as our motivating examples of large team projects or conference dinners, this means that agents must communicate, coordinate, and cooperate across teams or tables with agents that they have no pairwise incentive to exit the market with to identify these larger coalitions. As these coalitions may be of linear length in the number of agents even when they exist, this is generally infeasible in practice and therefore renders our pairwise stability notion reasonably robust and well-motivated.

\section{Constructing Stable Subgraphs}
\label{sec:construction}

\subsection{Introducing Bundles}
\label{sec:introbundles}

We will use a series of examples to illustrate the establishment of our key properties. Our goal is to show that we can divide our agents into small components (this can be thought of as ``cutting up the preference graph'') to later ``glue'' the components back together in a meaningful way. As a running example, consider the preference system $I$ (or similarly, a preference system with ties broken arbitrarily) with $n=15$ agents that admits the stable partition 
$$\Pi=(a_1\;a_2\;a_3\:a_4\;a_5)(a_6\;a_7\;a_8\;a_9)(a_{10}\;a_{11}\;a_{12})(a_{13}\;a_{14})(a_{15}),$$
illustrated in Figure \ref{fig:stablepartition} (with partially shown strict preferences in blue colour, but note that other choices of preferences are also possible to achieve the same stable partition). 

\begin{figure}[!htb]
    \centering
    \includegraphics[width=12cm]{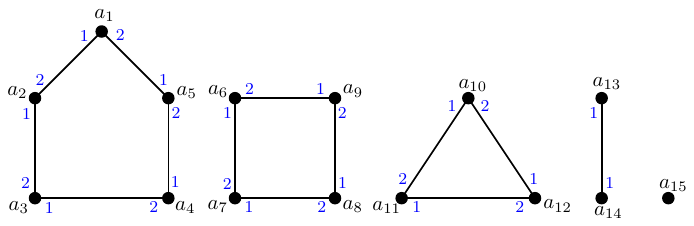}
    \caption{An {\sc sr} instance and a stable partition.}
    \label{fig:stablepartition}
\end{figure}

Now it is natural to consider a stable partition as an agent arrangement -- we simply consider the (bijective) identity mapping from the agent set to their place on the stable partition graph as the arrangement. We will refer to this as a \emph{stable partition arrangement}. Notice that this construction always exists and is stable.

\begin{restatable}{lemma}{sprs}
\label{lemma:sprs}
    A stable partition arrangement $M$ is a stable arrangement.
\end{restatable}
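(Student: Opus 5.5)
The argument reduces the claim to condition (T2) of Definition~\ref{def:sp}. First I make precise what the stable partition graph is. For each cycle of $\Pi$ I place its agents on a cycle graph in the cyclic order of $\Pi$. A cycle of length $\geq 3$ becomes a cycle graph. A transposition $(a_i\;a_j)$ becomes a single edge. A fixed point $(a_i)$ becomes an isolated vertex. Under the identity arrangement $M$, the neighbourhood of every agent $a_i$ in a cycle of length at least 2 therefore satisfies $N(a_i)=\{\Pi(a_i),\Pi^{-1}(a_i)\}$; for a transposition these two agents coincide. For a fixed point, $N(a_i)=\varnothing$.

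The key step is to compare the best neighbour $b(a_i)$ with the predecessor $\Pi^{-1}(a_i)$. By (T1), $\Pi(a_i)\succsim_i \Pi^{-1}(a_i)$. Since $b(a_i)$ is a best element of $N(a_i)$, we get $b(a_i)\sim_i \Pi(a_i)\succsim_i\Pi^{-1}(a_i)$, so $b(a_i)\succsim_i \Pi^{-1}(a_i)$. For a fixed point, $\Pi^{-1}(a_i)=a_i$ and $b(a_i)$ is null. I treat being unassigned as "partnered with oneself", the standard convention in Tan's framework. In both cases, if $a_j\succ_i b(a_i)$ then $a_j\succ_i \Pi^{-1}(a_i)$, by transitivity of the weak order.

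Now suppose for contradiction that $a_i,a_j$ form a blocking pair in $M$. Then $a_j\succ_i b(a_i)$ and $a_i\succ_j b(a_j)$. By the previous step, this gives $a_j\succ_i\Pi^{-1}(a_i)$ and $a_i\succ_j\Pi^{-1}(a_j)$, which contradicts (T2). Hence $M$ has no blocking pair and is stable. The stable partition may come from an {\sc srt} instance via tie-breaking; by the Observation after Theorem~\ref{thm:tan}, (T1) and (T2) still hold with respect to the original weak preferences, so the same argument applies.

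The only delicate point is the fixed-point and null-neighbour case. There I must check that the convention "$a_j\succ_i\Pi^{-1}(a_i)=a_i$" matches "$a_i$ prefers $a_j$ to a null best neighbour". Under complete preferences every agent is acceptable, so a null best neighbour is strictly worse than any other agent, and both statements are always true. The blocking condition and the (T2) condition therefore coincide. I will state this convention explicitly before running the argument.
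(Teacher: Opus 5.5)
Your proof is correct and follows essentially the paper's route: you show $b(a_i)\succsim_i\Pi^{-1}(a_i)$ and turn a blocking pair into a violation of (T2). The appeal to (T1) is unnecessary, since for agents outside fixed points $\Pi^{-1}(a_i)\in N(a_i)$ already gives the bound; your explicit handling of fixed points and of tie-breaking via the Observation makes precise what the paper leaves implicit.
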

\begin{proof}
    Suppose that the arrangement is not stable, i.e., that there exists a blocking pair consisting of agents $a_i$ and $a_j$ that strictly prefer each other to $b(a_i)$ and $b(a_j)$, respectively. Then they cannot be neighbours in the arrangement because for all neighbours $n\in N(a_i)$ of $a_i$ in $M$, $b(a_i)\succsim n$ (and similarly for the neighbours of $a_j$). However, then $a_j\succ_ib(a_i)\succsim_i \Pi^{-1}(a_i)$ and $a_i\succ_jb(a_j)\succsim_j \Pi^{-1}(a_j)$ by construction, contradicting T2 of Definition \ref{def:sp}.
\end{proof}

Note that $\Pi$ is not a unique stable partition of $I$, as we can replace the cycle $(a_6\;a_7\;a_8\:a_9)$ by either $(a_6\;a_7)(a_8\:a_9)$ or $(a_6\;a_9)(a_7\;a_8)$ to arrive at a reduced stable partition. The former case is illustrated in Figure \ref{fig:reducedpartition}.

\begin{figure}[!htb]
    \centering
    \includegraphics[width=12cm]{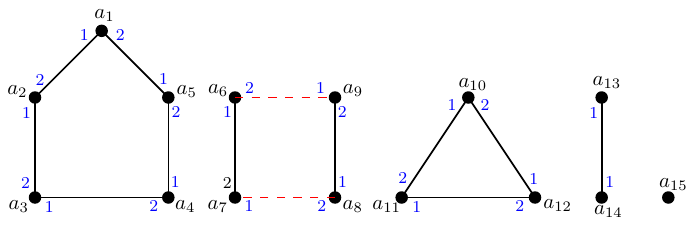}
    \caption{An {\sc sr} instance and a reduced stable partition}
\label{fig:reducedpartition}
\end{figure}

Now that this is a reduced stable partition, we cannot make any of the graph components smaller (i.e., involving fewer agents) or remove any edges from the arrangement while keeping it a stable partition. However, if we are only interested in our new stability notion and can discard the stable partition requirements, we can also open up the loops to gain ``stable paths'' instead. Figure \ref{fig:cutreducedpartition} shows such an arrangement -- we simply removed edges $\{a_1,a_5\}$ and $\{a_{10},a_{12}\}$, but other choices would be possible.

\begin{figure}[!htb]
    \centering
    \includegraphics[width=12cm]{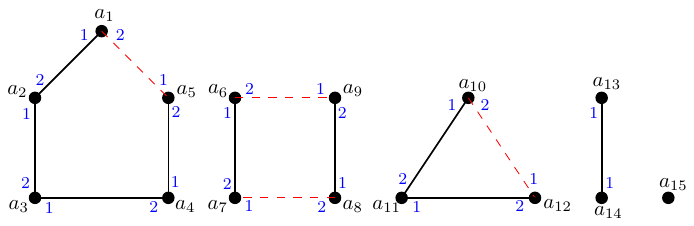}
\caption{An arrangement based on a cut-up reduced stable partition}
\label{fig:cutreducedpartition}
\end{figure}

Consistent with the fundamentals of graph theory, we will refer to a path and a cycle involving $k$ vertices (or agents) as a $P_k$ and as a $C_k$, respectively.

\begin{restatable}{lemma}{cutcycles}
\label{lemma:cutcycles}
    For any $k\geq 3$, turning a $C_k$ into a $P_k$ in a stable partition arrangement by eliminating any one edge from the cycle gives another stable arrangement.
\end{restatable}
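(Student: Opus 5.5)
Our plan is to imitate the proof of Lemma~\ref{lemma:sprs}: the key point is that cutting one edge of a $C_k$ with $k\geq 3$ never removes an agent's predecessor-or-successor ``witness'' entirely. First we record the neighbourhood structure. In the stable partition arrangement, an agent $a_i$ on a cycle $C_k$ with $k\geq 3$ has neighbours $\Pi(a_i)$ and $\Pi^{-1}(a_i)$, and these are distinct. Now delete an edge $\{a_x,a_y\}$ of this cycle, where $a_y=\Pi(a_x)$. Every agent not in $\{a_x,a_y\}$ keeps both of its neighbours. Agent $a_x$ loses its successor but keeps $\Pi^{-1}(a_x)$. Agent $a_y$ loses its predecessor $\Pi^{-1}(a_y)=a_x$ but keeps $\Pi(a_y)$. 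Agents in other components are unaffected.

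The central step is to show that, in the new arrangement $M'$, every agent $a_i$ satisfies $b'(a_i)\succsim_i \Pi^{-1}(a_i)$, where $b'$ denotes the best neighbour in $M'$. We check the cases in turn.
\begin{enumerate}
\item If $a_i\neq a_y$, then $\Pi^{-1}(a_i)$ is still a neighbour of $a_i$, so the inequality holds trivially. This includes agents in transpositions and fixed points, which are handled exactly as in Lemma~\ref{lemma:sprs}.
\item If $a_i=a_y$, then $\Pi(a_y)$ is still a neighbour, so $b'(a_y)\succsim_y \Pi(a_y)\succsim_y \Pi^{-1}(a_y)$ by (T1) of Definition~\ref{def:sp}.
\end{enumerate}
This use of (T1) for the agent that loses its predecessor is the only nontrivial point. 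The hypothesis $k\geq 3$ is exactly what guarantees that $\Pi(a_y)\neq a_x$, so this successor is not lost as well; for $k=2$ the ``cycle'' is a single edge and the argument would fail.

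Finally, suppose $a_i,a_j$ form a blocking pair in $M'$. Then
\[
a_j\succ_i b'(a_i)\succsim_i \Pi^{-1}(a_i)
\qquad\text{and}\qquad
a_i\succ_j b'(a_j)\succsim_j \Pi^{-1}(a_j).
\]
Since $a_i\neq a_j$, this contradicts (T2). I expect no real obstacle here. The one thing to state carefully is the tie-breaking in $b'$, but the inequality $b'(a_i)\succsim_i n$ holds for every neighbour $n$ regardless of how ties are broken, so weak preferences cause no trouble, as in Lemma~\ref{lemma:sprs}.

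Iterating the argument shows that we may cut one edge from each cycle of length at least $3$ simultaneously. The bound $b'(a_i)\succsim_i\Pi^{-1}(a_i)$ holds agent by agent, and each agent lies in exactly one cycle, so the cuts do not interfere with one another.
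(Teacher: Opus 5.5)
Your proof is correct and follows the paper's route: show $b'(a_i)\succsim_i \Pi^{-1}(a_i)$ for the relevant agents, then contradict (T2). You make explicit the appeal to (T1) for the endpoint that loses its predecessor, which the paper leaves implicit in ``by construction''. One small slip: for a fixed point $\Pi^{-1}(a_i)=a_i$ is not a neighbour, so the inequality there holds only under the convention that a null best neighbour ranks like $a_i$ itself, which is the same convention Lemma~\ref{lemma:sprs} implicitly relies on.
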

\begin{proof}
    Recall that the stable partition arrangement $M$ based on a stable partition $\Pi$ is stable by Lemma \ref{lemma:sprs}, i.e., it does not admit a blocking pair. Now suppose that after eliminating edge $\{a_i,a_j\}$ from $C_k$, the arrangement $M'$ is not stable anymore. Naturally, at least one of $a_i,a_j$ must be in a blocking pair admitted by $M'$. Without loss of generality, suppose that $\{a_i,a_r\}$ is a blocking pair. Notice that as $k\geq 3$, $a_i$ is still connected in $M'$ to one of its neighbours from $M$, therefore, by construction of the stable partition arrangement, $b_{M'}(a_i)\succsim_i \Pi^{-1}(a_i)$, where $b_{M'}(a_i)$ is the best neighbour of $a_i$ in $M'$. Similarly, $a_r$ is also still connected in $M'$ to one of its neighbours from $M$, therefore $b_{M'}(a_r)\succsim_r \Pi^{-1}(a_r)$. However, by assumption, $a_i,a_r$ block in $M'$, that is, $a_r\succ_i b_{M'}(a_i)$ and $a_i\succ_r b_{M'}(a_r)$, therefore $a_r\succ_i\Pi^{-1}(a_i)$ and $a_i\succ_r\Pi^{-1}(a_r)$, which contradicts property T2 of stable partition $\Pi$.
\end{proof}

Knowing this, and by the fact that a stable partition always exists, we can always find a stable arrangement on some collection of paths. Now, notice that if we want to minimise the sizes of the graph components further, we can also split off paths involving 2 agents (i.e., $P_2$ subgraphs) as long as we do not increase the number of isolated vertices (leaving a single agent without any agent they prefer at least as much as their predecessor, which likely introduces blocking pairs). One such choice is shown in Figure \ref{fig:cutsplitreducedpartition}, where edge $\{a_2,a_3\}$ is eliminated. Notice that we cannot split off any more edges at this point without creating isolated vertices.

\begin{figure}[!htb]
    \centering
    \includegraphics[width=12cm]{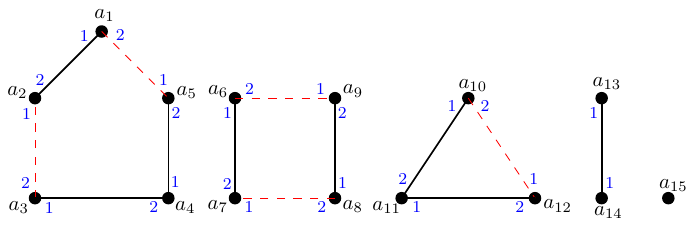}
\caption{An arrangement with more components based on a cut-up reduced stable partition}
\label{fig:cutsplitreducedpartition}
\end{figure}

\begin{restatable}{lemma}{splitpaths}
\label{lemma:splitpaths}
    For any $k\geq 5$, turning a $P_k$ into a $P_{k-2}$ and a $P_2$ in the construction above by eliminating an edge maintains stability.
\end{restatable}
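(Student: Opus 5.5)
We mirror the proof of Lemma \ref{lemma:cutcycles}. The invariant to carry along is the one it uses: in the current arrangement $M$, obtained from a stable partition arrangement for $\Pi$ by the operations above, every agent $a_i$ still has at least one of its original stable-partition neighbours. This gives
\[
b_M(a_i)\succsim_i \Pi^{-1}(a_i)\quad\text{for all } a_i,
\]
because in the cycle of $\Pi$ both $\Pi(a_i)$ and $\Pi^{-1}(a_i)$ are neighbours of $a_i$, and $\Pi(a_i)\succsim_i\Pi^{-1}(a_i)$ by (T1). For agents in transpositions, the two coincide. Once the invariant holds after the split, stability follows as before. A blocking pair $\{a_i,a_r\}$ in the new arrangement $M'$ would give $a_r\succ_i b_{M'}(a_i)\succsim_i\Pi^{-1}(a_i)$ and $a_i\succ_r b_{M'}(a_r)\succsim_r\Pi^{-1}(a_r)$, which contradicts (T2).

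The key step is choosing which edge to eliminate. Write the path as $v_1 v_2\cdots v_k$. Deleting the edge $\{v_2,v_3\}$ splits it into the $P_2$ on $v_1v_2$ and the $P_{k-2}$ on $v_3\cdots v_k$. Since $k\ge5$, the $P_{k-2}$ has at least $3$ vertices, so one could keep splitting. Note that $k\ge 4$ already suffices to avoid isolated vertices. The condition $k\ge5$ presumably ensures that the remaining path stays of length at least 3 where needed, or matches how the construction proceeds on odd cycles. After the deletion, every vertex of the path still has degree at least $1$, and no vertex becomes isolated: $v_2$ keeps $v_1$, and $v_3$ keeps $v_4$. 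All path edges are edges of the original cycle of $\Pi$, or come from earlier splits of it. Hence every agent retains a neighbour that was adjacent to it in the stable partition arrangement, and the invariant holds.

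The main obstacle is making the invariant precise, because $b_M(a_i)\succsim_i\Pi^{-1}(a_i)$ needs the retained neighbour to be one of the two cycle neighbours, not an arbitrary one. Here (T1) does the work: both cycle neighbours are at least as good for $a_i$ as $\Pi^{-1}(a_i)$. So it suffices that every edge in the construction is an edge of the stable partition arrangement and that no agent has degree $0$. I will state this as a small claim: any spanning subgraph of the stable partition arrangement without isolated vertices (on agents that were non-isolated) is stable. The lemma, and Lemma \ref{lemma:cutcycles}, are then special cases. An agent that was a singleton cycle $(a_i)$ in $\Pi$ has $\Pi^{-1}(a_i)=a_i$ and is untouched by the operation, so it needs the convention that $a_i$ is the null best neighbour; this is consistent with how (T2) is applied in Lemma \ref{lemma:sprs}.
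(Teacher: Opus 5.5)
Your proposal is correct and takes essentially the paper's approach. The paper's proof just says that, as in Lemma \ref{lemma:cutcycles}, every agent stays adjacent to someone at least as good as its predecessor (via (T1)), so a blocking pair would violate (T2); your general claim about spanning subgraphs of the stable partition arrangement with no new isolated vertices makes the same idea explicit and also covers iterated splits.
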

\begin{proof}
    This follows by a similar argument as in Lemma \ref{lemma:cutcycles} by the fact that every agent will still be connected to someone at least as good as their predecessor.
\end{proof}

This gives us the following initial result for our proposed notion of stability.

\begin{proposition}
\label{prop:p3}
    Any preference system instance admits a stable arrangement on a collection of (sufficiently many) $P_3$ components.
\end{proposition}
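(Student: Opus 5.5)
The plan is to start from a stable partition and reshape each of its cycles into a path on at most three agents. Each such path is then embedded in its own $P_3$ component, possibly leaving seats empty, so the arrangement will be complete but not necessarily bijective. Throughout, I keep the invariant used in the proofs of Lemmas \ref{lemma:cutcycles} and \ref{lemma:splitpaths}: every agent $a_i$ has a neighbour $a_j$ with $a_j \succsim_i \Pi^{-1}(a_i)$, or $a_i$ is its own predecessor. Under this invariant, any blocking pair $a_i,a_r$ would satisfy $a_r\succ_i b(a_i)\succsim_i \Pi^{-1}(a_i)$ and symmetrically for $a_r$, which contradicts T2.

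\textbf{Obtaining the partition and the paths.} First, break the ties of the preference system arbitrarily to obtain an {\sc sr} instance $I'$. Compute a stable partition of $I'$ with Tan's algorithm and reduce it using Theorem \ref{thm:tan}. By the Observation in Section \ref{sec:spbackground}, the result $\Pi$ is also a stable partition of the original system, and its cycles are transpositions, odd cycles of length at least $3$, and cycles of length $1$. By Lemma \ref{lemma:sprs} the stable partition arrangement is stable. I then open every cycle of length $k\geq 3$ into a $P_k$ using Lemma \ref{lemma:cutcycles}; since $k$ is odd, this gives paths of odd length. Next I repeatedly apply Lemma \ref{lemma:splitpaths} to each such $P_k$ with $k\geq 5$, splitting off a $P_2$ each time. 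Because the length decreases by $2$ at every step and stays odd, the process ends with a $P_3$ together with some $P_2$'s. Transpositions are already $P_2$ components, since the 2-cycle $(a_i\;a_j)$ gives the single edge $\{a_i,a_j\}$, and each agent's neighbour is then exactly its predecessor. Every step preserves the invariant, so the resulting arrangement on a disjoint union of $P_3$'s, $P_2$'s and isolated vertices is stable.

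\textbf{Embedding into $P_3$'s.} Take the target graph $G$ to be a disjoint union of sufficiently many $P_3$'s, one for each component obtained above (about $n/2$ suffice). Map each $P_3$ component onto a $P_3$ of $G$. Map each $P_2$ onto two adjacent vertices of its own $P_3$, leaving the third vertex unoccupied. Map each isolated agent to a vertex of its own $P_3$, with the other two vertices empty. Unoccupied vertices contribute no neighbours, so every agent's neighbourhood, and hence $b(\cdot)$, is identical to that in the path arrangement, and stability transfers unchanged.

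\textbf{Main obstacle.} The delicate point is the isolated agents, that is, fixed points $a_i=\Pi(a_i)$, for which $b(a_i)$ is null. For these I rely on the convention in Tan's framework that being one's own predecessor ranks below every other agent. Under that convention, T2 applied with $\Pi^{-1}(a_i)=a_i$ says that no $a_r$ has $a_i\succ_r \Pi^{-1}(a_r)$. Since every other agent retains a neighbour at least as good as its predecessor, no other agent strictly prefers $a_i$ to its best neighbour, so $a_i$ cannot block with anyone. (With complete preferences there is at most one such fixed point, as two of them would block each other.) I also need to check that Lemma \ref{lemma:splitpaths} can always choose the cut edge so that the split-off $P_2$ and the remaining $P_{k-2}$ both keep each agent adjacent to an agent at least as good as its predecessor. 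I expect this to hold by cutting at an edge whose two endpoints each retain their predecessor-side edge, as the lemma asserts.
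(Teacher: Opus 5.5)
Your proof is correct and follows the paper's route: the paper derives this proposition directly from Lemmas \ref{lemma:sprs}, \ref{lemma:cutcycles} and \ref{lemma:splitpaths} (take a reduced stable partition, open the odd cycles, split off $P_2$'s) and then places each resulting component on its own $P_3$, leaving some vertices unassigned. One small fix to your last remark: cutting an edge $\{a_i,\Pi(a_i)\}$ always removes $\Pi(a_i)$'s predecessor edge, so the criterion you propose can never be met; the split works because every agent in a contiguous cycle segment with at least two vertices keeps its predecessor or its successor, and T1 gives $\Pi(a_i)\succsim_i\Pi^{-1}(a_i)$.
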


Notice that this may require more vertices than agents, with some vertices unassigned. To be more specific than this, we will characterise instances in terms of bundles (collections) of $P_1$ (isolated vertices), $P_2$ and $P_3$ components that together permit a stable arrangement.

\begin{definition}
    Let $I=(A,\succsim)$ be a preference system. Then $I$ admits a \emph{$(r_1,r_2,r_3)$-bundle} $B=(G_B,M_B)$ if $G_B$ is a graph consisting of $r_1$ isolated vertices, $r_2$ $P_2$ components and $r_3$ $P_3$ components, and $M_B$ is a bijective stable arrangement of the agents $A$ on $G_B$.
\end{definition}

Reorganising our arrangement in Figure \ref{fig:cutsplitreducedpartition}, we get a bundle as shown in Figure \ref{fig:bundle}. 

\begin{figure}[!htb]
    \centering
    \includegraphics[width=12cm]{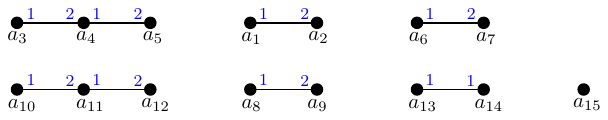}
\caption{A bundle based on a cut-up reduced stable partition}
\label{fig:bundle}
\end{figure}

\subsection{Computation and Properties of Bundles}

The results summarised in Sections \ref{sec:spbackground} and \ref{sec:introbundles} allow us to find $(r_1,r_2,r_3)$-bundles efficiently for any preference system. Algorithm \ref{alg:klmbundle} gives a detailed procedure that formalises the steps outlined above: ties are broken arbitrarily, then a stable partition $\Pi$ is computed, even-length cycles are decomposed into transpositions, fixed-points are added as $P_1$s, transpositions are added as $P_2$s, and odd-length cycles are decomposed and added as $P_2$s and $P_3$s.

\begin{algorithm}[!htb]
{\small
\renewcommand{\algorithmicrequire}{\textbf{Input:}}
\renewcommand{\algorithmicensure}{\textbf{Output:}}

    \begin{algorithmic}[1]

    \Require{$I=(A,\succsim)$ : a preference system}
    \Ensure{$M_B$ : an arrangement; $G_B,(r_1,r_2,r_3)$ : a bundle graph $G_B$ and its properties}

    \State $M_B\gets (A \rightarrow \varnothing)$
    \State $G_B\gets \varnothing$
    \State $r_1,r_2,r_3\gets 0$
    
    \State break ties in $\succsim$ arbitrarily (if any) to give $I'=(A,\succ)$
    \State compute a reduced stable partition $\Pi$ of $I'$
    
    \For{each cycle $C=(a_{i_1}\;a_{i_2}\;\dots a_{i_q})\in \Pi$}
        \If{$q=1$}
            \State $r_1\gets r_1+1$
            \State $G_B$.add$(P_1[x])$ \Comment{Add a $P_1$ with new vertex $x$ to $G_B$}
            \State $M_B(a_{i_1})\leftarrow x$
        \ElsIf{$q=2$}
            \State $r_2\gets r_2+1$
            \State $G_B$.add$(P_2[x,y])$ \Comment{Add a $P_2$ with new vertices $x,y$ to $G_B$}
            \State $M_B(a_{i_1})\leftarrow x$
            \State $M_B(a_{i_2})\leftarrow y$
        \Else
            \State $r_3\gets r_3+1$
            \State $G_B$.add$(P_3[x,y,z])$ \Comment{Add a $P_3$ with new vertices $x,y,z$ to $G_B$}
            \State $M_B(a_{i_1})\leftarrow x$
            \State $M_B(a_{i_2})\leftarrow y$
            \State $M_B(a_{i_3})\leftarrow z$
            \If{$q>3$} \Comment{Split $P_2$ components from odd-length cycles of length $>3$}
                \For{$0\leq r<\tfrac{q-3}{2}$}
                    \State $r_2\gets r_2+1$
                    \State $G_B$.add$(P_2[x',y'])$ \Comment{Add a $P_2$ with new vertices $x',y'$ to $G_B$}
                    \State $M_B(a_{i_{2r+4}})\leftarrow x'$
                    \State $M_B(a_{i_{2r+5}})\leftarrow y'$
                \EndFor
            \EndIf
        \EndIf
    \EndFor
    
    \State\Return{$M_B,G_B,(r_1,r_2,r_3)$}

    \end{algorithmic}
    \caption{Constructs a $(r_1,r_2,r_3)$-bundle $B$}
    \label{alg:klmbundle}}
\end{algorithm}

\begin{theorem}
    \label{thm:klmbundle}
    Let $I=(A,\succsim)$ be a preference system with $n$ agents, let $I'$ be the preference system with strict preferences (under any tie break), and let $\Pi$ be a reduced stable partition of $I'$ with odd-length cycles $\mathcal{O}_{I'}$. Then $I$ admits a $(r_1,r_2,r_3)$-bundle with 
    \begin{align*}
        &r_1=\vert\mathcal{O}_{I'}\vert-\vert\mathcal{O}_{I'}^{\geq 3}\vert \in\{0,1\};\\
        &r_2=\tfrac{n-\vert\mathcal{O}_{I'}\vert}{2}-\vert\mathcal{O}_{I'}^{\geq 3}\vert \in [0,\tfrac{n}{2}]\subset\mathbb{Z};\\
        &r_3=\vert\mathcal{O}_{I'}^{\geq 3}\vert \in [0,\tfrac{n}{3}]\subset\mathbb{Z}.
    \end{align*}
    We can find such a bundle and a stable arrangement using Algorithm \ref{alg:klmbundle} in $O(n^2)$ time. 
\end{theorem}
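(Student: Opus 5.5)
Our plan is to follow Algorithm \ref{alg:klmbundle} and check three things: that the resulting $(G_B,M_B)$ is a bijective stable arrangement of $A$, that the counts $r_1,r_2,r_3$ are the stated ones, and that the running time is $O(n^2)$.

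\emph{Correctness.} The starting point is the stable partition arrangement of the reduced stable partition $\Pi$ of $I'$. It is stable with respect to $I'$ by Lemma \ref{lemma:sprs}. By the observation on tie-breaking, $\Pi$ is also a stable partition of $I$, so the arrangement is stable with respect to $I$ as well. We then follow the algorithm's edits. Fixed points and transpositions are left as they are, since a transposition is already a $P_2$. Each odd cycle of length $q\ge 3$ is opened into a $P_q$, which preserves stability by Lemma \ref{lemma:cutcycles}. We then repeatedly split off a $P_2$, which preserves stability by Lemma \ref{lemma:splitpaths} as long as the remaining path has at least $5$ vertices. Peeling off the pairs $(a_{i_{2r+4}},a_{i_{2r+5}})$ from the tail leaves $P_{q-2},P_{q-4},\dots$ and finally the $P_3$ on $a_{i_1},a_{i_2},a_{i_3}$.

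The key point is that every agent keeps a neighbour at least as good as its predecessor. The first vertex $a_{i_1}$ keeps its successor $a_{i_2}\succsim \Pi^{-1}(a_{i_1})$ by (T1). Each pair agent is adjacent either to its predecessor or to its successor, and (T1) gives successor $\succsim$ predecessor. Consequently any blocking pair would contradict (T2), exactly as in the proof of Lemma \ref{lemma:cutcycles}. I expect this bookkeeping to be the main obstacle, because Lemma \ref{lemma:splitpaths} is only sketched. I would therefore state the invariant "$b(a)\succsim_a \Pi^{-1}(a)$ for all $a$" explicitly and verify it directly on the final bundle, rather than chaining the lemmas. Bijectivity is immediate, since each agent of each cycle receives exactly one new vertex.

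\emph{Counts.} Since $\Pi$ is reduced, its cycles are fixed points, transpositions, and odd cycles of length at least $3$. Every fixed point is an odd cycle of length $1$, so $r_1=|\mathcal O_{I'}|-|\mathcal O_{I'}^{\ge3}|$. This is at most $1$: two fixed points $a,b$ would each have themselves as predecessor, and by completeness of preferences $b\succ_a a$ and $a\succ_b b$, violating (T2). Each odd cycle of length $q\ge3$ contributes one $P_3$ and $(q-3)/2$ $P_2$s, so $r_3=|\mathcal O_{I'}^{\ge3}|$. For $r_2$, count agents: $n=r_1+2r_2+3r_3$. Hence
\[
r_2=\tfrac{n-r_1-3r_3}{2}=\tfrac{n-|\mathcal O_{I'}|}{2}-|\mathcal O_{I'}^{\ge3}|.
\]
The range claims $r_2\le n/2$ and $r_3\le n/3$ follow from this identity together with nonnegativity.

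\emph{Running time.} Breaking ties takes $O(n^2)$, since we only fix an order within each list. Tan's algorithm finds a stable partition in $O(n^2)$, and reducing it takes $O(n)$ by Theorem \ref{thm:tan}. The loop processes each agent a constant number of times, so it runs in $O(n)$. The total is $O(n^2)$.
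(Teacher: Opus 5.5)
Your proposal is correct and follows the paper's proof essentially step for step. Stability comes from Lemma \ref{lemma:sprs} together with Lemmas \ref{lemma:cutcycles} and \ref{lemma:splitpaths}, the counts from $n=r_1+2r_2+3r_3$ and the two-fixed-points violation of (T2), and the $O(n^2)$ bound from tie-breaking plus Tan's algorithm; your direct check of the invariant $b(a)\succsim_a\Pi^{-1}(a)$ on the final bundle, and your explicit transfer of stability from $I'$ to $I$ via the tie-breaking observation, are slightly tidier versions of steps the paper leaves implicit.
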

\begin{proof}
    We will argue this through the constructive method given in Algorithm \ref{alg:klmbundle}. Notice that we start with a reduced stable partition, i.e., all cycles are either of length 1, of length 2, or of odd length at least 3. This can be achieved, for example, using Tan's algorithm \cite{tan91_1} and a subsequent decomposition of even-length cycles into collections of transpositions. Notice also that the cases ``\textbf{if} $q=1$'' and ``\textbf{else if} $q=2$'' correspond precisely to adding cycles of length 1 as $P_1$ and cycles of length 2 $P_2$ to the bundle and mapping them as in the stable partition arrangement in Lemma \ref{lemma:sprs}. Furthermore, the third case ``\textbf{else}'' precisely splits an odd-length cycle into one $P_3$ and potentially multiple $P_2$ (as many as required by the length of the cycle to include all agents), the stability of which we argued in Lemmas \ref{lemma:cutcycles} and \ref{lemma:splitpaths}. Thus, the bundle $B$ and corresponding arrangement $M$ computed in Algorithm \ref{alg:klmbundle} are stable.

    Recall that $\mathcal{O}_{I'}$ are the odd-length cycles of $I'$ and that $\mathcal{O}_{I'}^{\geq 3}$ is the subset of all cycles of odd length at least 3. Clearly $r_1=\vert\mathcal{O}_{I'}\vert-\vert\mathcal{O}_{I'}^{\geq 3}\vert$. Furthermore, $r_2$ is the number of $P_2$, which is precisely half the number of agents that are not in $P_1$ and not in $P_3$. We already argued that there are $\vert\mathcal{O}_{I'}\vert-\vert\mathcal{O}_{I'}^{\geq 3}\vert$ agents in $P_1$ and we described that the algorithm adds one $P_3$ for every cycle of odd length at least 3, i.e., $r_3=\vert\mathcal{O}_{I'}^{\geq 3}\vert$. Thus, $r_2=\tfrac{n-(r_1+3r_3)}{2}=\tfrac{n-(\vert\mathcal{O}_{I'}\vert-\vert\mathcal{O}_{I'}^{\geq 3}\vert+3\vert\mathcal{O}_{I'}^{\geq 3}\vert)}{2}=\tfrac{n-\vert\mathcal{O}_{I'}\vert}{2}-\vert\mathcal{O}_{I'}^{\geq 3}\vert$.

    Now notice that any stable partition will contain at most one cycle of length 1 due to our assumption of complete preferences (any two agents in cycles of length 1 would strictly prefer each other to their predecessors in the stable partition, contradicting stability), thus $r_1\in\{0,1\}$. The upper bounds for $r_2$ and $r_3$ follow trivially from the number of agents each component contains.

    Regarding the time complexity of Algorithm \ref{alg:klmbundle}, notice that we can break ties arbitrarily in $O(n^2)$ time by iterating through the preference lists once (we have $n$ agents and each agent has a preference list of length $n-1$). Recall that Tan's algorithm runs in $O(n^2)$ time \cite{tan91_1} and that even-length cycles can be decomposed in $O(n)$ time (see Theorem \ref{thm:tan}). Finally, notice that $\Pi$ contains $O(n)$ cycles and each cycle is of length $O(n)$, and hence the outer for loop runs for $O(n)$ iterations and the inner for loop for $O(n^2)$ iterations over the entire algorithm's execution. Furthermore, with an appropriate choice of data structures, the set and function modifications can be performed in constant time, giving the time complexity $O(n^2)$.
\end{proof}

Notice that even for strict preferences, $(r_1,r_2,r_3)$-bundles need not be unique. 

\begin{proposition}
    Even when preferences are strict, a preference system may admit a $(r_1,r_2,r_3)$-bundle and a $(r_1',r_2',r_3')$-bundle with $r_3\neq r_3'$.
\end{proposition}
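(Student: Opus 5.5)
The plan is to exhibit a small explicit instance with strict preferences and two different bijective stable arrangements on path components. The definition of an $(r_1,r_2,r_3)$-bundle only asks for a bijective stable arrangement on $P_1$, $P_2$ and $P_3$ components. It does not require that the bundle come from Algorithm \ref{alg:klmbundle} or from a stable partition, so any such arrangement counts. It therefore suffices to find one instance admitting a bundle with $r_3=0$ and another bundle with $r_3=1$.

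The instance I would use has $n=4$ agents $a,b,c,d$ with strict preferences
\[
a: b \succ c \succ d,\qquad b: a \succ c \succ d,\qquad c: b \succ d \succ a,\qquad d: a \succ b \succ c .
\]
The guiding idea is that $a$ and $b$ are each other's first choice. Any agent whose best neighbour is their first choice can never be part of a blocking pair, since a blocking pair requires each member to strictly prefer the other to their best neighbour.

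For the first bundle I would take the two $P_2$ components $\{a,b\}$ and $\{c,d\}$, giving a $(0,2,0)$-bundle. Agents $a$ and $b$ are at their top choice, so any blocking pair would have to be $\{c,d\}$. But $c$ and $d$ are neighbours, so neither strictly prefers the other to their best neighbour, and no blocking pair exists.

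For the second bundle I would take the $P_3$ component $a - b - c$ together with $d$ as an isolated $P_1$, giving a $(1,0,1)$-bundle. Here $a$ and $b$ again have their first choices as neighbours, and $c$ has $b$, its first choice, as a neighbour. So $a$, $b$ and $c$ cannot block with anyone. A blocking pair would then need two distinct agents both outside $\{a,b,c\}$, which is impossible since only $d$ remains.

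Both arrangements are bijective and stable, with $r_3=0$ and $r_3'=1$ respectively, which proves the proposition. The only real obstacle is choosing the instance. A purely cyclic triple such as $a:b\succ c$, $b:c\succ a$, $c:a\succ b$ does not work, because it forces a $P_3$ in every stable arrangement. The fix is to build the instance around a mutual first-choice pair, which keeps most agents out of every potential blocking pair. Once that instance is fixed, the verification is the short case check above.
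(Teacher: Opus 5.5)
Your proof is correct and takes essentially the same approach as the paper: you exhibit an explicit strict instance built around a mutual first-choice pair and check two bundles directly. The only difference is the instance. The paper uses three agents ($a_1: a_2\;a_3$, $a_2: a_1\;a_3$, $a_3: a_1\;a_2$) and compares a $(1,1,0)$-bundle with a $(0,0,1)$-bundle that has $a_1$ in the middle of the $P_3$, so your fourth agent $d$ is not needed.
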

\begin{proof}
    Consider a preference system with three agents and preferences $a_1:a_2\;a_3$, $a_2:a_1\;a_3$ and $a_3:a_1\;a_2$. Then we could either arrange $a_1,a_2$ on a $P_2$ and single out $a_3$ on a $P_1$, or arrange all three agents on a $P_3$ with $a_1$ on the middle vertex. In both arrangements, no two agents prefer each other to their best neighbour.
\end{proof}

However, it will become clear later that bundles with components as small as possible are desirable. We can show that these bundles computed using Algorithm \ref{alg:klmbundle} are close to optimal for instances with strict preference lists (when the goal is to minimise the number of $P_3$ components in the bundle).

\begin{theorem}
    \label{thm:strict}
    Let $I=(A,\succsim)$ be a preference system with strict preferences. If $I$ is solvable, then the $(\vert\mathcal{O}_I\vert-\vert\mathcal{O}_I^{\geq 3}\vert,\frac{n-\vert\mathcal{O}_I\vert}{2}-\vert\mathcal{O}_I^{\geq 3}\vert,\vert\mathcal{O}_I^{\geq 3}\vert)$-bundle $B$ computed using Algorithm \ref{alg:klmbundle} has a minimum number of $P_3$ components. If $I$ is unsolvable, for any $(r_1,r_2,r_3)$ bundle $B'$, $r_3\geq \vert\mathcal{O}_I^{\geq 3}\vert-1$.
\end{theorem}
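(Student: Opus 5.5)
The plan is to treat the two cases separately. The solvable case is almost immediate. The unsolvable case I would reduce to a known extremal property of stable partitions: from an arbitrary bundle $B'$ I extract a matching that is stable among the agents it covers, and then compare the number of agents it leaves out with the number of odd cycles in $\mathcal{O}_I$.

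\emph{Solvable case.} If $I$ admits a stable matching, the second item of Theorem \ref{thm:tan} says no stable partition of $I$ contains an odd cycle of length at least 3. Hence $\vert\mathcal{O}_I^{\geq 3}\vert=0$, and the bundle produced by Algorithm \ref{alg:klmbundle} has $r_3=0$ by Theorem \ref{thm:klmbundle}. Since no bundle can have fewer than $0$ components of type $P_3$, this is trivially minimal. Because preferences are strict, no tie-breaking occurs, so $\mathcal{O}_{I'}=\mathcal{O}_I$ and the first item of Theorem \ref{thm:tan} makes $\mathcal{O}_I$ well defined.

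\emph{Unsolvable case: building the matching.} Let $B'=(G_{B'},M_{B'})$ be any $(r_1,r_2,r_3)$-bundle of $I$.
\begin{itemize}
\item First, $r_1\leq 1$. With complete preferences, two isolated agents have null best neighbours, so they would strictly prefer each other and form a blocking pair.
\item Next, define a matching $\mu$. Take every $P_2$ component as a pair. For each $P_3$ component $x\!-\!y\!-\!z$, where $y$ is the middle vertex, pair $y$ with its best neighbour $b(y)\in\{x,z\}$ and discard the other endpoint. Also discard the agent on each $P_1$.
\item Let $D$ be the set of discarded agents, so $\vert D\vert=r_1+r_3$. The key observation is that every agent $a$ matched in $\mu$ satisfies $\mu(a)=b(a)$ in $B'$:
  \begin{itemize}
  \item an agent on a $P_2$, or an endpoint on a $P_3$, has exactly one neighbour;
  \item a middle vertex is matched to its best neighbour by construction.
  \end{itemize}
\item Consequently, a blocking pair of $\mu$ in the sub-instance $I\setminus D$ would be a blocking pair of $B'$. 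So $\mu$ is a perfect stable matching of the sub-instance obtained by deleting $D$.
\end{itemize}

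\emph{Unsolvable case: the counting step.} I would invoke Tan's result that a stable partition determines the maximum size of a stable matching over sub-instances. Concretely, any set of agents whose deletion leaves an instance admitting a stable matching of all remaining agents must contain at least $\vert\mathcal{O}_I\vert$ agents, one from each odd cycle. Applying this to $D$ gives $r_1+r_3\geq \vert\mathcal{O}_I\vert\geq\vert\mathcal{O}_I^{\geq 3}\vert$. Combined with $r_1\leq 1$, this yields $r_3\geq\vert\mathcal{O}_I^{\geq 3}\vert-1$, as claimed.

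The main obstacle is this counting step. If I may not simply cite Tan's theorem on maximum stable matchings, I would prove the needed inequality directly. The route would be to show that deleting one agent from an odd cycle of a stable partition yields an instance whose stable partitions retain all the other odd cycles unchanged. I would then induct on $\vert D\vert$ using the invariance of odd cycles from Theorem \ref{thm:tan}.
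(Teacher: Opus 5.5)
Your proposal is correct and follows the paper's proof essentially step for step: the same internally stable matching formed by the $P_2$ pairs and by each $P_3$'s middle agent with its preferred endpoint, the same appeal to Tan's theorem that a maximum internally stable matching leaves exactly $\vert\mathcal{O}_I\vert$ agents unmatched, and the same bound $r_1\leq 1$. Your finish via $\vert\mathcal{O}_I\vert\geq\vert\mathcal{O}_I^{\geq 3}\vert$ is slightly cleaner than the paper's detour through $\vert\mathcal{O}_I^{=1}\vert$.

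One caveat: Tan's result gives only the cardinality bound $\vert D\vert\geq\vert\mathcal{O}_I\vert$, and that is all you need. Your gloss ``one from each odd cycle'' is false in general. Take $a_1: a_2\,a_3\,a_4\,a_5$, $a_2: a_3\,a_1\,a_4\,a_5$, $a_3: a_1\,a_5\,a_2\,a_4$, $a_4: a_5\,a_1\,a_2\,a_3$, $a_5: a_4\,a_3\,a_1\,a_2$. The stable partition is $(a_1\,a_2\,a_3)(a_4\,a_5)$, yet deleting $a_4$ alone leaves the stable matching $\{a_1a_2,a_3a_5\}$. For the same reason, your fallback induction, which only handles deletions from odd cycles, would not suffice as planned. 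It would instead need the lemma that deleting any single agent reduces the number of odd cycles by at most one.
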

\begin{proof}
    If $I$ is solvable, then $I$ admits a stable matching, and no reduced stable partition of $I$ contains any cycles of length longer than 2. Therefore, $\vert\mathcal{O}_I^{\geq 3}\vert=0$ and clearly there can be no bundle with fewer than 0 $P_3$ components.

    Now consider the case when $I$ is unsolvable. To establish our result, we will consider internally stable matchings, i.e., matchings $S$ of a subset of the agents in $I$ such that no two agents that are matched in $S$ prefer each other to their respective partners in $S$. Now we can show that $B'$ implies the existence of an internally stable matching of size $\tfrac{n-(r_1+r_3)}{2}$. Consider the complete stable arrangement $M$ of the agents on $B'$ (which must exist by definition of a bundle) and consider the set $S$ consisting of all pairs of agents that are mapped to the same $P_2$ by $M$ and all pairs of agents $a_i,a_j$ such that 
    \[\begin{cases}
        \text{there exists agent $a_k$ such that $M$ maps $a_i,a_j,a_k$ to the same $P_3$, and}\\
        \text{$a_i,a_j$ are connected in the $P_3$, and}\\
        \text{$a_j,a_k$ are connected in the $P_3$, and}\\
        \text{$a_i\succ_j a_k$.}
    \end{cases}\]
    Clearly, $S$ is a matching because no agent can be mapped to two different vertices in $B'$. $S$ is also internally stable because, by the stability of $M$, no two agents prefer each other to their best neighbour, and notice that any agent matched in $S$ is partnered with their best neighbour in $S$. Thus, no two agents matched in $S$ can prefer each other to their partner in $S$, so $S$ is internally stable. $\vert S\vert =\tfrac{n-(r_1+r_3)}{2}$ follows by construction because we do not match agents in a $P_1$, and we leave out exactly one agent from each $P_3$.
    
    \citet{tan91_2} showed that a maximum-size internally stable matching (also referred to as a maximum stable matching) is of size $\tfrac{n-\vert\mathcal{O}_I\vert}{2}$. Therefore $n-\vert\mathcal{O}_I\vert\geq n-r_1-r_3$, i.e., $\vert\mathcal{O}_I\vert\leq r_1+r_3$. Clearly, we can split the odd-length cycles into cycles of length at least 3 and cycles of length 1 as follows: $\vert\mathcal{O}_I\vert=\vert\mathcal{O}_I^{\geq 3}\vert+\vert\mathcal{O}_I^{=1}\vert$. Thus, $r_1-\vert\mathcal{O}_I^{=1}\vert\geq \vert\mathcal{O}_I^{\geq 3}\vert-r_3$, but $0\leq r_1\leq 1$ (otherwise the two agents in $P_1$ components must strictly prefer each other to their best neighbour by complete preference lists) and $0\leq \vert\mathcal{O}_I^{=1}\vert\leq 1$ (otherwise the two agents in fixed points must strictly prefer each other to themselves by complete preference lists). Therefore, finally, $\vert\mathcal{O}_I^{\geq 3}\vert-r_3\leq 1$, i.e., $\vert\mathcal{O}_I^{\geq 3}\vert-1\leq r_3$, as required.
\end{proof}

\begin{corollary}
    Let $I=(A,\succsim)$ be a preference system. If the preferences in $I$ are strict, then the $(\vert\mathcal{O}_I\vert-\vert\mathcal{O}_I^{\geq 3}\vert,\frac{n-\vert\mathcal{O}_I\vert}{2}-\vert\mathcal{O}_I^{\geq 3}\vert,\vert\mathcal{O}_I^{\geq 3}\vert)$-bundle $B$ computed using Algorithm \ref{alg:klmbundle} has at most one more $P_3$ component than any other $(r_1,r_2,r_3)$-bundle of $I$, so $B$ is a an additive 1-approximation of a bundle with a minimum number $m'$ of $P_3$ components. 
\end{corollary}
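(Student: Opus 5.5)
This follows essentially immediately from Theorem \ref{thm:strict} together with Theorem \ref{thm:klmbundle}. Let $B$ be the bundle produced by Algorithm \ref{alg:klmbundle} on $I$. Because preferences are strict, no ties are broken, so $I'=I$. By Theorem \ref{thm:klmbundle}, $B$ is a $(\vert\mathcal{O}_I\vert-\vert\mathcal{O}_I^{\geq 3}\vert,\frac{n-\vert\mathcal{O}_I\vert}{2}-\vert\mathcal{O}_I^{\geq 3}\vert,\vert\mathcal{O}_I^{\geq 3}\vert)$-bundle. By the first bullet of Theorem \ref{thm:tan}, the quantity $\vert\mathcal{O}_I^{\geq 3}\vert$ does not depend on which stable partition Tan's algorithm returns, so the number of $P_3$ components of $B$ is the well-defined value $r_3(B)=\vert\mathcal{O}_I^{\geq 3}\vert$.

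Next, fix an arbitrary $(r_1,r_2,r_3)$-bundle $B'$ of $I$ and split into the two cases of Theorem \ref{thm:strict}.

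\emph{Solvable case.} By Theorem \ref{thm:strict}, $B$ already has the minimum number of $P_3$ components. Hence $r_3(B)=0\le r_3\le r_3+1$.

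\emph{Unsolvable case.} Theorem \ref{thm:strict} gives $r_3\geq\vert\mathcal{O}_I^{\geq 3}\vert-1$. Rearranging, $r_3(B)=\vert\mathcal{O}_I^{\geq 3}\vert\le r_3+1$.

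In both cases $r_3(B)\le r_3+1$ for every bundle $B'$. Taking $B'$ to be a bundle that minimises the number of $P_3$ components, with that minimum equal to $m'$, gives $m'\le r_3(B)\le m'+1$. The lower bound $m'\le r_3(B)$ holds because $B$ is itself a bundle of $I$, and it is attained by definition of $m'$. This is exactly an additive $1$-approximation.

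There is no real obstacle here. The only points to check are that $I'=I$ under strict preferences and that $\vert\mathcal{O}_I^{\geq 3}\vert$ is invariant across stable partitions, both of which are already established by the cited results.
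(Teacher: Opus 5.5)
Your proof is correct and takes the same approach as the paper, which gives no separate proof and treats the corollary as an immediate consequence of Theorem~\ref{thm:strict}, with $I'=I$ and $r_3(B)=\vert\mathcal{O}_I^{\geq 3}\vert$ from Theorem~\ref{thm:klmbundle}, via exactly your solvable/unsolvable case split. Your remark that $\vert\mathcal{O}_I^{\geq 3}\vert$ does not depend on which stable partition is returned (Theorem~\ref{thm:tan}) is a worthwhile detail the paper leaves implicit.
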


Unfortunately, this result does not hold for instances with ties in the preference lists due to the arbitrary tie break involved in the algorithm. Here, our approach can be suboptimal when trying to minimise the number of $P_3$ components. First, note that the stability of a given arrangement can be verified efficiently.

\begin{proposition}
\label{prop:verify}
    Let $I=(A,\succsim)$ be a preference system with $n$ agents, let $G=(V,E)$ be a graph, let $A'\subseteq A$ be any subset of the agents, and let $M:A'\rightarrow  V$ be an arrangement. Then we can decide whether $M$ is stable in $O(\vert A'\vert^2)$ time.
\end{proposition}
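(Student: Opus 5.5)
The approach is direct: compute each agent's best neighbour once, then check every unordered pair of arranged agents against the blocking condition in constant time per pair. I assume the standard representation in which preference lists can be preprocessed into rank arrays, with $\mathrm{rank}_i(a_j)$ the position of $a_j$'s tie class in $\succsim_i$. Comparisons $a_j \succ_i a_k$ are then single integer comparisons. If building full rank tables for all $n$ agents is too costly, I will build them only for agents in $A'$ and only over agents of $A'$. This takes $O(\vert A'\vert^2)$ time by scanning each list and skipping agents outside $A'$. Strictly speaking, that scan costs $O(n)$ per agent. I expect this input-representation issue to be the main obstacle: the bound $O(\vert A'\vert^2)$ is only exact if ranks are available in constant time or the lists are given restricted to $A'$. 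Otherwise I will state the bound with the $O(\vert A'\vert n)$ preprocessing made explicit, or assume the standard rank-array input.

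The first step computes best neighbours. Since $M$ is injective, the neighbourhood $N(a_i)$ consists of agents of $A'$, so $\vert N(a_i)\vert \le \vert A'\vert - 1$. For each $a_i\in A'$ and each other $a_j\in A'$, I test whether $\{M(a_i),M(a_j)\}\in E$; with an adjacency matrix or hash set this is $O(1)$ per test. During this scan I keep the neighbour of minimum rank, which gives $b(a_i)$, or null if there are no neighbours. The total cost is $O(\vert A'\vert^2)$.

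The second step checks blocking pairs. For each unordered pair $\{a_i,a_j\}\subseteq A'$ with $i\neq j$, I test whether $a_j\succ_i b(a_i)$ and $a_i\succ_j b(a_j)$. A null best neighbour is treated as worse than every agent, so under complete preferences any agent strictly prefers anyone to null. I report instability if some pair passes both tests, and stability otherwise. There are $O(\vert A'\vert^2)$ pairs, each checked in $O(1)$ time.

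Correctness follows immediately from the definition of stability. The choice of tie-break in defining $b(a_i)$ is irrelevant, because any two best neighbours are indifferent for $a_i$. Strict preference over $b(a_i)$ is therefore well defined.
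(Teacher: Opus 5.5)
Your proposal is correct and matches the paper's proof: check all $O(\vert A'\vert^2)$ pairs of arranged agents against the blocking condition in constant time each, given every agent's best neighbour. You go slightly further than the paper. You compute the best neighbours explicitly in $O(\vert A'\vert^2)$ time, assuming constant-time adjacency tests. You also state the rank-array representation assumption that the paper's ``assuming we keep track of every agent's best neighbour'' leaves implicit.
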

\begin{proof}
    We simply need to consider every pair of agents that is assigned to a vertex (i.e., $O(\vert A'\vert^2)$ pairs) and verify that they do not strictly prefer each other to their respective best neighbours. Assuming that we keep track of every agent's best neighbour, we can do this comparison in constant time.
\end{proof}

We now highlight the NP-hardness of minimising $P_3$s in the bundle and establish further properties of the output of Algorithm \ref{alg:klmbundle} in the case when preferences may contain ties.

\begin{theorem}
    Let $I=(A,\succsim)$ be a preference system with $n$ agents and let $r_1,r_2,r_3$ be non-negative integers. If a preference system $I$ contains ties in the preference lists, then it is NP-complete to decide whether $I$ admits a $(r_1,r_2,r_3)$-bundle $B$. Specifically, in this setting, it is NP-hard to find a $(r_1,r_2,r_3)$-bundle with minimum $r_3$, and it is NP-hard to approximate the minimum $r_3$ within any multiplicative factor. However, the $(\vert\mathcal{O}_{I'}\vert-\vert\mathcal{O}_{I'}^{\geq 3}\vert,\frac{n-\vert\mathcal{O}_{I'}\vert}{2}-\vert\mathcal{O}_{I'}^{\geq 3}\vert,\vert\mathcal{O}_{I'}^{\geq 3}\vert)$-bundle (for any tie-break instance $I'$ of $I$) computed using Algorithm \ref{alg:klmbundle} gives an additive approximation guarantee of $\left\lfloor\tfrac{n}{3}\right\rfloor$.
\end{theorem}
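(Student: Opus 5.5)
Membership in NP is immediate. A certificate is a graph $G_B$ with the prescribed numbers of $P_1$, $P_2$ and $P_3$ components together with a bijection $M_B$ from the agents to its vertices. We first check that $r_1+2r_2+3r_3=n$, then verify stability in $O(n^2)$ time by Proposition~\ref{prop:verify}.

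For hardness we reduce from deciding whether an {\sc srt} instance with complete preference lists admits a weakly stable matching, which is NP-complete \cite{ronn_srt}. Let $I$ be such an instance with $n$ even, and ask whether $I$ admits an $(0,\tfrac n2,0)$-bundle. Such a bundle is a perfect matching $S$ in which no two agents strictly prefer each other to their partners. With complete lists a weakly stable matching covers at least $n-1$ agents, hence all of them when $n$ is even. So the two notions coincide and the decision problem is NP-hard.

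The same reduction gives the optimisation statements. The instance $I$ admits a weakly stable matching if and only if the minimum $r_3$ is $0$. This uses that with $n$ even and $r_3=0$ we get $r_1\in\{0,1\}$ and $r_1\equiv n \pmod 2$, hence $r_1=0$. Any algorithm computing the minimum $r_3$, or approximating it within any multiplicative factor $\alpha$, must output $0$ exactly when the optimum is $0$, since $\alpha\cdot 0=0$. Hence either would decide the NP-complete problem.

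For the additive guarantee, note that the bundle produced by Algorithm~\ref{alg:klmbundle} has $r_3=|\mathcal{O}_{I'}^{\geq3}|$. Each such cycle has at least 3 agents and the cycles are disjoint, so $r_3\le\lfloor n/3\rfloor$. The optimum is at least $0$, so the output exceeds the optimum by at most $\lfloor n/3\rfloor$.

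The main obstacle is checking that Ronn's hardness applies to instances with complete lists and an even number of agents; this is the standard formulation of that result. If needed, we would pad with a dummy pair ranked last by everyone to make $n$ even. We would then also have to check that this padding neither creates nor destroys weak stability.
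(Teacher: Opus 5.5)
Your proposal is correct and follows the paper's proof essentially step for step: the same reduction from Ronn's weakly-stable-matching problem via $(0,\tfrac{n}{2},0)$-bundles, the same ``optimum is $0$'' argument ruling out any multiplicative approximation, and the same $\lfloor n/3\rfloor$ bound. Your parity argument that $r_3=0$ forces $r_1=0$ makes explicit a step the paper leaves implicit. The only slip is in your contingency remark: adding a dummy pair does not change the parity of $n$. That padding is not needed anyway, since Ronn's result is stated for complete lists with $n$ even, and odd $n$ could be handled directly via $(1,\tfrac{n-1}{2},0)$-bundles.
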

\begin{proof}
    Recall that it is NP-complete to decide whether an {\sc srt} instance with $n\in 2\mathbb{Z}^+$ agents admits a weakly stable matching \cite{ronn_srt}. Now recall from Section \ref{sec:formal} that problem instances in {\sc srt} consist of a preference system equivalent to the one we adopted throughout this work, except that we also allow odd numbers of agents. We claim that $I$ (with $n$ even) admits a weakly stable matching if and only if $I$ admits a $(0,\tfrac{n}{2},0)$-bundle, in which case NP-hardness of our problem follows immediately.

    Suppose first that $I$ admits a weakly stable matching $M$. By complete preferences and $n$ even, $\vert M\vert=\tfrac{n}{2}$ (otherwise two unmatched agents would strictly prefer each other to being unmatched). Now consider a collection $B$ of $\tfrac{n}{2}$ $P_2$ components and a natural complete arrangement (mapping) $M'$ from $A$ to the components such that matched agents are neighbours in $B$. Then clearly $M'$ is stable by the stability of $M$ (i.e., no two agents strictly prefer each other to their best, in this case their only, neighbour) and so $B$ is a $(0,\tfrac{n}{2},0)$-bundle.

    Now for the other direction, suppose that $I$ admits a $(0,\tfrac{n}{2},0)$-bundle. Then there exists a complete stable arrangement of the agents on $P_2$ components. Consider the matching $M$ corresponding to the arrangement; $M$ is complete, and no two agents in the matching strictly prefer each other to their partners. Thus, $M$ is a stable matching of $I$.

    Membership in NP follows from the fact that a bundle $B$ will contain both a graph $G_B$ and a bijective arrangement $M_B$ of the agents on $G_B$. Bijectivity can be verified in time linear in the number $n$ of agents, and stability can be verified in $O(n^2)$ time due to Proposition \ref{prop:verify}.

    Due to the phenomenon of being able to decide whether $I$ is solvable or not purely based on knowing whether the minimum $r_3$ is 0 or not, we cannot hope for a polynomial-time algorithm with a multiplicative approximation guarantee, unless P=NP. The multiplicative guarantee is unbounded because it might be the case that there exists a $(r_1,r_2,0)$-bundle, but the tie-break causes $\vert\mathcal{O}_{I'}^{\geq 3}\vert>0$. For example, if every preference list consists of a single long tie, then any maximum matching (necessarily a $(r_1,r_2,0)$-bundle) is stable, but the tie break can cause the stable partition of $I'$ to consist of $\left\lfloor\tfrac{n}{3}\right\rfloor$ cycles of length 3.
    
    The additive approximation guarantee of the stable partition bundle follows immediately from the fact that we can compute a bundle with the stated properties using Algorithm \ref{alg:klmbundle}, $r_3=\vert\mathcal{O}_I^{\geq 3}\vert$ (as argued in Theorem \ref{thm:klmbundle}) and there can be at most $\left\lfloor\tfrac{n}{3}\right\rfloor$ $P_3$ components (by completeness of the arrangement and the fact that there are three agents in each $P_3$ component). 
\end{proof}

A key observation to make at this point is that adding edges to a graph that admits a stable arrangement maintains stability. In other words, adding edges can only be an improvement for the agents!

\begin{theorem}
\label{thm:addedges}
    Let $I=(A,\succsim)$ be a preference system and let $G=(V,E)$ be a graph that admits a stable arrangement $M$. Then, for any pair of distinct vertices $v_i,v_j\in V$ such that $\{v_i,v_j\}\notin E$, it must be that $M$ is stable in $G'=(V,E\cup \{\{v_i,v_j\}\})$.
\end{theorem}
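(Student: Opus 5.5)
The plan is to argue by contradiction through monotonicity of best neighbours. Adding an edge can only enlarge neighbourhoods, so no agent's best neighbour can get worse. Blocking is defined by strict preference over the best neighbour, so no new blocking pair can appear.

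First, fix the arrangement $M:A'\rightarrow V$ and write $N_G(a)$ and $N_{G'}(a)$ for the neighbours of an agent $a\in A'$ in $M$ with respect to $G$ and $G'$. Since $E\subseteq E\cup\{\{v_i,v_j\}\}$, we have $N_G(a)\subseteq N_{G'}(a)$ for every $a\in A'$. Equality holds unless $M(a)\in\{v_i,v_j\}$ and the other endpoint is occupied by some agent of $A'$. If an endpoint is unoccupied, nothing changes for any agent.

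Second, derive the key inequality $b_{G'}(a)\succsim_a b_G(a)$ for every $a\in A'$. We adopt the convention that any agent is strictly preferred to null, which matches the paper's reading of ``null'' as being unmatched. If $N_G(a)=\varnothing$, the inequality is trivial. Otherwise $b_G(a)\in N_{G'}(a)$, and $b_{G'}(a)$ is a best element of $N_{G'}(a)$, so it is weakly preferred to every element there. Ties are harmless: whatever tie-break is used, the best neighbour is determined up to indifference, and blocking depends only on strict comparison with it.

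Third, suppose for contradiction that $a_k,a_l$ form a blocking pair in $G'$, meaning $a_l\succ_k b_{G'}(a_k)$ and $a_k\succ_l b_{G'}(a_l)$. Combining each with the second step by transitivity of the weak orders gives $a_l\succ_k b_G(a_k)$ and $a_k\succ_l b_G(a_l)$. So $\{a_k,a_l\}$ blocks $M$ in $G$, contradicting the assumed stability of $M$.

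I do not expect a substantive obstacle. The only point needing care is the second step: the null best-neighbour case and the arbitrary tie-break. Both are settled by the fact that stability refers to strict preference over an element of a maximal indifference class of the neighbourhood. Finally, a straightforward induction on the number of added edges extends the result to adding any set of edges, which is the form used later.
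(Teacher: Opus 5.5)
Your proposal is correct and follows essentially the same route as the paper: the monotonicity $b_{G'}(a)\succsim_a b_G(a)$ combined with transitivity turns any blocking pair in $G'$ into a blocking pair in $G$, contradicting stability. Your explicit treatment of the null best neighbour and of arbitrary tie-breaks only spells out points the paper's argument leaves implicit.
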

\begin{proof}
    Clearly, $M$ does not admit any blocking pairs with respect to $G$. Now suppose that $M$ is not stable in $G'$. Then there must exist a blocking pair $\{a_r,a_s\}$ such that $a_s\succ_r b_{G'}(a_r)$ (where $b_{G'}(a_r)$ is the best neighbour of $a_r$ with respect to $M$ and $G'$) and $a_r\succ_s b_{G'}(a_s)$. However, notice that $b_{G'}(a_r)\succsim_r b_{G}(a_r)$ and $b_{G'}(a_s)\succsim_s b_{G}(a_s)$ because either the best neighbours remain the same (or equally good in the case of a tie) after addition of the extra edge, or the best neighbour is strictly better after addition of the edge. Therefore $\{a_i,a_j\}$ must block $M$ with respect to $G$, a contradiction.    
\end{proof}

We will make extensive use of this result in the next section. To give an initial idea of how this will be useful, recall that we have shown how to find a $(r_1,r_2,r_3)$-bundle efficiently (and that any preference system admits one), so by adding edges, we can arrive at complex structures that admit stable arrangements.

\section{Applications to Agent Arrangement Problems}
\label{sec:applications}

\subsection{Arrangements in General Graphs}

In the beginning of this paper, we introduced the {\sc Arranging Bundles} problem. To argue the NP-completeness of this problem, first note that mapping a bundle onto a graph is dependent on the existence of a subgraph isomorphism. Note that, by Theorem \ref{thm:addedges}, this subgraph need not be induced. For an explanation of what it means to arrange a bundle on a graph, we refer back to Section \ref{sec:contributions}. Throughout, we will use the graph-theoretic notation $r_1P_1\cup r_2P_2$ to denote the disjoint union of $r_1$ $P_1$s and $r_2$ $P_2$s.

\begin{lemma}
    \label{lemma:subgraph}
    Let $G=(V,E)$ be a graph and $B$ be a $(r_1,r_2,r_3)$-bundle, then there exists a complete stable arrangement $M$ of $B$ on $G$ if and only if $G$ contains a subgraph that is isomorphic to the disjoint union $r_1P_1 \cup r_2 P_2\cup r_3P_3$. 
\end{lemma}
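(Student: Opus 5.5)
The plan is to prove the two directions separately. The key point is what ``complete stable arrangement of $B$ on $G$'' means, recalled from Section~\ref{sec:contributions}: an injective map of all agents into $V$ under which every pair of agents that are neighbours in $G_B$ (via $M_B$) are also neighbours in $G$. So the arrangement must preserve the bundle's adjacencies, and stability must hold with respect to $G$.

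\textbf{($\Leftarrow$)} Suppose $G$ contains a subgraph $H=(V_H,E_H)$ and an isomorphism $\varphi: G_B \to H$, where $G_B$ is isomorphic to $r_1P_1\cup r_2P_2\cup r_3P_3$. Define $M := \varphi\circ M_B$. This map is injective, since $M_B$ is bijective onto $V(G_B)$ and $\varphi$ is injective. Every agent is mapped, so $M$ is complete.
\begin{enumerate}
\item Regarded as an arrangement on $H$, $M$ reproduces exactly the neighbourhoods of $M_B$. Hence it is stable, because $M_B$ is stable by the definition of a bundle.
\item Form the intermediate graph $(V, E_H)$ by adding the vertices $V\setminus V_H$ as isolated vertices. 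These vertices are unoccupied, so no agent's neighbourhood changes and stability is untouched.
\item Add the edges of $E\setminus E_H$ one at a time. By Theorem~\ref{thm:addedges}, stability is preserved at each step. An added edge touching an unoccupied vertex changes no agent's best neighbour; the argument of Theorem~\ref{thm:addedges} covers this case anyway.
\end{enumerate}
The result is a complete stable arrangement on $G$ that keeps every bundle neighbour as a neighbour.

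\textbf{($\Rightarrow$)} Suppose $M$ is a complete stable arrangement of $B$ on $G$ preserving bundle adjacencies. Consider the image vertices $M(M_B^{-1}(v))$ for $v\in V(G_B)$, together with the edges $\{M(a),M(a')\}$ for all bundle-adjacent pairs $a,a'$. These edges exist in $G$ by the neighbour-preservation requirement, and they are distinct edges on distinct vertices because $M$ is injective. So this is a (not necessarily induced) subgraph of $G$, and the map $v\mapsto M(M_B^{-1}(v))$ is an isomorphism from $G_B\cong r_1P_1\cup r_2P_2\cup r_3P_3$ onto it.

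I expect the main obstacle to be definitional rather than technical. If ``arrangement of $B$ on $G$'' did \emph{not} require preserving the bundle's neighbours, the forward direction would fail. For example, a stable arrangement might exist on a graph lacking that subgraph, using a different structure such as one larger path. So I would explicitly invoke the convention stated in Section~\ref{sec:contributions}, that the components are connected ``while also keeping all neighbours in the bundle as neighbours in the final solution'', and the proof then reduces to the two routine directions above.
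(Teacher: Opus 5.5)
Your proposal is correct and follows essentially the same route as the paper. In the forward direction you read off the subgraph as the image of the bundle under the neighbour-preservation convention, and in the backward direction you map bundle components onto the subgraph's components; your explicit appeal to Theorem~\ref{thm:addedges} (and your handling of unoccupied vertices) simply makes precise the paper's informal remark that every agent keeps its best bundle neighbour.
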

\begin{proof}
    One direction is easy: suppose that there exists a complete stable arrangement $M$ of $B$ on $G$. Then the graph $G_B$ corresponding to bundle $B$ satisfies $G_B=r_1P_1\cup r_2P_2 \cup r_3P_3$ by definition. Thus, if neighbours in $B$ are preserved in $M$, a core assumption throughout this paper which we made in Section \ref{sec:contributions}, then the required subgraph must exist trivially.

    Now suppose instead that $G$ contains a subgraph $G'$ that is isomorphic to $r_1P_1\cup r_2P_2 \cup r_3P_3$, i.e., the union of $r_1$ $P_1$ components, $r_2$ $P_2$ components and $r_3$ $P_3$ components. Then we can define $M$ in such a way that it simply maps all $P_1$ components in $B$ to $P_1$ components in $G'$, $P_2$ components in $B$ to $P_2$ components in $G'$, and $P_3$ components in $B$ to $P_3$ components in $G'$. $M$ is necessarily a complete arrangement of $B$ on $G'$, and as $G'$ is a subgraph of $G$, $M$ is a complete arrangement of $B$ on $G$. By the stability of $B$ and the fact that all agents maintain their best neighbours from $B$ in $M$ on $G$, there can be no two agents that prefer each other to their best neighbours in $G$, so $M$ is a complete stable arrangement as required.   
\end{proof}

To establish our complexity result, we will reduce from the {\sc $P_s$-Partition} problem.

\begin{problemBox}
\textsc{$P_s$-Partition} \\[4pt]
\textbf{Input:} A simple graph $G=(V,E)$ with $\vert V\vert=rs$. \\[2pt]
\textbf{Question:} Do there exist $r$ vertex disjoint simple paths of length $s$ in $G$?
\end{problemBox}

\citet{monnot2006} showed that {\sc $P_s$-Partition} is NP-complete for any $s\geq 3$ even if $G$ is bipartite and of maximum degree 3. We will now use this result to establish the complexity of {\sc Arranging Bundles}, which we defined in Section \ref{sec:contributions}.

\begin{theorem}[Restatement of Theorem \ref{thm:arrbundlesNPcompleteintro}]
    {\sc Arranging Bundles} is NP-complete even if $G$ is bipartite and has maximum degree 3.
\end{theorem}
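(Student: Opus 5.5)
The plan is to reduce from {\sc $P_s$-Partition} with $s=3$. We use Lemma~\ref{lemma:subgraph} to turn the question into a pure subgraph-containment problem.

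\textbf{Membership in NP.} A certificate is an injective map from the vertices of $G_B$ into $V$. We check in polynomial time that every edge of $G_B$ is mapped to an edge of $G$; this preserves neighbours, so every agent keeps its bundle neighbours. We then check stability of the induced arrangement on $G$ using Proposition~\ref{prop:verify}. Alternatively, by Lemma~\ref{lemma:subgraph} it suffices to certify a subgraph of $G$ isomorphic to $r_1P_1\cup r_2P_2\cup r_3P_3$, which is clearly checkable in polynomial time.

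\textbf{Hardness.} Take an instance $G=(V,E)$ of {\sc $P_3$-Partition} with $\vert V\vert=3r$. By \citet{monnot2006}, this problem is NP-complete even when $G$ is bipartite and of maximum degree 3. We build a preference system on $n=3r$ agents whose stable partition consists of $r$ odd $3$-cycles. Let the agents be $a_{j,1},a_{j,2},a_{j,3}$ for $j=1,\dots,r$. Within triple $j$, agent $a_{j,k}$ ranks $a_{j,k+1}$ (indices mod 3) first, then $a_{j,k-1}$, then all other agents in any order. This is the classic cyclic 3-agent gadget, and the $r$ triples together form a stable partition. Alternatively, we can bypass stable partitions entirely. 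Place triple $j$ on a $P_3$ with $a_{j,1}$ in the middle. Then $a_{j,1}$ and $a_{j,3}$ get their first choices. Agent $a_{j,2}$ has best neighbour $a_{j,1}$, which is its second choice, but its first choice $a_{j,3}$ already has its own first choice. Agents in different triples never block each other, since each agent prefers its own triple-mates to everyone outside. Hence this gives an $(0,0,r)$-bundle $B$. By Theorem~\ref{thm:klmbundle}, $B$ can equally be produced by Algorithm~\ref{alg:klmbundle}; either way, the construction takes polynomial time.

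The instance of {\sc Arranging Bundles} is $(B,G)$, with $G$ unchanged, so it remains bipartite and of maximum degree 3. By Lemma~\ref{lemma:subgraph}, a complete stable arrangement of $B$ on $G$ exists if and only if $G$ contains a subgraph isomorphic to $rP_3$. Since $\vert V\vert=3r$, such a subgraph spans all of $V$, so this holds if and only if $V$ partitions into $r$ vertex-disjoint paths on 3 vertices. That is exactly a yes-instance of {\sc $P_3$-Partition}, which proves NP-hardness.

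\textbf{Main obstacle.} The delicate points are interpretive rather than technical. First, we must match the notion of ``path of length $s$'' in \citet{monnot2006}, namely $s$ vertices, so that $s=3$ corresponds to $P_3$ and the required hardness holds. Second, we must ensure that Lemma~\ref{lemma:subgraph} is applied with the neighbour-preservation convention, so that mapping $B$ into non-induced copies of $P_3$ remains stable; this is exactly what Theorem~\ref{thm:addedges} guarantees. Finally, we must verify that the gadget is a genuine stable bundle. This requires the short blocking-pair check above, in particular confirming that no cross-triple pair blocks.
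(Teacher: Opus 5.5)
Your proposal is correct and follows essentially the same route as the paper: a reduction from {\sc $P_3$-Partition} (hard by \citet{monnot2006}) via Lemma~\ref{lemma:subgraph}, with NP membership from Proposition~\ref{prop:verify}. The only difference is that you build the $(0,0,r)$-bundle explicitly from cyclic triples and check its stability directly, whereas the paper cites an external construction of preference systems whose stable partition consists of $3$-cycles; this makes your proof self-contained.
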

\begin{proof}
    Let $G=(V,E)$ be an instance of {\sc $P_3$-Partition}. We can construct a bundle $B=(G_B,M_B)$ such that $r_1=r_2=0$ and $r_3=\tfrac{\vert V\vert}{3}$, for example by considering a preference system consisting of $\vert V\vert$ agents that all have strict preferences such that the stable partition correponding to the preference system consists of $\tfrac{\vert V\vert}{3}$ 3-cycles (see, for example, \citet{glitzner2025empirics} for an explicit construction of such preference systems). 
    
    Clearly $G$ admits $\tfrac{\vert V\vert}{3}$ vertex disjoint simple paths of length $3$ if and only if $G$ contains a subgraph consisting of $\tfrac{\vert V\vert}{3}$ disjoint $P_3$ components. By Lemma \ref{lemma:subgraph}, $G$ admits a complete stable arrangement $M$ of $B$ on $G$ if and only if $G$ contains a subgraph that is isomorphic to $r_1P_1 \cup r_2 P_2\cup r_3P_3$. Since $r_1=r_2=0$, we can conclude that $G$ is a yes-instance of {\sc $P_3$-Partition} if and only if $(G,B)$ is a yes-instance of {\sc Arranging Bundles}. NP-hardness of {\sc Arranging Bundles} follows immediately. Membership of {\sc Arranging Bundles} in NP follows from the fact that a stable arrangement $M$ of $B$ on $G$ can be verified efficiently, as established in Proposition \ref{prop:verify}.
\end{proof}

However, despite the intractability of {\sc Arranging Bundles} in general, we can give a list of sufficient conditions for the existence of a complete stable arrangement on a given graph that can be verified in polynomial time in the size of the graph. Notice that the time complexity is with respect to the graph, as a graph that admits a complete stable arrangement must have at least as many vertices as agents in the preference system by definition, but could in theory contain arbitrarily many more vertices than agents. Also, recall that we treat bundles as compressions of preference systems, so it is sufficient to deal with bundles rather than with preference systems.

\begin{theorem}
    \label{thm:conditions}
    Let $G=(V,E)$ be a graph and $B$ be a $(r_1,r_2,r_3)$-bundle, then there exists a complete stable arrangement $M$ of $B$ on $G$ if:
    \begin{itemize}
        \item $G$ admits a path (without repeated vertices) of length $r_1+2r_2+3r_3$, or 
        \item there are at least $r_1+r_2+r_3$ connected components of size at least 3, or
        \item $G$ is a tree of height at least 3 (where the root has depth 0) with at least $r_2+r_3$ leaves.
    \end{itemize}
    Each of these independent sufficient conditions can be verified in polynomial time (in the size of the graph).
\end{theorem}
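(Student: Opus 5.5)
The plan is to reduce every case to Lemma~\ref{lemma:subgraph}: a complete stable arrangement of $B$ on $G$ exists as soon as $G$ contains a (not necessarily induced) subgraph isomorphic to $r_1P_1\cup r_2P_2\cup r_3P_3$. For each of the three conditions I will exhibit such a subgraph explicitly. Throughout I will use the fact that a $P_1$ only needs one vertex, that a $P_2$ embeds into any $P_3$ or larger path by dropping vertices, and that a disjoint union of small paths embeds into a longer path by deleting edges.

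\textbf{First condition.} Take a path $v_1v_2\dots v_N$ with $N=r_1+2r_2+3r_3$ vertices; if ``length'' is read as number of edges, use only its first $N$ vertices. Cut it into consecutive blocks. Use $r_3$ blocks of three vertices, which give $P_3$s with their internal edges. Then use $r_2$ blocks of two vertices, which give $P_2$s. Finally, use $r_1$ single vertices. Deleting the edges between blocks yields the required subgraph.

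\textbf{Second condition.} Each connected component of size at least $3$ contains a $P_3$: take a vertex of degree at least $2$ together with two of its neighbours, and note that such a vertex exists because a connected graph on at least $3$ vertices is not a matching. Assign one component to each of the $r_1+r_2+r_3$ bundle components. Embed a $P_3$ into that component's $P_3$, a $P_2$ into one of its edges, and a $P_1$ into a single vertex. Distinct components keep the images disjoint.

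\textbf{Third condition.} This is the one where I expect the main obstacle, namely producing enough disjoint $P_2$s and $P_3$s.
\begin{itemize}
\item I would pair each leaf with a path toward the root: a leaf $\ell$ with its parent gives a $P_2$, and $\ell$, its parent and its grandparent give a $P_3$.
\item Distinct leaves may share parents, so this naive approach fails. A better plan is a greedy bottom-up decomposition. Repeatedly take a deepest leaf $\ell$ and its parent $p$. If $p$ has another leaf child $\ell'$, use $\ell p\ell'$ as a $P_3$; otherwise use $\ell p$, or $\ell p g$ with grandparent $g$. Then delete the used vertices.
\item I would argue that each step consumes at most two leaves and creates at most one new leaf, so the count of available leaves is maintained. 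Height at least $3$ ensures that the vertices near the root can absorb the $P_1$s.
\end{itemize}
I am least sure this greedy works as stated; if it breaks, I would weaken the claim to the components actually guaranteed by the leaf count.

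\textbf{Verification.} Component sizes come from a BFS/DFS in linear time, which decides the second condition. Tree-ness, the height, and the number of leaves are likewise computable in linear time, which decides the third. The first condition asks for a long simple path, which is NP-hard in general. So for ``verified in polynomial time'' I would either restrict it to a checkable witness, i.e.\ verify a given path in linear time, or point out that for fixed $N$ it is fixed-parameter tractable via colour-coding. I would state the verification claim carefully in terms of checking a supplied path.
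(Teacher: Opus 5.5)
Your treatment of the first two conditions is correct and matches the paper. In both cases you exhibit $r_1P_1\cup r_2P_2\cup r_3P_3$ as a non-induced subgraph and appeal to Lemma~\ref{lemma:subgraph} (equivalently Theorem~\ref{thm:addedges}). You also justify why a connected component on at least $3$ vertices contains a $P_3$, which the paper leaves implicit.

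The genuine gap is the third condition, and it cannot be closed because the condition is false as stated. No leaf bookkeeping can rescue your greedy. The leaf count controls neither the sizes of the pieces (they may come out as $P_2$s or single vertices where the bundle needs $P_3$s) nor even the number of vertices.

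Concretely, take the broom: a root $\rho$, a path $\rho - c - d$, and leaves $\ell_1,\dots,\ell_k$ ($k\geq 2$) all attached to $d$.
\begin{itemize}
\item It is a tree of height $3$ with $k$ leaves ($k+1$ if $\rho$ counts), but it has only $k+3$ vertices.
\item A $(0,0,k)$-bundle has $3k>k+3$ agents. Such bundles exist: take $k$ disjoint triples with cyclic preferences $x\colon y\succ z$, $y\colon z\succ x$, $z\colon x\succ y$, rank all outsiders below, and place each triple on a $P_3$ with $y$ in the middle.
\item So no complete arrangement exists at all. In fact every $P_3$ in the broom passes through $d$, so it does not even contain two disjoint $P_3$s.
\end{itemize}
The paper's own argument simply asserts that such a tree ``contains at least $r_2+r_3$ non-overlapping paths involving three vertices, plus the root vertex''. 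That is exactly the step your greedy was trying to supply, and it fails here. Your instinct to weaken the claim was right, but what is needed is a different hypothesis, not a cleverer decomposition. One sufficient hypothesis: $r_2+r_3$ root-to-leaf paths that are pairwise disjoint apart from the root, each containing at least three non-root vertices, with the root hosting the $P_1$ since $r_1\le 1$.

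On verification you are also right. The paper's claim that all conditions ``can be checked by an efficient exploration of the graph'' is wrong for the first one. When $r_1+2r_2+3r_3=|V|$ that condition is precisely Hamiltonian Path, and bundles exist for every $n$ by Theorem~\ref{thm:klmbundle}, so checking it is NP-hard. Neither of your fallbacks proves the stated claim, though:
\begin{itemize}
\item Checking a supplied path is a different, certificate-style statement.
\item Colour-coding costs $2^{O(n)}\,\mathrm{poly}(|V|)$ with $n$ as large as $|V|$, which is not polynomial.
\end{itemize}
The accurate summary is: the second and third conditions are linear-time checkable, the first is not (unless $\mathrm{P}=\mathrm{NP}$), and the third, although checkable, is not sufficient.
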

\begin{proof}
    If $G$ admits a path of length $r_1+2r_2+3r_3$ (i.e., the number of agents in our instance $n$), then we can arbitrarily join the path components in our bundle at the endpoints to arrive at a stable path of length $n$, which must be stable when mapped to the isomorphic $P_n$ subgraph in $G$ by the fact that every agent maintains their best neighbour from $B$ in $G$, as established in Theorem \ref{thm:addedges}.  
    
    Similarly, if $G$ admits $r_1+r_2+r_3$ connected components of size at least 3, then we can simply map all connected components in $B$ (i.e., $r_1+r_2+r_3$ paths involving at most 3 agents each) to arbitrary such components in $G$. Stability follows by the same reasoning as before.

    Finally, if $G$ is a tree of height at least 3 with at least $r_2+r_3$ leaves, then $G$ contains at least $r_2+r_3$ non-overlapping paths involving three vertices, plus the root vertex. Given that $r_1\in\{0,1\}$ as established in Theorem \ref{thm:klmbundle}, we can always map a $P_1$ in $B$ to the root vertex in the tree (if such a $P_1$ exists) and all $r_2$ $P_2$s and all $r_3$ $P_3$s to the disjoint paths in the tree. Stability follows from the same reasoning as before.

    All of these conditions can be checked by an efficient exploration of the graph.
\end{proof}

\subsection{Designing Seating Graphs}

Returning to our original motivating example of designing stable seating arrangements, e.g., at conference or wedding dinners, we can apply our framework based on $(r_1,r_2,r_3)$-bundles to find suitable solutions efficiently. Of course, seating graphs can take a variety of forms, depending on the forms and sizes of the tables available, physical constraints on the room in which tables are placed, and assumptions on the communication directions that agents seated at the tables follow. However, we will show that our results apply very generally by constructing stable seating paths that can then be joined arbitrarily (as long as they are derived by adding edges to the collection of paths) to arrive at, for example, rows of tables or round tables, or more graph-theoretic structures such as cliques or grids. Note that if any condition of Theorem \ref{thm:conditions} applies, e.g., if we have a target seating graph that admits a path $P$ of length $n$ (where $n$ is the number of agents looking to sit), then we can easily compute a complete stable arrangement in $O(n^2)$ time by arbitrarily joining together the components of a $(r_1,r_2,r_3)$-bundle $B$.

In Section \ref{sec:mintables} we consider the problem of arranging bundles to minimise the number of tables, subject to a given table size, while in Section \ref{sec:minseats} we consider the problem of arranging bundles to minimise the number of seats, subject to a given number of tables, before we finally give results for the hierarchical optimisation problems stated in the beginning of the paper in Section \ref{sec:minminseats}.

\subsubsection{Minimising Tables}
\label{sec:mintables}

Consider the following optimisation problem of finding a stable arrangement with a minimum number of tables. We specifically leave open the structure of each connected component in the definition here.

\begin{problemBox}
\textsc{$s$-SeatsMinTables} \\[4pt]
\textbf{Input:} A non-empty $(r_1,r_2,r_3)$-bundle $B$ and an integer $1\leq s\leq r_1+2r_2+3r_3$. \\[2pt]
\textbf{Output:} A seating graph $G=(V,E)$ and a complete stable arrangement $M$ of $B$ such that $G$ contains $t$ tables (connected components), each table contains $s$ seats (vertices) and $t$ is minimal.
\end{problemBox}

The nature of this problem is the following: we have a number of items (in the bundle), each of a size in $\{1,2,3\}$, and we have components of a fixed capacity $s$. Our objective is to find an assignment of the items to the components such that every item is assigned to a component, no component exceeds its capacity, and the number of components is minimal. This naturally corresponds to the optimisation problem corresponding to the following restricted bin-packing problem, which we will denote by {\sc $\{1,2,3\}$-BinPacking}.

\begin{problemBox}
\textsc{$\{1,2,3\}$-BinPacking} \\[4pt]
\textbf{Input:} Integers $r_1,r_2,r_3,s$. \\[2pt]
\textbf{Output:} A minimum integer $t$ of bins such that there exist a packing of $r_1$ items of size 1, $r_2$ items of size 2 and $r_3$ items of size 3 to $t$ bins with capacity $s$.
\end{problemBox}

\begin{proposition}
    Let $(B,s)$ be an instance of {\sc $s$-SeatsMinTables} consisting of a $(r_1,r_2,r_3)$-bundle $B$ and a positive integer $s$. Then a solution $(G,M)$ requiring $t_*$ tables is optimal if and only if {\sc $\{1,2,3\}$-BinPacking} instance $(r_1,r_2,r_3,s)$ has an optimal solution $t_*$.
\end{proposition}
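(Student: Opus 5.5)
We prove the statement by showing that feasible solutions of the two problems correspond to each other with the same value $t$. Then their optima coincide. Throughout we use the convention already adopted for arrangements of bundles, namely that bundle neighbours must remain neighbours. In particular, each $P_k$ component of $B$ must be mapped onto vertices inducing (not necessarily induced) a $P_k$ subgraph inside a single table.

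\emph{Packing to seating.} Take a packing of $r_1$ items of size 1, $r_2$ of size 2 and $r_3$ of size 3 into $t$ bins of capacity $s$. For each bin we build one table. We lay the bundle components assigned to that bin out consecutively and join consecutive components by an edge at their endpoints, which gives one path. If the bin is not full, we append empty vertices until the path has exactly $s$ vertices. Each table is then connected with $s$ seats, and every agent keeps its bundle neighbours. Since $G$ is obtained from $G_B$ plus isolated vertices by adding edges, Theorem \ref{thm:addedges} shows that the resulting arrangement $M$ is stable. Stability is unaffected by the isolated vertices added before the edges, because blocking pairs involve agents only. Hence a packing with $t$ bins yields a feasible solution of {\sc $s$-SeatsMinTables} with $t$ tables.

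\emph{Seating to packing.} Conversely, let $(G,M)$ be feasible with $t$ tables of $s$ seats each. Each bundle component is a connected $P_k$ whose neighbour relations are preserved, so its agents lie in a single connected component of $G$, that is, in one table. Assign each bundle component, as an item of size $k$, to the bin of its table. The map $M$ is injective, so the items in a bin occupy distinct seats, and their total size is at most $s$. This gives a packing into $t$ bins.

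Combining the two directions, the minimum number of tables equals the minimum number of bins, which proves both directions of the equivalence. The main obstacle is the second direction, because it relies on reading "complete stable arrangement of $B$" as requiring bundle adjacencies to be preserved. I would state this convention explicitly, citing Section \ref{sec:contributions} and Lemma \ref{lemma:subgraph}. Without it, an arbitrary stable arrangement could split a bundle component across tables, and the correspondence to bin packing would fail. A minor point is that $s \ge 1$ together with $B$ non-empty guarantees $t \ge 1$. I would also check the case $s<3$ with $r_3>0$: then no packing exists and no seating exists, so the correspondence still holds and both problems are infeasible.
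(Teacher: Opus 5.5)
Your proof is correct. It is essentially the paper's own argument made explicit: the paper gives no separate proof, only the item/bin correspondence paragraph before the proposition, and the packing-to-seating direction is later carried out by Algorithm \ref{alg:mintables}, which joins components into paths, pads them with empty seats and invokes Theorem \ref{thm:addedges}. Your seating-to-packing direction relies on the same neighbour-preservation convention the paper uses in Lemma \ref{lemma:subgraph}, and you are right to state it explicitly, since it is exactly what makes the lower bound on the number of tables hold.
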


Of course, the general bin packing problem is famously (strongly) NP-hard in general \cite{Garey1979}, but it is known that for a fixed number of item sizes, the problem is polynomial-time solvable (although with a doubly-exponential runtime in the number of different sizes) \cite{binpacking}. While our restricted version appears to be simple at first, it is easy to show that common heuristics such as {\tt FFD} (first-fit-decreasing) \cite{Garey1979}, which considers items in decreasing order and places them in the first bin that has sufficient capacity for them or starts a new one if no bin can accommodate them, can fail. Consider $(r_1,r_2,r_3)=(0,4,2)$ and $s=7$, then {\tt FFD} will place two items of size 3 in one bin, three items of size 2 into a second bin, and the remaining item of size 2 into the last bin. This leaves the first and second bins with free capacity 1 and the third bin with free capacity 5. An optimal packing, however, would pack one item of size 3 and two items of size 2 in each of two bins, leaving no free capacity. 

We will now present a fast (and much simpler than the approach in \cite{binpacking}) dynamic-programming algorithm for {\sc$\{1,2,3\}$-BinPacking} (for the special case that $r_1\in \{0,1\}$ as we observed in Theorem \ref{thm:klmbundle}) that will turn out to be strongly polynomial for {\sc $s$-SeatsMinTables} (due to the input encoding, as we will elaborate on later). Algorithm \ref{alg:binpackingdp} works as follows: we have a \emph{state} triple $(r_1, r_2, r_3, t) \in \mathbb{Z}^4$, representing the counts of remaining items of sizes 1, 2, and 3 and the number of bins $t$ already used. We also generate all possible valid bin configurations $C$, where $(c_1,c_2,c_3)\in C$ is a configuration in which the corresponding bin contains $c_1$ items of size 1, $c_2$ items of size 2, and $c_3$ items of size 3. Then, the algorithm performs a breadth-first search through the possible states starting from the initial state $(r_1,r_2,r_3,0)$, and explores all valid bin configurations whose total size is at most $s$. Each bin configuration will decrease at least one type of remaining items, and every configuration that does not decrease any number of remaining items below zero is valid at this point. For each such valid bin configuration, we add a new state to the queue to explore, in which we deduct the number of items that can be contained in this bin and add a new used bin to the state. The first time the state $(0, 0, 0,t)$ is reached (for some $t\geq 1$), in which case the breadth-first search found a path of state transitions that corresponds to a collection of bin configurations that packs every item, the number of $t$ bins used is guaranteed to be minimal. During the algorithm's execution, we also keep track of the trace $T$ to later reconstruct the bins.

\begin{algorithm}[!htb]
\renewcommand{\algorithmicrequire}{\textbf{Input:}}
\renewcommand{\algorithmicensure}{\textbf{Output:}}

\begin{algorithmic}[1]
\Require{$r_1,r_2,r_3$ : number of items of sizes $1$, $2$, and $3$ respectively; $s$ : bin capacity}
\Ensure{$t$ : number of bins; $T$ : trace to reconstruct configurations}

\State $Q \gets \{(r_1,r_2,r_3, 0)\}$ \Comment{Initialise Queue}
\State $V \gets \emptyset$ \Comment{Initialise set of visited configurations}
\State $T\gets (\mathbb{Z}^3\rightarrow (\mathbb{Z}^3,\mathbb{Z}^3))$ \Comment{Initialise an empty mapping for the reconstruction of bins}

\State$C \gets \{(c_1, c_2, c_3) \mid (c_1, c_2, c_3 \geq 0)\land (c_1\leq 1)\land (c_1 + 2c_2 + 3c_3 \leq s)\land (c_1 + c_2 + c_3 > 0)\}$ \Comment{Compute all valid bin configurations}

\While{$Q$ is not empty}
    \State $(r_1,r_2,r_3, t) \gets Q$.pop()
    \If{$(r_1,r_2,r_3) = (0, 0, 0)$} \Comment{We have arrived at a solution}
        \State \Return{$t,T$}
    \EndIf
    \If{$(r_1,r_2,r_3) \in V$} \Comment{This state has already been explored so it can be skipped}
        \State \textbf{continue}
    \EndIf
    \State $V$.add$((r_1,r_2,r_3))$
    \For{$(c_1, c_2, c_3) \in C$}
        \If{$r_1 \geq c_1$ and $r_2 \geq c_2$ and $r_3 \geq c_3$}
            \State $r_1' \gets r_1 - c_1$
            \State $r_2' \gets r_2 - c_2$
            \State $r_3' \gets r_3 - c_3$
            \If{$(r_1', r_2', r_3') \notin V$}
                \State $T[(r_1', r_2', r_3')] \gets ((r_1,r_2,r_3), (c_1, c_2, c_3))$ \Comment{Track state and bin configurations}
                \State $Q$.enqueue$((r_1', r_2', r_3', t+1))$
            \EndIf
        \EndIf
    \EndFor
\EndWhile

\end{algorithmic}
\caption{{\tt DP} Dynamic programming algorithm for {\sc$\{1,2,3\}$-BinPacking}}
\label{alg:binpackingdp}
\end{algorithm}

\begin{lemma}
\label{lemma:binpackingdb}
    Let $I=(r_1,r_2,r_3,s)$ be a {\sc$\{1,2,3\}$-BinPacking} instance where $r_1\in\{0,1\}$ and let $t_*$ be the number of bins returned by Algorithm \ref{alg:binpackingdp}. Then $t_*$ is an optimal solution to $I$ and Algorithm \ref{alg:binpackingdp} runs in $O(s^2n^2)$ time.
\end{lemma}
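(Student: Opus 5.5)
Correctness and running time are argued separately.

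For optimality, view Algorithm \ref{alg:binpackingdp} as a breadth-first search on a directed state graph. The vertices are triples $(x_1,x_2,x_3)$ with $0\le x_i\le r_i$. There is an arc from $(x_1,x_2,x_3)$ to $(x_1-c_1,x_2-c_2,x_3-c_3)$ for every configuration $(c_1,c_2,c_3)\in C$ that is componentwise at most the current triple.

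We first establish a correspondence between packings and paths. Every packing of the items into $t$ bins of capacity $s$ gives a sequence of $t$ non-empty bin configurations. We may assume all bins are non-empty, since deleting empty bins only decreases $t$. Each configuration $(c_1,c_2,c_3)$ satisfies $c_1+2c_2+3c_3\le s$, and also $c_1\le 1$ because $r_1\in\{0,1\}$; this is the only place the hypothesis on $r_1$ enters. So each configuration lies in $C$, and the packing is a directed walk of length $t$ from $(r_1,r_2,r_3)$ to $(0,0,0)$. Conversely, every such walk, read off through the trace $T$, is a valid packing with $t$ bins. Hence the optimum $t_*$ equals the shortest-path distance from $(r_1,r_2,r_3)$ to $(0,0,0)$.

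Next, all arcs have unit weight, and the queue stores states together with their depth $t$. The standard BFS invariant therefore applies: states are dequeued in nondecreasing order of $t$, and the first time a state is dequeued its $t$ equals its distance. Skipping states already in $V$ does not change this, because a later copy of a state never has a smaller $t$. So the first dequeued copy of $(0,0,0)$ carries $t=t_*$. The search terminates with a solution, since a packing always exists: put each item in its own bin, using $s\ge 3$ or at least $s\ge$ the largest item present, as guaranteed by the input constraint $s\ge1$ together with the component sizes. I expect a little care about the degenerate case where $s$ is smaller than some item size; there the instance is infeasible and $C$ simply lacks the corresponding singleton configuration.

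For the running time, count states and configurations. There are at most $(r_1+1)(r_2+1)(r_3+1)\le 2(n/2+1)(n/3+1)=O(n^2)$ states, using $r_1\le1$ and $2r_2+3r_3\le n$. There are at most $2\cdot (s/2+1)(s/3+1)=O(s^2)$ configurations, because $c_1\le1$ and $c_2,c_3$ are bounded by $s/2$ and $s/3$.

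Each state is expanded at most once, thanks to the visited set $V$, and expansion costs $O(|C|)=O(s^2)$ with constant-time hash or array lookups. The total time is therefore $O(s^2n^2)$. Generating $C$ costs $O(s^2)$, and the number of queue insertions is bounded by the same total.

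The main obstacle is the state bound: $r_1\le1$ and $2r_2+3r_3\le n$ are exactly what keep the state space quadratic rather than cubic. I would also remark that duplicate enqueued copies of a state are discarded in $O(1)$ time, so they do not affect this bound.
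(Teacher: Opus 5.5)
Your proposal is correct and follows essentially the same route as the paper. Correctness comes from breadth-first search over remaining-item states, and you make explicit what the paper only sketches as ``optimal substructure'': the packing/walk correspondence, the use of $r_1\le 1$ to justify restricting $C$ to $c_1\le 1$, and the BFS invariant in the presence of duplicate queue entries. The running time is $O(n^2)$ states times $O(s^2)$ configurations.

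The one loose remark is your feasibility justification, since $s\ge 1$ does not force $s\ge 3$. As you note, though, when some item exceeds $s$ the queue simply empties without returning, and the lemma's phrase ``the number of bins returned'' already presupposes a feasible instance, so nothing is lost.
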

\begin{proof}
    Each transition corresponds to adding a new bin configuration and removing the respective item sizes from the currently remaining items state. We also explore all possible such transitions from every state that we consider, and we consider states in a breadth-first search manner. Thus, for any non-initial reachable state $(r_1, r_2, r_3,t)$, the minimal number of bins to reach it from $(r_1,r_2,r_3,0)$ is equal to 1 plus the minimal number of bins needed to reach $(r_1-c_1, r_2-c_2, r_3-c_3,t-1)$ using some valid bin configuration $(c_1,c_2,c_3)$. Given this optimal substructure of the problem, the number of bins for a state is computed by minimising over all valid one-bin transitions from predecessor states. Here, the minimisation is implicit: at every step the algorithm considers every possible valid bin, and it terminates as soon as it encounters a sequence of bins that leaves no item unpacked. Of course, in general, the algorithm explores many suboptimal packings along the way. Now, since each transition reduces the total number of remaining items and the number of states is finite, the dynamic program must terminate.

    Note that $C$ contains $O(c_1^*\times c_2^*\times c_3^*)$ possible configurations, where $c_1^*$, $c_1^*$ and $c_1^*$ maximise the set of equations from line 4 of the algorithm, i.e., $c_1^*\leq 1$, $c_1^*+2c_2^*+3c_3^*\leq s$ and $c_1^*+c_2^*+c_3^*>0$. Hence, $c_1^*= 1$, $c_2^*= \tfrac{s}{2}$ and $c_3^*= \tfrac{s}{3}$, therefore, asymptotically, $C$ contains $O(s^2)$ configurations. Similarly, we have $O(r_1\times r_2\times r_3)$ possible states, but given that $r_1\in\{0,1\}$, $r_2\leq\tfrac{n}{2}$ and $r_3\leq\tfrac{n}{3}$ (by $n=r_1+2r_2+3r_3$), this is bounded by $O(n^2)$. Clearly, the dynamic program explores every state at most once in the while loop (due to the condition in line 10), and for every state we explore all possible valid bin combinations (due to the for loop in line 14), yielding an overall time complexity of $O(s^2n^2)$ as required.
\end{proof}

Now for {\sc $s$-SeatsMinTables}, we actually need the exact bin configurations used in an optimal solution, not just the number of bins. This is where the trace $T$ comes in that Algorithm \ref{alg:binpackingdp} creates and returns. Algorithm \ref{alg:reconstructbins} reconstructs the bin configurations as follows: an initial empty collection of bin configurations is constructed, and we start with the final state of remaining items $(0,0,0)$. To see which bin configuration was used to reach the final state, we query $T$, which in turn returns the previous state and the respective bin configuration, which is added to our collection of bin configurations used. This process continues until we arrive at the initial state of the dynamic program containing $r_1,r_2,r_3$, in which case we have retraced a shortest path of bin configurations.

\begin{algorithm}[!htb]
{\small
\renewcommand{\algorithmicrequire}{\textbf{Input:}}
\renewcommand{\algorithmicensure}{\textbf{Output:}}

\begin{algorithmic}[1]
\Require{$T$ : Trace; $(r_1,r_2,r_3)$ : initial state}
\Ensure{$P$ : packing, i.e., all used bin configurations}

\State $P \gets [\;]$ \Comment{Keep track of all bin configurations used}
\State $q \gets (0, 0, 0)$ \Comment{Start from the final state}

\While{$q \neq (r_1,r_2,r_3)$}
    \State $(q_{\text{prev}}, bin) \gets T[q]$
    \State $P$.add$(bin)$
    \State $q \gets q_{\text{prev}}$
\EndWhile

\State \Return{$P$}
\end{algorithmic}
\caption{{\tt Reconstruct} Reconstructs bin configurations from dynamic-program trace}
\label{alg:reconstructbins}}
\end{algorithm}

\begin{lemma}
    \label{lemma:reconstruct}
    Algorithm \ref{alg:reconstructbins} reconstructs the bin configurations of a minimum number of bins $t_*$ in $O(t_*)$ time.
\end{lemma}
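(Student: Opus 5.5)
The proof splits into a correctness part and a running-time part, and both rest on how the trace $T$ is populated in Algorithm \ref{alg:binpackingdp}.

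\textbf{Invariant.} Whenever an entry $T[q'] = (q, c)$ is written, we have $q' = q - c$ componentwise. Moreover, $c$ is a valid configuration in $C$ with at least one item, and $q$ is a state already marked visited and popped from the queue. I would also record the BFS depth of $q$, namely the value $t$ stored with it.

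\textbf{Correctness.} The trace entry for $(0,0,0)$ is written, or last overwritten, just before the enqueue whose pop terminates the algorithm. I would argue by induction along the chain that every state $q$ on it satisfies two properties:
\begin{itemize}
  \item $T[q]$ points to a predecessor $q_{\text{prev}}$ of BFS depth exactly one less than that of $q$, and
  \item $q_{\text{prev}}$ differs from $q$ by the recorded bin.
\end{itemize}
Since depths strictly decrease and the only depth-0 state is the initial state $(r_1,r_2,r_3)$, the while loop of Algorithm \ref{alg:reconstructbins} terminates after exactly $t_*$ iterations. The $t_*$ collected configurations then sum componentwise to $(r_1,r_2,r_3)$, because the chain telescopes from $(r_1,r_2,r_3)$ down to $(0,0,0)$. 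Each configuration has total size at most $s$ by the definition of $C$. So $P$ is a valid packing into $t_*$ bins, and $t_*$ is optimal by Lemma \ref{lemma:binpackingdb}.

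\textbf{Main obstacle.} The difficulty is that $T[q']$ may be overwritten by a later, deeper predecessor. Line 18 only checks that $q' \notin V$, not that $q'$ has never been enqueued. If that happened, the pointer from $(0,0,0)$ could in principle lead through a longer chain. I would resolve this by strengthening the invariant: every state is first enqueued at its minimal depth, since BFS pops states in nondecreasing depth order. Any overwrite of $T[q']$ occurs before $q'$ is popped, and is made by a popped state of depth equal to the minimal depth of $q'$ minus one, or possibly deeper. For the goal state, $T[(0,0,0)]$ is never overwritten after the pop that returns, but it may have been overwritten before. Here I would use that the queue is FIFO: at the moment the first copy of $(0,0,0)$ at depth $t_*$ is popped, every state popped so far has depth at most $t_*-1$. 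Hence every writer of $T[(0,0,0)]$ had depth at most $t_*-1$, and by minimality of $t_*$ its depth is exactly $t_*-1$. I would apply the same argument recursively to each predecessor on the chain. If this turns out to be delicate, the cleanest fallback is the harmless modification of writing $T[q']$ only on first discovery. This does not affect Lemma \ref{lemma:binpackingdb}.

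\textbf{Running time.} Each loop iteration performs one lookup in $T$, one append and one assignment, each in $O(1)$ with a hash map or array indexed by states. There are exactly $t_*$ iterations, so the total time is $O(t_*)$.
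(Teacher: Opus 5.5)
You have found exactly the point the paper's own proof passes over. That proof simply asserts that $T$ records ``the exact sequence of states'', but the trace update in Algorithm \ref{alg:binpackingdp} is guarded only by $q'\notin V$, so entries of $T$ get overwritten.

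Your resolution fails at the FIFO step. When the first copy of $(0,0,0)$, with label $t_*$, is popped, any label-$t_*$ states queued ahead of it have already been popped and expanded. Since $(0,0,0)$ is never added to $V$, every such state that fits into one bin overwrites $T[(0,0,0)]$ with a predecessor of depth $t_*$, not $t_*-1$. For the same reason, your claim that the last write to $T[(0,0,0)]$ accompanies the enqueue of the terminating entry is false. The recursive step fails too: a state of the same depth as $q$, expanded before $q$'s first pop, can overwrite $T[q]$.

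Concretely, take $(r_1,r_2,r_3)=(0,2,0)$ and $s=4$, and iterate $C$ so that $(0,1,0)$ precedes $(0,2,0)$.
Expanding $(0,2,0)$ enqueues $(0,1,0)$ and then $(0,0,0)$, both with label $1$.
Expanding $(0,1,0)$ then sets $T[(0,0,0)]=((0,1,0),(0,1,0))$.
Next $(0,0,0)$ is popped and $t=1$ is returned, yet Algorithm \ref{alg:reconstructbins} retraces $(0,0,0)\leftarrow(0,1,0)\leftarrow(0,2,0)$ and outputs two bins.
For $(0,r_2,0)$ with $s=2r_2$ and $C$ iterated by increasing size, it outputs $r_2$ bins, so the $O(t_*)$ bound fails as well.
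Thus the lemma is false for the algorithms as literally written, under some iteration orders of $C$. No argument about the unmodified procedure can work.

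Your fallback is therefore not a contingency but the necessary correction. Write $T[q']$ only the first time $q'$ is generated, e.g.\ only if $T[q']$ is still undefined. The enqueue behaviour, and hence Lemma \ref{lemma:binpackingdb}, is unchanged. With that change your invariant holds, for three reasons:
entries are enqueued with nondecreasing labels;
every label of $q'$ is at least its BFS distance $\delta(q')$;
and a label-$\delta(q')$ copy is generated when a distance-$(\delta(q')-1)$ predecessor is expanded.
So the first generation of $q'$ comes from a state expanded at its first pop with label and distance $\delta(q')-1$. Hence the chain from $(0,0,0)$ runs through distances $t_*,t_*-1,\dots,0$ and reaches the initial state after exactly $t_*$ steps. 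It telescopes to $(r_1,r_2,r_3)$ and yields the $O(t_*)$ running time. State this modification up front and prove the lemma for the modified algorithm.
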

\begin{proof}
    By construction, the collection of bin configurations returned can only contain valid bin configurations when Algorithm \ref{alg:reconstructbins} is executed on the output of the dynamic program (because the dynamic program only constructs valid bin configurations). Furthermore, given that we trace the exact sequence of states and their transitions, the set $P$ returned by Algorithm \ref{alg:reconstructbins} contains all necessary and specifically a minimum number of bins (by the correctness of Algorithm \ref{alg:binpackingdp}). The efficiency follows from the fact that the while loop is executed $O(t_*)$ times (because we strictly increase our current state and precisely trace the minimum sequence of transitions), and we assume that querying $T$ and adding to the collection $P$ can be performed in constant time. 
\end{proof}

Now, we can use these two new algorithms to actually compute a solution for the {\sc $s$-SeatsMinTables} problem as we set out to do. Algorithm \ref{alg:mintables} does this: it executes the dynamic program and reconstructs the bin configurations using {\tt DP} and {\tt Reconstruct}, respectively. Then, in lines 3-5, it generates pointers to the $P_1,P_2$ and $P_3$ components in the bundle graph $G_B$ and, in line 6, iterates through all bin configurations of a solution with a minimum number of bins used. For each such configuration $(c_1,c_2,c_3)$, the algorithm joins together $c_1P_1$s with $c_2P_2$s and $c_3P_3$s into one path of length $c_1+2c_2+3c_3$ in $G_B$, while $M_B$ remains unchanged ($M_B$ remains stable due to Theorem \ref{thm:addedges}). If the size of the bin configuration is less than the table size $s$, then, in the for-loop starting in line 29, empty seats (i.e., vertices that no agent is mapped to) are added and connected to the component such that all components have exactly $s$ seats, i.e., each component is a path of length $s$.

\begin{algorithm}[!htb]
{\small
\renewcommand{\algorithmicrequire}{\textbf{Input:}}
\renewcommand{\algorithmicensure}{\textbf{Output:}}

\begin{algorithmic}[1]
\Require{$B=(G_B,M_B)$ : a $(r_1,r_2,r_3)$-bundle, $s$ : number of seats per table}
\Ensure{$G$ : a seating graph}

    \State $t,T \gets$ {\tt DP}$((r_1,r_2,r_3),s)$
    \State $P \gets$ {\tt Reconstruct}$(T,(r_1,r_2,r_3))$

    \State $p_1 \gets [$pointer to $P_1$ in $G_B]$
    \State $p_2 \gets [$pointers to $P_2$ in $G_B]$
    \State $p_3 \gets [$pointers to $P_3$ in $G_B]$
    
    \For{$(c_1,c_2,c_3) \in P$}
        \State $v\gets$ null pointer
        \For{$1\leq i\leq c_1$}
            \State $v\gets p_1$.pop$()[x]$ \Comment{Returns pointer $p_1$ to a $P_1:x$ in $G_B$}
        \EndFor
        \For{$1\leq i\leq c_2$}
            \State $p\gets p_2$.pop() \Comment{Returns pointer $p_2$ to a $P_2:x-y$ in $G_B$}
            \If{$v$ is null}
                \State $v\gets p[x]$
            \Else
                \State $G_B$.addEdge$(\{v,p[x]\})$
                \State $v\gets p[y]$
            \EndIf
        \EndFor
        \For{$1\leq i\leq c_3$}
            \State $p\gets p_3$.pop() \Comment{Returns pointer $p_3$ to a $P_3:x-y-z$ in $G_B$}
            \If{$v$ is null}
                \State $v\gets p[x]$ 
            \Else
                \State $G_B$.addEdge$(\{v,p[x]\})$
                \State $v\gets p[z]$
            \EndIf
        \EndFor
        \For{$1\leq i\leq s-(c_1+2c_2+3c_3)$} \Comment{Fill connected components with empty seats}
            \State $G_B$.addVertex$(x_i)$
            \State $G_B$.addEdge$(\{v,x_i\})$
            \State $v\gets x_i$
        \EndFor
    \EndFor

    \State \Return{$G_B$}
\end{algorithmic}
\caption{{\tt MinTables} Computes a seating graph and arrangement using a minimum number of tables}
\label{alg:mintables}}
\end{algorithm}

Thus, we can finally give the following complexity result for our studied problem.

\begin{theorem}
\label{thm:mintables}
    {\sc $s$-SeatsMinTables} can be solved in $O(s^2n^2)$ time using Algorithm \ref{alg:mintables}, where $n$ is the number of agents. Furthermore, all tables are paths, but any set of edges can be added to transform the tables into other graph structures.
\end{theorem}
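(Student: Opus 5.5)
The proof will rest on Algorithm \ref{alg:mintables} and the results already established for its subroutines. It has four parts, taken in this order.

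\textbf{Optimality.} An optimal {\sc $s$-SeatsMinTables} solution with $t$ tables corresponds to a packing of the bundle items into $t$ bins of capacity $s$. To make this explicit, we use that neighbours in $B$ are preserved: each table has $s$ seats, and every $P_1$, $P_2$ and $P_3$ component of $G_B$ must lie inside a single connected component. The tables therefore define a feasible {\sc $\{1,2,3\}$-BinPacking} packing of $(r_1,r_2,r_3,s)$. Conversely, any packing yields a seating graph, as Algorithm \ref{alg:mintables} shows. By the preceding proposition the minimum number of tables is exactly the bin-packing optimum $t_*$. Theorem \ref{thm:klmbundle} gives $r_1\in\{0,1\}$, so Lemma \ref{lemma:binpackingdb} applies, and {\tt DP} returns $t_*$. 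Lemma \ref{lemma:reconstruct} then shows that {\tt Reconstruct} yields $t_*$ valid configurations covering every item exactly once. The state differences along the trace sum to $(r_1,r_2,r_3)$, so the pops from $p_1,p_2,p_3$ never run out and every component is used.

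\textbf{Correctness of the construction.} For each configuration $(c_1,c_2,c_3)$, the loops in Algorithm \ref{alg:mintables} join the chosen components endpoint-to-endpoint into a single path on $c_1+2c_2+3c_3\le s$ vertices. The last loop appends $s-(c_1+2c_2+3c_3)$ empty vertices, giving a path on exactly $s$ seats. The resulting $G$ has $t_*$ components, each a $P_s$, and $M_B$ is unchanged, so it is complete. Since $G$ arises from $G_B$ only by adding vertices and edges, stability follows from Theorem \ref{thm:addedges} applied edge by edge. Adding isolated unoccupied vertices clearly creates no blocking pair, since only agents can block. The same theorem gives the final claim of the statement: any further edges, for example closing each path into a cycle or completing it into a clique or grid, preserve stability.

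\textbf{Running time and the main obstacle.} {\tt DP} costs $O(s^2n^2)$ and {\tt Reconstruct} costs $O(t_*)$ with $t_*\le n$. The joining phase does constant work per component plus one step per empty seat, which totals $O(n+t_*s)$. The step needing the most care is bounding the empty-seat term by $O(s^2n^2)$. I would argue $t_*\le n$ because each table holds at least one agent. Hence $t_*s\le ns$, which is dominated by $s^2n^2$. Constructing $C$ takes $O(s^2)$. The total is therefore $O(s^2n^2)$.
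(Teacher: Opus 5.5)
Your proposal is correct and follows essentially the same route as the paper: optimality via the bin-packing correspondence and Lemmas \ref{lemma:binpackingdb}--\ref{lemma:reconstruct}, path structure and stability (including the freedom to add edges) via Theorem \ref{thm:addedges}, and a running time dominated by the $O(s^2n^2)$ dynamic program. You are slightly more careful in a few places than the paper. You use neighbour preservation explicitly to justify the packing correspondence, you observe that added unoccupied vertices cannot create blocking pairs, and you bound the empty-seat work by $t_*s\le ns$ rather than by $O(t_*n)$.
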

\begin{proof}
    By construction, and as established in Lemmas \ref{lemma:binpackingdb}-\ref{lemma:reconstruct}, Algorithm \ref{alg:mintables} uses a minimum number of tables such that each table has size at most $s$. Furthermore, the algorithm connects the $P_i$ components of the given instance bundle $B$ in such a way that paths are created, and the for loop in line 29 ensures that every component (table) is exactly of size $s$ (potentially containing empty seats). Clearly, the arrangement we return is stable because every agent is still connected to their best neighbour in the original (given) stable arrangement $M_B$. The fact that any set of edges can be added is due to Theorem \ref{thm:addedges}. The complexity comes from the complexities $O(s^2n^2)$ and $O(t_*)$ (where $t_*$ is the minimum number of tables) for Algorithms \ref{alg:binpackingdp}-\ref{alg:reconstructbins} (as established in Lemmas \ref{lemma:binpackingdb}-\ref{lemma:reconstruct}), and clearly $t_*\leq n$. Furthermore, in Algorithm \ref{alg:mintables}, the generation of the pointer sets can be implemented in $O(n)$ time, as there are at most $n$ vertices in $G_B$, so it suffices to scan $G_B$ once while keeping track of the connected component sizes. Algorithm \ref{alg:mintables} also has an outer for loop with $t_*$ iterations and multiple sequential inner for loops that run for $O(n)$ iterations, adding an additional complexity of $O(t_*n)$, which is strictly less than the complexity of Algorithm \ref{alg:binpackingdp}. Thus, the algorithm runs in $O(s^2n^2)$ overall as stated.
\end{proof}

Notice that the input of {\sc$\{1,2,3\}$-BinPacking} is given in the form of integers $r_1,r_2,r_3,s$, which can be encoded in $O(\log(r_1+r_2+r_3+s))$ space (in binary). Thus, the upper bound on the complexity of Algorithm \ref{alg:binpackingdp} is exponential in the input size. However, because we assume in the definition of {\sc $s$-SeatsMinTables} that we are given the bundle $B$ explicitly, and $B$ contains the graph $G_B$ of $r_1+r_2+r_3$ components, our input is encoded in $O(r_1+r_2+r_3+\log(s))$ space, i.e., linear in the number of agents $n=\Omega(r_2+r_3)$ (recall that $r_1\in \{0,1\}$ by Theorem \ref{thm:klmbundle}), rendering our algorithm strongly polynomial in the input size.

\subsubsection{Minimising Seats}
\label{sec:minseats}

We now consider a similar problem, but with the following change: instead of minimising the number of tables subject to a given number of seats at each table, we now aim to minimise the number of seats at each table given a number of tables.

\begin{problemBox}
\textsc{$t$-TablesMinSeats} \\[4pt]
\textbf{Input:} A non-empty $(r_1,r_2,r_3)$-bundle $B$ and an integer $1\leq t\leq r_1+2r_2+3r_3$. \\[2pt]
\textbf{Output:} A seating graph $G=(V,E)$ and a complete stable arrangement $M$ such that $G$ contains $t$ tables (connected components), each table contains $s$ seats (vertices) and $s$ is minimal.
\end{problemBox}

It turns out that this problem can be solved easily using Algorithm \ref{alg:stablesminseats}: starting from an initial table size of $r$ (where $r=\max\{i:1\leq i\leq 3 \land r_i>0\}$ in the graph $G_B$ of instance bundle $B$), we compute the minimum number of tables $t_*$ that are required in a seating arrangement in which each table has $r$ seats using Algorithm \ref{alg:mintables}. If the number of tables required is at most our upper bound on the allowed number of tables $t$ given in our problem instance, then we can add $t-t_*$ tables (if $t_*<t$) as paths on new vertices, and return the seating graph and seating arrangement. Otherwise, we increase the table size to $r+1$ and try again. This continues until we have found a seating graph that satisfies our constraints. Again, notice that the stable arrangement $M_B$ of the instance bundle $B$ remains stable as we only add edges, as established in Theorem \ref{thm:addedges}.

\begin{algorithm}[!htb]
{\small
\renewcommand{\algorithmicrequire}{\textbf{Input:}}
\renewcommand{\algorithmicensure}{\textbf{Output:}}

    \begin{algorithmic}[1]

    \Require{$B=(G_B,M_B)$ : a $(r_1,r_2,r_3)$-bundle; $t$ : an integer}
    \Ensure{$G$ : a seating graph}

    \State $r\gets \max\{i:1\leq i\leq 3 \land r_i>0\}$
    \For{$r\leq s\leq r_1+2r_2+3r_3$}
        \State $G\gets$ {\tt MinTables$(s)$}
        \State $p\gets\#$ connected components in $G$ 
        \If{$p\leq t$}
            \For{$1\leq i\leq t-p$}
                \State $G$.add$(P_s)$ \Comment{Adds a path $P_s$ on $s$ new vertices to $G$}
            \EndFor
            \State\Return{$G$}
        \EndIf
    \EndFor
    
    \end{algorithmic}
    \caption{Constructs a seating graph and arrangement with a minimum number of seats}
    \label{alg:stablesminseats}}
\end{algorithm}

\begin{theorem}
\label{thm:minseats}
    {\sc $t$-TablesMinSeats} can be solved in $O(s_{*}^3n^2)$ time using Algorithm \ref{alg:stablesminseats}, where $n$ is the number of agents and $s_{*}$ is the minimum number of seats. Furthermore, all tables are paths, but any set of edges can be added to transform the tables into other graph structures.
\end{theorem}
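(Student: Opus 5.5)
We will argue correctness of Algorithm \ref{alg:stablesminseats} in three steps: feasibility and stability of the returned graph, minimality of the returned table size, and the running time bound.

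\emph{Feasibility and stability.} Suppose the loop returns at table size $s$. Then {\tt MinTables}$(s)$ produced a seating graph $G$ with $p\le t$ path components, each with exactly $s$ seats. By Theorem \ref{thm:mintables} the arrangement $M_B$ on $G$ is complete and stable. Appending $t-p$ further paths $P_s$ on new, unoccupied vertices changes no agent's neighbourhood, so no blocking pair arises. The result therefore has exactly $t$ tables of $s$ seats each. The loop always returns: at $s=r_1+2r_2+3r_3=n$ all components fit into one table, so $p=1\le t$. By the same argument as in Theorem \ref{thm:mintables}, the tables are paths, and Theorem \ref{thm:addedges} lets us add any further edges.

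\emph{Minimality.} This is the main step. We need that no table size $s'<s$ admits a valid solution within our framework. A solution preserves the bundle neighbourhoods, so every bundle component must lie inside a single table. Hence $s'\ge r$, where $r$ is the size of the largest bundle component, which justifies starting the loop at $r$. For $s'\ge r$, the existence of a solution with at most $t$ tables of size $s'$ is equivalent to a packing of the components into at most $t$ bins of capacity $s'$. The proposition relating {\sc $s$-SeatsMinTables} to {\sc $\{1,2,3\}$-BinPacking}, together with Lemma \ref{lemma:binpackingdb}, shows that {\tt MinTables}$(s')$ returns the minimum number of bins. So if it returned $p>t$ for $s'$, no valid solution with $t$ tables exists at that size. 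A feasible solution may use exactly $t$ tables, since padding with empty tables is allowed. The first $s$ at which $p\le t$ is therefore minimal. The delicate point is to state minimality relative to arrangements that keep bundle neighbours together, which is the standing assumption of the paper. I would make this explicit rather than try to compare against arbitrary stable arrangements.

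\emph{Running time.} The loop runs over $s=r,\dots,s_*$, which is $O(s_*)$ iterations. By Theorem \ref{thm:mintables}, each iteration costs $O(s^2n^2)\subseteq O(s_*^2n^2)$. Counting components and adding $t-p$ paths costs $O(ts_*)$. Since $t$ is at most $n$ in any meaningful instance, this is dominated by the other terms, or we can cap the padding by reporting the count. Summing gives $\sum_{s\le s_*}s^2n^2=O(s_*^3n^2)$.
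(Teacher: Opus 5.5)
Your proposal is correct and follows the paper's proof. Like the paper, you get stability and the path and edge-addition claims from Theorem \ref{thm:mintables} and Theorem \ref{thm:addedges}, minimality from the bottom-up search starting at $r$, and the running time by summing the $O(s^2n^2)$ cost of {\tt MinTables} over $s\le s_*$, with the padding cost dominated because $t\le n$ (which the problem definition gives outright, not just ``in any meaningful instance''). Your explicit termination argument at $s=n$ and your bin-packing explanation of why a failed iteration rules out that table size are somewhat more careful than the paper's one-line ``we would have found it at an earlier iteration,'' but the route is the same.
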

\begin{proof}
    Clearly, the seating graph and seating arrangement returned by the algorithm satisfy our constraints (i.e., stability and number of tables used) due to the correctness of Algorithm \ref{alg:mintables} established in Theorem \ref{thm:mintables}. Furthermore, the number of seats $s_*$ at each table is minimal because if there existed a valid seating graph with tables of size smaller than $s_*$, then we would have found it at an earlier iteration. We must start the search with $s\geq r$ because if there is a $P_3$ in our instance bundle graph $G_B$ then it cannot be mapped to a component with fewer than 3 vertices, and similarly for $P_2$. Because we assume a non-empty bundle, we must start with $r\geq 1$. Regarding the complexity of the algorithm, we established in Theorem \ref{alg:mintables} that Algorithm \ref{alg:mintables} ({\tt MinTables}) runs in $O(s^2n^2)$ time (where $s$ is the number of seats) and we need to execute this algorithm exactly $s_*$ times (with parameter $s$ increasing from 1 with each iteration, i.e., $s\in[1,2,\dots,s_*]$), giving the stated time complexity. The inner for loop as it is executed only once (due to the subsequent return statement) and runs for $O(t)$ iterations, each simply adding fixed-size components of size $s_*$ to the graph (which takes $O(s_*t)$ time). However, we assume in the problem definition that $t\leq r_1+2r_2+3r_3=n$, i.e., the inner for loop takes $O(s_*n)$ time and is executed exactly once, requiring strictly less time than our stated complexity $O(s_{*}^3n^2)$.

    The path property of our solution was established in Theorem \ref{alg:mintables}, and the inner for loop of Algorithm \ref{alg:stablesminseats} only ever adds additional paths on new vertices. The fact that any set of edges can be added without impacting the stability of $M_B$ on $G$ again follows from Theorem \ref{thm:addedges}.
\end{proof}

\subsubsection{Min-Min Optimisations}
\label{sec:minminseats}

Recall that we already defined {\sc MinTablesMinSeats} and {\sc MinSeatsMinTables} to denote the problems of finding a seating graph with a minimum number of tables, and subject to that a minimal (fixed) number of seats on each table (and vice versa for {\sc MinSeatsMinTables}). The following result shows that our algorithms can be modified to solve these problems too.

\begin{theorem}[Restatement of Theorem \ref{thm:mintablesseatsintro}]
\label{thm:seatingalgos}
    {\sc MinTablesMinSeats} and {\sc MinSeatsMinTables} can be solved in $O((n^2+{s_*}^3)n^2)$ and $O({s_*}^3n^2)$ time, respectively, where $n$ is the number of agents and $s_*$ is the minimal number of seats per table. Furthermore, in the solutions, all tables are paths, and any set of edges can be added to transform the tables into other graph structures (e.g., cycles, grids, or cliques). Finally, for any table size of at least 3, there always exists a stable seating arrangement.
\end{theorem}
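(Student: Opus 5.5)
The plan is to reduce both hierarchical problems to the single-objective algorithms already developed, namely Algorithm \ref{alg:mintables} for {\sc $s$-SeatsMinTables} (Theorem \ref{thm:mintables}) and Algorithm \ref{alg:stablesminseats} for {\sc $t$-TablesMinSeats} (Theorem \ref{thm:minseats}).

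\textbf{MinSeatsMinTables.} Here I first want the smallest table size $s_*$ for which a stable seating graph exists at all, with no bound on the number of tables. This is simply $s_*=r=\max\{i: r_i>0\}$. A $P_i$ component of the bundle cannot be split across tables, since its neighbours must be preserved, so every table needs at least $r$ seats. Conversely, $r$ seats suffice: put each component on its own table. So $s_*$ is exactly the value at which Algorithm \ref{alg:stablesminseats} starts. I then call {\tt MinTables}$(s_*)$ once, which by Theorem \ref{thm:mintables} minimises the number of tables for that size in $O(s_*^2n^2)$ time. The claimed $O(s_*^3n^2)$ bound is looser still. One could alternatively run Algorithm \ref{alg:stablesminseats} with $t=n$, which also gives $O(s_*^3n^2)$ directly.

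\textbf{MinTablesMinSeats.} The minimum possible number of tables is $1$: taking $s=n$ and joining all bundle components into a single path is stable by Theorem \ref{thm:addedges}. However, the natural reading is to minimise tables first and then seats, so the answer is $t_*=1$ together with the smallest $s$ achieving one table. With one table, that smallest $s$ is $n$, since all agents must be seated. This is trivial, and it fits the time bound. To be safe against an intended reading where the objectives are coupled, I would instead proceed generically. For every $s\in[r,n]$, run {\tt DP} to obtain $t(s)$, which costs $\sum_s O(s^2n^2)$. Take $t_*=\min_s t(s)$, and then take the least $s$ attaining it, and equivalently invoke Algorithm \ref{alg:stablesminseats} with $t=t_*$ at cost $O(s_*^3n^2)$. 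Adding the cost of determining $t_*$, which is the $n^2\cdot n^2$ term, gives $O((n^2+s_*^3)n^2)$.

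\textbf{Path structure and table sizes of at least 3.} In both cases the path structure and the freedom to add edges follow from Theorems \ref{thm:mintables}, \ref{thm:minseats} and \ref{thm:addedges}. For existence at any table size $s\ge3$: every component of any bundle (for example, the one from Theorem \ref{thm:klmbundle}) has at most 3 vertices, so each fits on a table of size $s$, padded with empty seats as in Algorithm \ref{alg:mintables}. Stability is preserved by Theorem \ref{thm:addedges}.

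The main obstacle is pinning down the exact semantics of the lexicographic objectives, in particular whether the table count is free. I would settle this first, and then match the complexity accounting to the generic sweep over $s$.
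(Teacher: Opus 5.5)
Your proposal is correct and follows the paper's reduction to the single-objective routines. You also make explicit what the paper's calls actually return. For \textsc{MinTablesMinSeats}, the paper's DP call at $s=n$ gives $t_*=1$, and Algorithm~\ref{alg:stablesminseats} with $t=1$ then stops at $s_*=n$, which is your single-path answer; for \textsc{MinSeatsMinTables}, Algorithm~\ref{alg:stablesminseats} with $t=n$ halts at $s=r$, matching your $s_*=r$ and your sharper $O(s_*^2n^2)$ bound. The one slip is in your hedge: sweeping the DP over all $s\in[r,n]$ costs $\sum_s O(s^2n^2)=O(n^5)$, not an $n^2\cdot n^2$ term, but since $t(s)$ is non-increasing a single DP call at $s=n$ suffices (as in the paper), and $s_*=n$ absorbs $O(n^5)$ into $O(s_*^3n^2)$ anyway.
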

\begin{proof}
    Clearly $s\leq n$, so for {\sc MinTablesMinSeats} we can first use the algorithm for \textsc{$s$-SeatsMinTables} with $s=n$ to find the minimum number of tables $t_*$ required. In the next step, we can use the algorithm for \textsc{$t$-TablesMinSeats} with $t=t_*$ to get the desired solution. Due to the sequential nature of this method, the time complexity is no worse than the algorithms we designed for \textsc{$s$-SeatsMinTables} and \textsc{$t$-TablesMinSeats}, i.e., $O(n^4)$ and $O({s_*}^3n^2)$, respectively. Thus, the total time complexity is $O((n^2+{s_*}^3)n^2)$ as stated.
        
    A simpler approach works for {\sc MinSeatsMinTables}: we simply run Algorithm \ref{alg:stablesminseats} with $t=n$ (because clearly an optimal solution has $t\leq n$), but do not add additional $t-t_*$ empty tables (for a minimum number of required tables $t_*$), i.e., skipping the inner for loop in line 6 of the algorithm. Thus, the complexity of this problem is no worse than that of Algorithm \ref{alg:stablesminseats}, i.e., $O({s_*}^3n^2)$. Clearly, by Theorems \ref{thm:mintables}-\ref{thm:minseats}, the returned seating graph consists of paths, and any set of edges can be added while maintaining a stable solution by Theorem \ref{thm:addedges}. Note also that for any table size at least 3, we can find a stable seating arrangement due to Proposition \ref{prop:p3}.
\end{proof}

Note that we could easily modify the algorithms to also optimise for other optimality criteria, such as maximising the number of first-choice neighbours in the seating graph, with respect to the seating bundle $B$. However, this would only be locally optimal with respect to the starting bundle $B$, but not globally optimal with respect to the original preference system that led to the bundle $B$. Computing a bundle $B$ that maximises first choice neighbours (while minimising $P_3$ components) is bound to be NP-hard even when preferences are strict due to the intractability of finding a stable matching in {\sc sr} that maximises the number of first choices \cite{CooperPhD}.

\subsection{Forming Teams}
\label{sec:teams}

We consider a collection of teams (or groups or coalitions) of agents to be expressible as a reflexive, symmetric and transitive relation of the agents. In the language of graphs, a team can simply be modeled as a clique: a collection of vertices which can be occupied by team members such that every team member is connected to every other team member (here, the team membership relation is induced by agent connectivity). Hence, in order to derive algorithms that partition agents into teams such that the partition satisfies desirable stability properties, we can, again, leverage the fact established in Theorem \ref{thm:addedges} that we can add any collection of edges to a given graph while keeping any stable arrangement on the original graph stable. More formally, we observe the following.

\begin{proposition}
\label{prop:seattoteams}
    Any seating graph $G$ can be transformed into a team graph $G'$ by adding all edges necessary to turn individual graph components into cliques. This maintains the stability of any stable arrangement $M$ on $G$ for $M$ on $G'$. Furthermore, every team is of size at most the number of vertices contained in the largest connected component of $G$.
\end{proposition}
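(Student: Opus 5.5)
The transformation itself is explicit. Given a seating graph $G=(V,E)$ with connected components $G_1,\dots,G_t$ on vertex sets $V_1,\dots,V_t$, define
\[E'=E\cup\{\{u,v\} : u,v\in V_k,\ u\neq v,\ 1\leq k\leq t\}\]
and set $G'=(V,E')$. In $G'$ each $V_k$ induces a clique. No edge of $E'$ joins distinct $V_k$, since $E$ has no such edges and we add none. Hence the components of $G'$ are exactly the cliques on $V_1,\dots,V_t$, so $G'$ is a team graph in the sense of Section~\ref{sec:teams}, and the same map $M$ is still an arrangement on $G'$.

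For stability, the plan is to invoke Theorem~\ref{thm:addedges} repeatedly. The set $E'\setminus E$ is finite, so enumerate it as $e_1,\dots,e_m$ and let $G^{(0)}=G$ and $G^{(\ell)}=(V,E\cup\{e_1,\dots,e_\ell\})$. Each $e_\ell$ joins two distinct vertices not already adjacent in $G^{(\ell-1)}$, so Theorem~\ref{thm:addedges} applies at every step. A straightforward induction on $\ell$ then shows that $M$ is stable on $G^{(m)}=G'$. An alternative one-shot argument is to note that every agent's neighbourhood only grows, so its best neighbour is weakly improved, and a blocking pair in $G'$ would therefore also block in $G$. This is the same argument as in the proof of Theorem~\ref{thm:addedges}.

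For the size bound, each team is some $V_k$, and $|V_k|\leq\max_j |V_j|$, which is the number of vertices in the largest connected component of $G$. Vertices with no agent assigned are simply unoccupied team slots, which does not affect the bound.

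There is no real obstacle here. The only point needing care is verifying that the added edges never merge two components, so that teams correspond exactly to the original tables; this holds because edges are added only within each $V_k$.
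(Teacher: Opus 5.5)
Your proposal is correct and takes essentially the paper's approach. The paper gives no separate proof and presents the proposition as a direct consequence of Theorem~\ref{thm:addedges} (adding edges preserves stability), which is exactly your edge-by-edge induction; the size bound follows, as you note, because edges are only added within existing components.
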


Now consider the following optimisation problems, which are very similar to the previously defined problems {\sc MinTeamsMinSize} and {\sc MinSizeMinTeams}, but specifically require the returned target graphs to only consist of clique components, whereas the previously studied problems left open the structure of the components making up the target graphs.

\begin{problemBox}
\textsc{$s$-SizeMinTeams} \\[4pt]
\textbf{Input:} A non-empty $(r_1,r_2,r_3)$-bundle $B$ and an integer $s\geq 1$. \\[2pt]
\textbf{Output:} A graph $G=(V,E)$ and a complete stable arrangement $M$ such that $G$ contains $\leq t$ teams (cliques) of size $\leq s$ and $t$ is minimal.
\end{problemBox}

\begin{problemBox}
\textsc{$t$-TeamsMinSize} \\[4pt]
\textbf{Input:} A non-empty $(r_1,r_2,r_3)$-bundle $B$ and an integer $t\geq 1$. \\[2pt]
\textbf{Output:} A graph $G=(V,E)$ and a complete stable arrangement $M$ such that $G$ contains $\leq t$ teams (cliques) of size $\leq s$ and $s$ is minimal.
\end{problemBox}

Given our previous results, to map between seating graphs and team graphs, all that is required is to turn connected components into cliques, which Algorithm \ref{alg:converttoteam} does. Now, with the results in the previous section, together with Proposition \ref{prop:seattoteams}, we can establish the following tractability results.

\begin{algorithm}[!htb]
\renewcommand{\algorithmicrequire}{\textbf{Input:}}
\renewcommand{\algorithmicensure}{\textbf{Output:}}

    \begin{algorithmic}[1]

    \Require{$G$ : a seating graph ; $M$ : an seating arrangement on $G$}
    \Ensure{$G$ : a team graph; $M$ : a team arrangement }

    \For{connected component $C=(V_C,E_C)$ in $G$}
        \For{vertices $v_i,v_j\in V_C$ such that $i\neq j$ and $\{v_i,v_j\}\notin E_C$}
            \State $E_C$.add$(\{v_i,v_j\})$
        \EndFor
    \EndFor
    \State\Return{$G,M$}
    
    \end{algorithmic}
    \caption{Converts a seating graph to a team arrangement}
    \label{alg:converttoteam}
\end{algorithm}

\begin{lemma}
\label{lemma:teams}
    {\sc $s$-SizeMinTeams} and {\sc $t$-TeamsMinSize} can be solved in $O(s^2n^2)$ and $O(s_{*}^3n^2)$ time, respectively, where $n$ is the number of agents and $s_*$ is the minimum team size when requiring teams of size $t$.
\end{lemma}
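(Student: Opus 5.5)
The plan is to reduce both team problems to the corresponding seating problems and then apply Algorithm \ref{alg:converttoteam}. For {\sc $s$-SizeMinTeams}, I would run Algorithm \ref{alg:mintables} on $(B,s)$. By Theorem \ref{thm:mintables}, this returns, in $O(s^2n^2)$ time, a seating graph of $t_*$ path components, each with $s$ seats, together with the stable arrangement $M_B$, where $t_*$ is the optimum of the {\sc $\{1,2,3\}$-BinPacking} instance $(r_1,r_2,r_3,s)$. Applying Algorithm \ref{alg:converttoteam} turns each component into a clique. By Proposition \ref{prop:seattoteams} (that is, by Theorem \ref{thm:addedges}) the arrangement stays stable, and every team has at most $s$ vertices.

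The main point is optimality, and I would argue it in both directions via the bin-packing correspondence. Any feasible team graph with $t$ cliques of size at most $s$ must keep neighbours from $B$ as neighbours, which is the standing assumption also used in Lemma \ref{lemma:subgraph}. Each $P_2$ and $P_3$ component of $G_B$ is connected, so it lies inside a single clique. Hence the cliques induce a packing of $r_1$ items of size 1, $r_2$ of size 2 and $r_3$ of size 3 into $t$ bins of capacity $s$. This gives $t\ge t_*$ by Lemma \ref{lemma:binpackingdb}.

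A subtlety is that Algorithm \ref{alg:mintables} pads tables with empty seats up to size exactly $s$. Since the problem only asks for teams of size at most $s$, I would either drop the padding loop or keep it; both are feasible, and the padding does not affect stability because of Theorem \ref{thm:addedges}. Algorithm \ref{alg:converttoteam} adds $O(s^2)$ edges per team, for $O(s^2 n)$ in total, so the bound $O(s^2n^2)$ still dominates.

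For {\sc $t$-TeamsMinSize}, I would follow the same route with Algorithm \ref{alg:stablesminseats} in place of Algorithm \ref{alg:mintables}, then apply Algorithm \ref{alg:converttoteam}. Minimality of $s_*$ follows by the same packing argument: a team solution with $t$ cliques of size $s<s_*$ would yield a packing into at most $t$ bins of capacity $s$, so the iteration at $s$ in Algorithm \ref{alg:stablesminseats} would already have succeeded. That contradicts Theorem \ref{thm:minseats}. The running time is $O(s_*^3n^2)$ from Theorem \ref{thm:minseats}. The clique conversion costs $O(s_*^2 t)$, which is $O(s_*^2n)$ and therefore dominated.

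I expect the main obstacle to be justifying the lower bound cleanly, namely that every feasible team arrangement induces a bin packing. Without the neighbour-preservation convention, an arbitrary stable arrangement on cliques could in principle split a bundle component across teams. I would handle this by making explicit that, as in Section \ref{sec:contributions} and Lemma \ref{lemma:subgraph}, solutions are arrangements of the bundle $B$ that preserve its components.
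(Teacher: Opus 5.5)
Your proposal is correct and follows the paper's proof: it runs Algorithms \ref{alg:mintables} and \ref{alg:stablesminseats}, converts components to cliques with Algorithm \ref{alg:converttoteam}, and gets correctness and running time from Proposition \ref{prop:seattoteams} and Theorems \ref{thm:mintables}--\ref{thm:minseats}; your explicit bin-packing lower bound (under the component-preservation convention) and your accounting of the clique-conversion cost spell out what the paper leaves implicit. One small caveat is that $O(s_*^2t)=O(s_*^2n)$ presumes $t\le n$, which {\sc $t$-TeamsMinSize} does not impose, but since at most $t$ teams are required you can skip the empty-team padding or replace $t$ by $\min\{t,n\}$.
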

\begin{proof}
    Due to Proposition \ref{prop:seattoteams}, we can use Algorithms \ref{alg:mintables}-\ref{alg:stablesminseats} to compute the respective solutions and map to team graphs using Algorithm \ref{alg:converttoteam}. Correctness and efficiency follow from Theorems \ref{thm:mintables}-\ref{thm:minseats}.
\end{proof}

This lets us note the following for min-min optimisations (defined in Section \ref{sec:contributions}).

\begin{theorem}[Restatement of Theorem \ref{thm:minteamssizeintro}]
    {\sc MinTeamsMinSize} and {\sc MinSizeMinTeams} can be solved in $O((n^2+{s_*}^3)n^2)$ and $O({s_*}^3n^2)$ time, respectively, where $n$ is the number of agents and $s_*$ is the minimal team size. Furthermore, whenever the team size is chosen to be at least 3, there always exists a stable team formation.
\end{theorem}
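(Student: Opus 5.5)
The plan is to mirror the proof of Theorem \ref{thm:seatingalgos}, transporting everything from seating graphs to team graphs via Proposition \ref{prop:seattoteams} and Algorithm \ref{alg:converttoteam}. The key observation is that the feasible table/team configurations coincide. Any seating graph with $t$ components of $s$ vertices carrying a complete stable arrangement becomes, after completing each component to a clique, a team graph with $t$ teams of size $s$, and stability is preserved by Theorem \ref{thm:addedges}. Conversely, a team graph built from the bundle by joining bundle components and padding with empty vertices contains a spanning path in each component. So, restricted to arrangements that keep bundle neighbours (the standing assumption of the framework), the optimal pairs $(t,s)$ for teams and for tables are identical. In both settings, optimality is governed purely by the bin-packing instance $(r_1,r_2,r_3,s)$ from the proposition equating {\sc $s$-SeatsMinTables} with {\sc $\{1,2,3\}$-BinPacking}.

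For {\sc MinTeamsMinSize}, I would first run the {\sc $s$-SizeMinTeams} algorithm of Lemma \ref{lemma:teams} with $s=n$. This gives the minimum number of teams $t_*$; here $t_*=1$ trivially, but the argument is general. I would then run the {\sc $t$-TeamsMinSize} algorithm with $t=t_*$ to minimise the team size. By Lemma \ref{lemma:teams}, the two phases cost $O(n^2\cdot n^2)$ and $O(s_*^3n^2)$, giving $O((n^2+s_*^3)n^2)$ in total. Applying Algorithm \ref{alg:converttoteam} adds $O(n^2)$ edges at most within components of total size $O(t_*s_*)$. Since $t_*s_*$ is at most about $2n$ for optimal solutions, this cost is dominated. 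For {\sc MinSizeMinTeams}, I would run Algorithm \ref{alg:stablesminseats} with $t=n$, omit the padding of empty tables exactly as in the proof of Theorem \ref{thm:seatingalgos}, and then convert to cliques. This costs $O(s_*^3n^2)$.

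For the existence claim, let $s\ge 3$. By Theorem \ref{thm:klmbundle} every preference system admits a bundle, and all its components have at most $3\le s$ vertices. Placing each component into its own team of $s$ vertices, padded with empty seats, gives a seating graph with a stable arrangement by Theorem \ref{thm:addedges} (equivalently, by Proposition \ref{prop:p3}). Clique completion via Proposition \ref{prop:seattoteams} then keeps it stable.

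I expect the main obstacle to be justifying optimality rather than feasibility. One must argue that clique components cannot support a solution with fewer or smaller teams than the path-based construction. I would handle this as the paper does, by treating the bundle as the fixed input so that every solution consists of bundle components packed into equal-size bins. Under that framing, the clique structure adds no flexibility beyond the bin packing, and optimality is inherited directly from Lemma \ref{lemma:binpackingdb} and Theorem \ref{thm:minseats}.
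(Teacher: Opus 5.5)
Your proposal is correct and follows the paper's route. The paper's proof reruns the argument of Theorem~\ref{thm:seatingalgos} with the team versions from Lemma~\ref{lemma:teams}: first $s=n$ to get $t_*$, then size minimisation with $t=t_*$, and Algorithm~\ref{alg:stablesminseats} with $t=n$ and no padding for the other order, with existence for size $\geq 3$ via Proposition~\ref{prop:p3} and clique completion. Your extra remarks, that $t_*=1$ trivially and that the cost of Algorithm~\ref{alg:converttoteam} is dominated, are correct details the paper leaves implicit.
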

\begin{proof}
    The same arguments as in Theorem \ref{thm:seatingalgos} apply, but using the methods described in Lemma \ref{lemma:teams} for {\sc $s$-SizeMinTeams} and {\sc $t$-TeamsMinSize}.
\end{proof}

\subsection{Best-Stable $b$-Matching}
\label{sec:matching}

We now turn to the problem {\sc Best-Stable $b$-Matching}, which we defined in Section \ref{sec:contributions}. While other efficient algorithms are known for similar kinds of problems \cite{Irving2007,Fleiner08}, our goal is to show the flexibility of our framework and to highlight that this kind of problem can also be solved very efficiently using our techniques. We propose the simple Algorithm \ref{alg:bmatching} to compute a solution to a given instance of {\sc Best-Stable $b$-Matching}: the algorithm starts with an empty set, then considers each component in the $(r_1,r_2,r_3)$-bundle and adds each connected pair of agents in the component to the matching. Then, in the next step, each pair of agents that is not yet matched together is considered and added to the matching if both agents still have free capacity, as all pairs of agents are mutually acceptable by assumption.

\begin{algorithm}[!htb]
\renewcommand{\algorithmicrequire}{\textbf{Input:}}
\renewcommand{\algorithmicensure}{\textbf{Output:}}

    \begin{algorithmic}[1]

    \Require{$B=(G_B,M_B)$ : a bundle; $b$ : a capacity function of the agents}
    \Ensure{$M$ : a $b$-matching}

    \State $M \gets \varnothing$
    
    \For{component $C\in G_B$}
        \If{$C$ is a $P_2$ on $x-y$}
            \State $M$.add$(\{M_B[x],M_B[y]\})$
        \ElsIf{$C$ is a $P_3$ on $x-y-z$}
            \State $M$.add$(\{M_B[x],M_B[y]\})$
            \State $M$.add$(\{M_B[y],M_B[z]\})$
        \EndIf
    \EndFor

    \While{there exist $a_i,a_j$ such that $i\neq j$, $\{a_i,a_j\}\notin M$, $\vert M(a_i)\vert < b_i$ and $\vert M(a_j)\vert < b_j$}
        \State $M$.add$(\{a_i,a_j\})$
    \EndWhile    
    
    \State\Return{$M$}

    \end{algorithmic}
    \caption{Constructs a best-stable $b$-matching}
    \label{alg:bmatching}
\end{algorithm}

The theorem below states the correctness and efficiency of this approach.

\begin{theorem}[Restatement of Theorem \ref{thm:bestbmatchingintro}]
    {\sc Best-Stable $b$-Matching} can be solved in $O(n^2)$ time (where $n$ is the number of agents) using Algorithm \ref{alg:bmatching}.
\end{theorem}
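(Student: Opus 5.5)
We argue that the output $M$ of Algorithm \ref{alg:bmatching} is a valid $b$-matching, that it is maximal, that it is best-stable, and that it is computed in $O(n^2)$ time.

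\textbf{Validity.} After the for loop, every agent lies in exactly one component of $G_B$. It therefore has degree at most 2 in $M$: degree 0 in a $P_1$, degree 1 in a $P_2$, and degree 1 or 2 in a $P_3$. Since $b_i\geq 2$ for all $i$, no capacity is exceeded. The while loop only adds a pair $\{a_i,a_j\}$ when both agents still have free capacity, and it never adds a pair twice because of the condition $\{a_i,a_j\}\notin M$. Hence $M$ remains a $b$-matching throughout.

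\textbf{Maximality.} This is exactly the termination condition of the while loop. When the loop stops, there is no unmatched pair $\{a_i,a_j\}$ with both agents below capacity. Since preferences are complete, every pair is mutually acceptable, so this means no pair could be added to $M$.

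\textbf{Stability.} Let $G_M$ be the graph on $A$ whose edge set is $M$. After the for loop, $G_M$ is precisely the image of $G_B$ under the bijective stable arrangement $M_B$. Viewing the identity mapping as an arrangement on $G_M$, this arrangement is stable because $M_B$ is stable. The while loop only adds edges, so by repeated application of Theorem \ref{thm:addedges} the identity arrangement stays stable on the final $G_M$. In that graph, an agent's best neighbour is exactly its best partner in $M$. So no two agents strictly prefer each other to their respective best partners, which is the required best-stability condition.

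\textbf{Running time.} This is the main step needing care. The for loop takes $O(n)$ time. The while loop can add at most $\binom{n}{2}=O(n^2)$ pairs, but naively searching for each new pair would cost $O(n^2)$ per iteration. To avoid this, I would implement the loop as a single scan over all ordered pairs $i<j$. We keep an array of current degrees and an $n\times n$ adjacency matrix, which is initialised in $O(n^2)$ time from the bundle edges. During the scan, a pair is added whenever it is not already in $M$ and both agents are below capacity.

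A single pass suffices because degrees only increase during the scan. If a pair $(i,j)$ is skipped when it is scanned, then either it is already in $M$ or one of its agents is saturated, and that agent stays saturated for the rest of the scan. Therefore no pair can become addable later, and the final $M$ is maximal. Each check and update costs $O(1)$, so the total time is $O(n^2)$.
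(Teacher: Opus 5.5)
Your proposal is correct and follows essentially the same route as the paper: validity from the degree-at-most-2 bound after the for loop together with $b_i\geq 2$, maximality from the termination condition of the while loop, and stability from the stability of the bundle. You formalise the stability step via Theorem~\ref{thm:addedges} on the graph induced by $M$, where the paper simply notes that every agent keeps its best bundle neighbour, and your single-pass scan with a degree array and adjacency matrix is a more concrete version of the paper's observation that each pair need only be examined once because a failed condition stays failed.
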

\begin{proof}
    First, notice that $M$ is a $b$-matching by construction: it consists of unordered pairs of agents, and after the for loop, no agent can be contained in more than two pairs. Because we assume in the problem definition that every agent $a_i$ has capacity $b_i\geq 2$ (if $b_i=1$ then the existence of a best-stable $b$-matching is not guaranteed), all capacities are respected at this point in the algorithm. Now notice that one of the conditions in the while loop is that both agents must have free capacity, so adding pairs to the matching in the while loop never makes an agent exceed their capacity.

    Our required stability condition ``no two agents prefer each other to their best partner in $M$'' follows from the fact that in the bundle $B$, no two agents prefer each other over their best neighbour, and every agent matched in $M$ is matched to their best neighbour in $B$.

    The returned matching must be maximal because if there existed an edge that we could add to the matching without exceeding any agent's capacity, then this edge must have already been added to the matching in the while loop. 

    For the complexity, note that $B$ contains $O(n)$ components and each component contains at most 3 agents. Therefore, assuming constant-time addition to sets, the for loop requires $O(n)$ time. The while loop can be implemented in such a way that every pair of agents is considered only once, as they are either added to $M$, in which case they can never be added to $M$ again, or not added to $M$ because one of the conditions fails, in which case the condition must still fail after any other additions to $M$. Thus, the algorithm terminates, and in the while loop, we must consider $O(n^2)$ pairs of agents. Assuming that we keep track of every agent's number of matches in $M$ and that we can test set membership in constant time, the while loop runs in $O(n^2)$ time. The sequential nature of the for and while loops justifies the stated overall time complexity $O(n^2)$.
\end{proof}

\section{Conclusion}
\label{sec:conclusion}

We introduced a versatile framework for agent arrangement problems in which agents have preferences over one another, and where the designer has flexibility in constructing the underlying graph. Within this framework, we provided both positive results and NP-hardness results. Along the way, we uncovered interesting structural and algorithmic connections to classical algorithmic problems such as subgraph isomorphism, disjoint path partitioning, and bin packing, developing results that may be of independent interest. We also explored the tension between local and global optimality, for example, in the context of minimising the necessary resources, such as the number of tables at an event. We gave an algorithm that minimises the number of tables required to arrange everyone in a stable way and showed that it is optimal with regard to a provided bundle, but not necessarily globally optimal with regard to an original preference system -- a trade-off which we made in order to arrive at polynomial-time algorithms.

Most crucially, our paper seeks to challenge the fundamental setup of computationally difficult problems such as {\sc Seat Arrangement} by trying to answer how much tractability can be won by sacrificing some rigidity of the target graph. We imagine that this idea, which is also the underlying premise of works such as those on \emph{near-feasible stable matchings} \cite{glitzner2025unsolvabilitymanytomanynonbipartitestable,nguyen18,cembrano25,gergely25,nearfeasibleaamas}, could help in the design of good algorithms and mechanisms for real-world applications.

With respect to the specific technical contributions of this paper, we believe that the stability concept and tools developed in this work have a range of other applications to problems in multi-agent systems and computational social choice, for example, roommate-room problems \cite{hosseini2025strategyproof,roommateroom} or minimum-regret stable seating arrangements. Furthermore, on the technical side, we conjecture that, when preferences are strict, Algorithm \ref{alg:klmbundle} actually computes a bundle with the minimum number of $P_3$ components, not merely an approximation. Independently, it would also be interesting to consider the implications of incomplete preferences on our framework and note that this impacts the number of possible $P_1$ components in the $(r_1,r_2,r_3)$-bundles, which we showed to be highly restricted in the case of complete preferences (therefore increasing the complexity of Algorithm \ref{alg:binpackingdp}, for example).

\paragraph{Acknowledgements}
The author would like to thank David Manlove for very insightful comments and discussions, the anonymous MATCH-UP reviewers and attendees for helpful feedback and discussions, and acknowledge financial support from a Minerva scholarship from his institution. 

\bibliographystyle{ACM-Reference-Format}
\bibliography{papers}

\end{document}